\newtheorem{theorem}{Theorem}
\newtheorem{proposition}{Proposition}
\newtheorem{corollary}{Corollary}
\theoremstyle{definition}
\newtheorem{definition}{Definition}
\newtheorem{remark}{Remark}
\newtheorem{example}{Example}
\newcommand{\be}{\begin{equation}}
\newcommand{\ee}{\end{equation}}
\newcommand{\bea}{\begin{eqnarray}}
\newcommand{\eea}{\end{eqnarray}}
\newcommand{\nat}{{\mathbb N}} 
\newcommand{\hi}{\mathcal{H}} 
\newcommand{\ket}[1]{|#1\rangle} 
\newcommand{\bra}[1]{\langle#1|} 
\newcommand{\no}[1]{||#1||} 
\newcommand{\tr}[1]{\textrm{tr}\left[#1\right]} 
\newcommand{\1}{\mathbbm{1}}
\newcommand{\0}{O} 
\newcommand{\cP}{\mathcal{P}}
\newcommand{\id}{{\rm{id}}}
\newcommand{\R}{\mathbbm{R}}
\newcommand{\cB}{{\mathcal{B}}}
\newcommand{\cA}{{\mathcal{A}}}
\newcommand{\cC}{{\cal C}}
\newcommand{\ii}{\mathbbm{1}}
\newcommand{\cM}{{\mathcal M}}
\newcommand{\ra}{\rightarrow}
\newcommand{\C}{\mathbb{C}}
\def\>{{\rangle}}
\def\<{{\langle}}
\newcommand{\cS}{{\mathcal{S}}}
\title{\Large\textbf{Extending quantum operations}}
\author[1]{Teiko Heinosaari\thanks{teiko.heinosaari@utu.fi}}
\author[2]{Maria A.~Jivulescu\thanks{maria.jivulescu@mat.upt.ro}}
\author[3]{David Reeb\thanks{david.reeb@tum.de}}
\author[3]{Michael M.~Wolf\thanks{m.wolf@tum.de}}
\affil[1]{\small{Turku Centre for Quantum Physics, Department of Physics and Astronomy, University of Turku, Finland}}
\affil[2]{\small{Department of Mathematics, University Politehnica Timi\c{s}oara, 300006 Timi\c{s}oara, Romania}}
\affil[3]{\small{Department of Mathematics, Technische Universit\"at M\"unchen, 85748 Garching, Germany}}
\date{October 22, 2012}
\begin{document}

\maketitle

\vspace{-0.6cm}\begin{abstract}
For a given set of input-output pairs of quantum states or observables, we ask the question whether there exists a physically implementable transformation that maps each of the inputs to the corresponding output. The physical maps on quantum states are trace-preserving completely positive maps, but we also consider variants of these requirements. We generalize the definition of complete positivity to linear maps defined on arbitrary subspaces, then formulate this notion as a semidefinite program, and relate it by duality to approximative extensions of this map. This gives a characterization of the maps which can be approximated arbitrarily well as the restriction of a map that is completely positive on the whole algebra, also yielding the familiar extension theorems on operator spaces. For quantum channel extensions and extensions by probabilistic operations we obtain semidefinite characterizations, and we also elucidate the special case of Abelian in- or outputs. Finally, revisiting a theorem by Alberti and Uhlmann, we provide simpler and more widely applicable conditions for certain extension problems on qubits, and by using a semidefinite programming formulation we exhibit counterexamples to seemingly reasonable but false generalizations of the Alberti-Uhlmann theorem.
\end{abstract}

\section{Introduction}\label{sec:intro}

In this work we study the possibility of extending partially given quantum operations.

The paradigmatic case is the task where we need to design a device that transforms quantum states into quantum states and satisfies the input-output conditions
\be\rho_i\,\mapsto\,\rho'_i\,,\quad i=1,\ldots,N~,\label{eq:criteria-states}\ee
for some finite number of given pairs $(\rho_i,\rho'_i)$ of quantum states ($\rho_i\in\cM_d$, $\rho'_i\in\cM_{d'}$). Is it possible to construct this kind of device?

If there exists a quantum channel $T:\cM_d\ra\cM_{d'}$, i.e.~a trace-preserving completely positive linear map on the full algebra $\cM_d$, such that
\be\label{eq:criteria-states-2}T(\rho_i)\,=\,\rho'_i\,,\quad i=1,\ldots,N~,\ee
then the task is feasible \cite{nielsenchuang}. Indeed, every quantum channel can be implemented physically as a unitary evolution on a composite system.
It is easy to invent examples of both types, possible and impossible such tasks, and the question is therefore meaningful. Note that $T$ in (\ref{eq:criteria-states-2}) is an extension of the map $\rho_i\mapsto\rho'_i$, defined on the subspace ${\rm span}\{\rho_i\}$, to the whole space $\cM_d$. Therefore the above problem can be phrased as an extension question for quantum channels or for completely positive maps more generally. In this work we restrict to the finite-dimensional case.

Another way to look at the question of the existence of a quantum channel achieving (\ref{eq:criteria-states-2}) is the following: If, in a tomographic experiment, the time-evolution (\ref{eq:criteria-states}) has been measured, then could this evolution have possibly been Markovian? If the evolution was Markovian, i.e.~it did neither depend on the preparation procedure of the input state nor on the memory of the environment, then there \emph{necessarily} exists a quantum channel $T$ satisfying (\ref{eq:criteria-states-2}). Therefore, a negative answer to the existence question falsifies the hypothesis of a Markovian evolution, and our treatment of the question (e.g.~in subsection \ref{cptpsubsection}) will offer a way to quantify this failure, complementing the analysis in \cite{assessing,breuermarkov}.

As mentioned before, the feasibility of the task \eqref{eq:criteria-states} is equivalent to the existence of a quantum channel satisfying \eqref{eq:criteria-states-2}. However, we can further generalize the problem. Suppose that there exists a completely positive map $T$ satisfying \eqref{eq:criteria-states-2}, but $T$ is not necessarily trace-preserving. Then we can define two completely positive trace-non-increasing maps by
\be T_1(\rho)~:=~T(\rho)/||T^*(\ii)|| \, , \quad T_0(\rho)~:=~\tr{(\1-T_1^*(\1))\rho} \eta_0~,\nonumber\ee
which add up to a trace-preserving map; here, $\eta_0$ is any fixed quantum state, $T^\ast$ is the dual map of $T$, and $||\cdot||$ here and in the following denotes the operator norm. These two maps thus form an instrument \cite{QTOS76} and describe a measurement with two outcomes.
If we perform a measurement on one of our input states $\rho_i$, then the transformation $\rho_i\mapsto\rho'_i$ takes place whenever we obtain the measurement outcome $1$, which happens with probability $\tr{T_1(\rho_i)}=1/||T^*(\ii)||$, independent of $i$. We conclude that any completely positive map $T$ satisfying \eqref{eq:criteria-states-2} gives rise to a probabilistic device solving the task. This motivates to study not only quantum channels but all completely positive maps. (In subsection \ref{subsectprobabextensions} we will look at more general probabilistic operations with possibly distinct success probabilities.)

Another aspect comes from the fact that completely positive maps, possibly supplemented by the requirement of (sub-)unitality, describe time-evolution in the Heisenberg picture (we elaborate on this interpretation in subsection \ref{sec:interpretation}). In this context the evolution acts on observables, which are selfadjoint operators.

We will therefore consider the outlined problem mostly in a general form, where the inputs and ouputs are two finite collections of selfadjoint operators, not necessarily quantum states as in (\ref{eq:criteria-states-2}). If the inputs $\{X_i\}_{i=1}^N$ and outputs $\{Y_i\}_{i=1}^N$ are specified, as one of our main contributions we ask and answer in Sections \ref{sectapproxextensions} and \ref{CPextensionssect} the following questions:
\begin{itemize}
\item[{\it(Q1)}]Does there exist a completely positive linear map ${T}:\cM_d\to\cM_{d'}$ such that ${T}(X_i)=Y_i$ for all $i=1,\ldots,N$\,?
\item[{\it(Q2)}]Can one achieve the mapping in {\it(Q1)} approximatively, i.e.~does there exist a sequence of completely positive maps ${T}_k:\cM_d\to\cM_{d'}$ such that
\be\lim_{k\to\infty}||{T}_k(X_i)-Y_i||\,=\,0\quad \text{for all}~\,i=1,\ldots,N~\textrm{?}\nonumber\ee
\end{itemize}
Clearly, {\it(Q1)} can have an affirmative answer only if {\it(Q2)} has. Perhaps surprisingly, the answer to {\it(Q2)} can be affirmative even if {\it(Q1)} has a negative answer. One of our main contributions is do delineate the difference between {\it(Q1)} and {\it(Q2)}, and to characterize case {\it(Q2)} exactly (Theorem \ref{th:cpisapprox}).

If the inputs $\{X_i\}$ span an operator system, then question {\it(Q1)} is well-studied in the theory of operator algebras, its answer being Arveson's extension theorem \cite{arvesonextension,paulsen}. Beyond operator systems, in this paper we examine these questions for arbitrary (Hermitian) operator spaces, as motivated above. 
One of our main tools will be semidefinite programming techniques \cite{rockafellar,convexoptimization} and thus the Hahn-Banach Theorem. In order to answer question {\it(Q2)} (in Theorem \ref{th:cpisapprox}), however, we have to apply strong duality in the direction \emph{opposite} to its ordinary application for Arveson's extension theorem (see Remark \ref{dualityinoppositedirection}).

More specifically, extension questions for physical maps on state spaces have been studied both for the classical case of probability distributions (e.g.~\cite{ruchetal,torgersenbook}) and for the quantum case (e.g.~\cite{albertiuhlmann}), and both are related to the theory of majorization (e.g.~\cite{connectmajorization}). In the quantum case, a convenient criterion in terms of Gram matrices is known for pure input and output states, and partially also in the cases where one of both sets contains mixed states \cite{cjw}. Recent work in the same direction as ours is \cite{jencovacp}, which on the one hand limits consideration to completely positive extensions of maps defined on subsets of the quantum states (obtaining our Corollary \ref{corollaryspannedbydensitymatrices}), but on the other characterizes the set of all such extensions. The question of which measurements (POVMs) are not irreversibly connected to another measurement by a quantum channel (in the Heisenberg picture) was investigated in \cite{cleanpovms}. Another recent work on the quantum side \cite{commutingquantum} concerned commuting in- and outputs. Extension problems of a similar type appear in (classical and quantum) statistical decision theory, in relation with the notion of sufficiency \cite{torgersenbook,buscemicmp}; the results obtained in this area (e.g.~\cite{buscemicmp,matsumotorandomizationpaper,jencovamorph}) are similar in spirit to, and serve as a motivation to, our results in Section \ref{albertiuhlmannsection}.

\medskip

This article is organized as follows. In Section \ref{sec:basic} we will introduce basic notions, in particular complete positivity of maps on arbitrary subspaces and its operational interpretation. In Section \ref{CPnesssect} we formulate, for a given map, the condition of complete positivity quantitatively as a semidefinite program (Theorem \ref{th:CPnessviaSDPtheorem}). In Section \ref{sectapproxextensions} we relate this condition via strong duality to the existence of completely positive approximative extensions of the map, which allows us to answer question {\it(Q2)} above, one of our main results (Theorem \ref{th:cpisapprox}). We further demonstrate that, maybe surprisingly, {\it(Q1)} and {\it(Q2)} are in general different questions. In Section \ref{CPextensionssect} we find answers to question {\it(Q1)} above: We recover slight generalizations of Arveson's extension theorem (Theorem \ref {th:exactcpextension}, cf.~\cite{arvesonextension,paulsen}; Corollary \ref{corollaryspannedbydensitymatrices}, cf.~\cite{jencovacp}) and connect it to question {\it(Q2)} above, and we find an independent criterion in subsection \ref{subspacewopositiveoperators}. Variations of the quest for completely positive maps are discussed in Section \ref{variationssect}: quantum channel extensions in subsection \ref{cptpsubsection}, which we can only characterize by a semidefinite program, and probabilistic operations and their connection to unambiguous state discrimination in subsection \ref{subsectprobabextensions}. Extension questions are known to simplify for commuting domain or range \cite{paulsen}, which we elaborate on in subsection \ref{commutativesection}.

In Section \ref{albertiuhlmannsection} we treat a special case of the extension question, first investigated by Alberti and Uhlmann \cite{albertiuhlmann} for the quantum case. Our first main result here (Theorem \ref{ourfidelitycrition}) is a necessary and sufficient condition for the existence of a quantum channel extension. This criterion is much easier to check and has wider applicability than the original Alberti-Uhlmann criterion. Secondly, in subsection \ref{counterexpsection} we provide counterexamples to a supposed generalization (that holds for classical probability distributions \cite{ruchetal}) of the Alberti-Uhlmann Theorem. To the best of our knowledge, these are the first such counterexamples found in the literature, and they also illustrate the semidefinite programming techniques used in the earlier sections.

\section{Basic concepts and preliminary observations}\label{sec:basic}

In this work we are concerned with linear maps $T:\cS\ra\cM_{d'}$ defined on a subspace $\cS\subseteq\cM_d$ of the $d$-dimensional matrix algebra $\cM_d\equiv\cM_d(\C)$. Our task is to extend such maps to all of $\cM_d$ while satisfying certain requirements. We restrict to $1\leq d, d'<\infty$.

\subsection{Complete positivity}\label{sec:cp}

Whereas the notion of complete positivity is often defined only for maps on the whole matrix algebra or for maps on operator systems \cite{paulsen}, it has a direct generalization to maps on subspaces. (By a subspace we will always mean a linear subspace.)

\begin{definition}[$n$-positive map]\label{definepositivemap}A linear map $T:\cS\ra\cM_{d'}$ on a subspace $\cS\subseteq\cM_d$ is \emph{$n$-positive} (for some $n\in\nat$) if for all positive operators $P\in\cS\otimes\cM_n$ we have $(T\otimes\id_n)(P)\geq0$. A $1$-positive map is called \emph{positive}.\end{definition}
Throughout, we consider subspaces such as $\cS\subseteq\cM_d$ or $\cS\otimes\cM_n\subseteq\cM_d\otimes\cM_n$ embedded in a matrix algebra, and we call an element of a subspace \emph{positive} if it is positive when regarded as an element of the full matrix algebra. (This is thus not to be confused with the elements that can be written as $A^\dagger A$ with $A$ from the subspace.)

\begin{definition}[Completely positive map]\label{def:CPmaponS}
A linear map $T:\cS\ra\cM_{d'}$ on a subspace $\cS\subseteq\cM_d$ is \emph{completely positive} if $T$ is $n$-positive for all $n\in\nat$.
\end{definition}

Given a completely positive map $T:\cS\ra\cM_{d'}$, we will often be asking for a completely positive extension $\widetilde{T}:\cM_d\ra\cM_{d'}$, meaning that $\widetilde{T}(X)=T(X)$ for all $X\in\cS$. Since the domain of $\widetilde{T}$ contains the identity operator $\ii$, it follows from the positivity that $\widetilde{T}$ is \emph{Hermiticity-preserving}, i.e.~$\widetilde{T}(X^\dagger)=\widetilde{T}(X)^\dagger$ for all $X\in\cM_d$. Thus, we can without loss of generality consider $\cS$ to be a \emph{Hermitian subspace} (i.e.~$\cS=\cS^\dagger$) and require $T$ to be Hermiticity-preserving in the first place. Note, however, that $\cS$ need not contain the identity matrix $\ii$ or any other full-rank matrix.

A practical way to test whether a linear map $T:\cM_d\ra\cM_{d'}$ is completely positive is via its \emph{Choi representation}. The Choi isomorphism \cite{Choi75} sets a one-to-one correspondence between linear maps $T:\cM_d\ra\cM_{d'}$ (defined on the full matrix algebra) and operators $C\in\cM_{d'}\otimes\cM_d$, and is defined by
\begin{equation}\label{choicorrespondenceeqn}
 C=(T\otimes\id_d)(\ket{\Omega}\bra{\Omega})\qquad\textrm{and}\qquad T(X)=\mathrm{tr}_2[C(\ii\otimes X^T)] \, ,
 \end{equation}
where $\ket{\Omega}=\sum_{i=1}^d\ket{ii}$ is the \emph{(unnormalized) maximally entangled state} associated with a fixed orthonormal basis $\{\ket{i}\}_i$ of the Hilbert space $\C^d$, and $^T$ denotes the usual transposition w.r.t.~this orthonormal basis. A linear map $T$ defined on the whole matrix algebra $\cM_d$ is Hermiticity-preserving iff the corresponding \emph{Choi matrix} $C$ is Hermitian, and $T$ is completely positive iff $C$ is positive \cite{Choi75}.
The \emph{dual map} $T^*:\cM_{d'}\ra\cM_d$, defined by $\tr{Y^\dagger T(X)}=\tr{T^*(Y)^\dagger X}$ for all $X\in\cM_d,Y\in\cM_{d'}$, can be expressed in terms of the Choi matrix $C$ as
\begin{equation*}
T^*(Y)={\rm tr}_1[C^\dagger(Y\otimes\ii)]^T \, .
\end{equation*}

We emphasize that the Choi representation can be defined only if the linear map $T$ is given on the full matrix algebra $\cM_d$. One main result of Section \ref{CPnesssect} will establish an efficient procedure for checking complete positivity of any (Hermiticity-preserving) linear map $T$ on a subspace (Theorem \ref{th:CPnessviaSDPtheorem}, cf.~also Proposition \ref{CPdualityprop}). In the following we demonstrate that complete positivity can sometimes be verified directly from Definition \ref{def:CPmaponS}.

\begin{example}[Direct verification of complete positivity]\label{ex:easytocheckcp}
In some cases it is easy to check that a linear map is completely positive on $\cS$ even when $\cS\neq\cM_d$, so that the criterion of the positivity of the Choi matrix is not applicable. For instance, let $\cS:={\rm span}\{\Psi_0:=\ket{0}\bra{0},\sigma_x\}\subset\cM_2$, where $\sigma_x$ is the Pauli $x$-matrix, and define a linear map on $\cS$ via $T(\Psi_0):=P$ and $T(\sigma_x):=B$ with any fixed positive $P\geq0$ and any $B\in\cM_{d'}$.
Then $T$ is completely positive.
In order to see this, we first characterize all positive elements in $\cS\otimes\cM_n$.
An arbitrary $Q\in\cS\otimes\cM_n$ can be written in the form
\begin{equation*}
 Q\,=\,\Psi_0\otimes F_0+\sigma_x\otimes F_1
 \end{equation*}
for some $F_i\in\cM_n$. If $Q\geq 0$, then $F_0=F_0^\dagger$ and $F_1=F_1^\dagger$ due to $Q=Q^\dagger$, and for any vector $\ket{\psi}\in\C^n$ we must have
\be0~\leq~\left(\ii\otimes\bra{\psi}\right)Q\left(\ii\otimes\ket{\psi}\right)~=~\bra{\psi}F_0\ket{\psi}\Psi_0+\bra{\psi}F_1\ket{\psi}\sigma_x~.\nonumber\ee
The determinant of the last expression is $-\bra{\psi}F_1\ket{\psi}^2$, so that $Q$ can only be positive if $F_1=0$.
It follows that we must have $F_0\geq0$, so indeed $(T\otimes\id_n)(Q)=P\otimes F_0\geq0$.
\end{example}

Positive matrices with unit trace will be called \emph{(quantum) states}.
All quantum-mechanical time-evolutions are completely positive maps $T:\cM_d\ra\cM_{d'}$ by the requirement that they preserve the positivity of states even when applied to a subsystem $\cM_d$ of the whole system $\cM_d\otimes\cM_n$. More precisely, a \emph{quantum channel} is a completely positive map $T:\cM_d\ra\cM_{d'}$, defined on the whole matrix algebra, which is \emph{trace-preserving}, i.e.~$\tr{T(X)}=\tr{X}$ for all $X\in\cM_d$. We call a completely positive map $T:\cM_d\ra\cM_{d'}$ that is \emph{trace-non-increasing}, i.e.~$\tr{T(X)}\leq\tr{X}$ for all positive $X\in\cM_d$, a \emph{quantum operation}. These notions refer to the time-evolution of quantum states, commonly called the \emph{Schr\"odinger picture}. By the pairing $\tr{A^\dagger T(B)}=\tr{T^*(A)^\dagger B}$, the dual map $T^*$ then acts on quantum observables, called the \emph{Heisenberg picture}. $T^*$ is (completely) positive iff $T$ is, and $T$ is trace-preserving (resp.~trace-non-increasing) iff $T^*$ is \emph{unital}, i.e.~$T^*(\ii)=\ii$ (resp.~\emph{sub-unital}, i.e.~$T^*(\ii)\leq\ii$).

\medskip

We end this subsection with a small observation on subspaces without nonzero positive elements.
To motivate this class of situations, we consider an example. Suppose we want to know whether there exists a quantum operation which evolves the Pauli spin operators $\sigma_i\in\cM_2$ ($i=x,y,z$) in the Heisenberg picture into some other given observables $Y_i$  but which may not necessarily be a deterministic time-evolution, i.e.~may not be unital.
Obviously, such an evolution can exist only if the map $T:\sigma_i\mapsto Y_i$ defined on the subspace $\cS:={\rm span}\{\sigma_i\}_i$ is already completely positive. In this situation the question of complete positivity has a simple answer as the only positive operator in $\cS$ is $0$. Namely, we have the following general observation.
\begin{proposition}[Complete positivity on subspaces without positive elements]\label{prop:oposelementthencp}
If a subspace $\cS\subset\cM_d$ does not contain any nonzero positive operator, then every linear map $T:\cS\ra\cM_{d'}$ is completely positive.
\end{proposition}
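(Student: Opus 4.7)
The claim reduces to showing that the hypothesis on $\cS$ propagates to all tensor-stretched subspaces $\cS\otimes\cM_n$: once one knows that $\cS\otimes\cM_n$ contains no nonzero positive operator, the $n$-positivity condition on $T$ becomes vacuous. Indeed, any positive $P\in\cS\otimes\cM_n$ would then have to be $0$, so $(T\otimes\id_n)(P)=0\geq 0$ automatically; this holding for every $n\in\nat$ gives complete positivity of $T$ by Definition \ref{def:CPmaponS}.

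So the only real work is to prove the following implication: if $\cS\subset\cM_d$ contains no nonzero positive operator, then $\cS\otimes\cM_n\subset\cM_d\otimes\cM_n$ contains no nonzero positive operator either. The plan is to take an arbitrary positive $Q\in\cS\otimes\cM_n$ and "slice" it with vectors from the ancillary space. Specifically, for every $\ket{\psi}\in\C^n$, the partial matrix element
\[
R_\psi \,:=\, (\ii_d\otimes\bra{\psi})\,Q\,(\ii_d\otimes\ket{\psi})
\]
still lies in $\cS$ (because on any product element $A\otimes B\in\cS\otimes\cM_n$ with $A\in\cS$ it produces $\bra{\psi}B\ket{\psi}\,A\in\cS$, and $\cS$ is a linear subspace). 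Moreover $R_\psi\geq 0$ whenever $Q\geq 0$, so by the hypothesis on $\cS$ we must have $R_\psi=0$ for every $\ket{\psi}\in\C^n$.

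It remains to deduce $Q=0$ from $R_\psi=0$ for all $\ket{\psi}$. Writing a spectral/square-root decomposition $Q=\sum_k \ket{w_k}\bra{w_k}$ in $\C^d\otimes\C^n$, each slice becomes $R_\psi=\sum_k \ket{u_k(\psi)}\bra{u_k(\psi)}$ with $\ket{u_k(\psi)}=(\ii_d\otimes\bra{\psi})\ket{w_k}\in\C^d$. The vanishing $R_\psi=0$ forces $\ket{u_k(\psi)}=0$ for every $k$ and every $\ket{\psi}$. Expanding $\ket{w_k}=\sum_{i,j}c^{(k)}_{ij}\ket{i}\ket{j}$ and taking $\ket{\psi}$ to range over the standard basis of $\C^n$ shows $c^{(k)}_{ij}=0$ for all $i,j,k$, hence $Q=0$.

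There is no serious obstacle here; the only subtle point worth highlighting is that the closure property "$\cS$ has no nonzero positive elements" lifts to $\cS\otimes\cM_n$, which is not a statement one might expect \emph{a priori}, since tensoring typically enlarges the cone of positive elements one has access to. The argument above handles it by reducing positivity on the large space to positivity on the original subspace through one-dimensional projections on the ancilla.
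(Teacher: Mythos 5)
Your proof is correct and follows essentially the same strategy as the paper's: both reduce the claim to showing that $\cS\otimes\cM_n$ contains no nonzero positive operator, by compressing a positive $P\in\cS\otimes\cM_n$ down to a positive element of $\cS$ and invoking the hypothesis. The only difference is in the final step: the paper uses the single compression $\mathrm{tr}_2[P]$ (which is just the sum of your basis slices $R_{\psi}$ over an orthonormal basis) and concludes $P=0$ from $\tr{P}=0$, whereas you use all individual vector slices together with a rank-one decomposition of $P$ -- a marginally longer but equivalent route.
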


\begin{proof}
Fix $n\in\nat$ and consider a positive element $P\in\cS\otimes\cM_n$.
The positivity of $P$ implies that also the partial trace over the second tensor factor is positive, $0\leq{\rm tr}_2[P]\in\cS$. Due to our assumption that $0$ is the only positive element in $\cS$ we conclude that ${\rm tr}_2[P]=0$, which also means that $\tr{P}=0$.
Thus $P$ is a positive matrix with vanishing trace, so $P=0$.
Finally, this implies $(T\otimes{\rm id}_n)(P)=0$, and therefore $T$ is $n$-positive.
As $n\in\nat$ was arbitary, $T$ is completely positive.
\end{proof}

\subsection{Linearity and Hermiticity preconditions}\label{sec:linearity}

Often we will have the situation that, given the matrices $X_i\in\cM_d$, $Y_i\in\cM_{d'}$, we are looking for a linear map $T$ that satisfies $T(X_i)=Y_i$, possibly along with some other criteria.
The existence of $T$ and of a linear extension $\widetilde{T}:\cM_d\ra\cM_{d'}$ to the whole space, can easily be checked by checking linear dependencies.
Thus, we will always assume that the input set $\{X_i\}_i$ is a linearly independent set and that $T$ is given as a linear map on $\cS:={\rm span}\{X_i\}_i$, which is then to be extended in a certain fashion.

Conversely, given a linear map $T:\cS\ra\cM_{d'}$ we may equivalently assume that $T$ is specified by its action $T(X_i)=Y_i$ on a basis $\{X_i\}_i$ of $\cS$.

In the Hermiticity-preserving case (see subsection \ref{sec:cp}), we may assume that all input operators $X_i$ (and thus also all output operators $Y_i$) are Hermitian matrices.
The action of a Hermiticity-preserving map is determined by its action on the real subspace of Hermitian matrices.

\subsection{Trace-preservation and compatibility}\label{sec:compatibility}
\label{linearityandcompatibilityprecond}
In many cases we are looking for extensions which are also trace-preserving. In the paradigmatic case we ask whether a linear map $T$ is a restriction of a quantum channel. Obviously, any map $T$ that is linear and trace-preserving on a subspace $\cS$ can be extended to a linear and trace-preserving map $\widetilde{T}$ on the whole space, namely by extending a basis of $\cS$ to a basis of $\cM_d$ and having $\widetilde{T}$ act in a trace-preserving way on the additional basis elements, e.g.~as the identity.

More generally, given two linearly independent sets $\{X_i\}\subset\cM_d$ and $\{X'_j\}\subset\cM_{d'}$, we can ask for the existence of a linear map $T:\cM_d\ra\cM_{d'}$ satisfying $T(X_i)=Y_i$ and $T^*(X'_j)=Y'_j$ for given $Y_i\in\cM_{d'}$ and $Y'_j\in\cM_d$.
We can thus have simultaneous constraints on the Schr\"odinger and Heisenberg pictures, respectively.

The question of a linear trace-preserving extension corresponds to the special case where $X'_1=\ii_{d'}$, $Y'_1=\ii_d$ is the only constraint on $T^*$.
It is easy to see from the definition of $T^*$ that such an extension exists iff the compatibility conditions $\tr{X'^\dagger_j Y_i}=\tr{Y'^\dagger_jX_i}$ hold for all $i,j$.

\subsection{Operational interpretation}\label{sec:interpretation}

We will now give an operational interpretation of a completely positive map $T:\cM_d\ra\cM_{d'}$ satisfying $T(X_i)=Y_i$ for $i=1,\ldots,N$. In the Introduction (Section \ref{sec:intro}) we already explained the interpretation if the operators are states (Schr\"odinger picture). Here we imagine $X_i,Y_i$ to be observables and $T$ to describe a time-evolution in the Heisenberg picture.

Analogous to the Introduction, we can rescale $T$ to a sub-unital completely positive map $T_1:=T/||T(\ii)||$, then find another completely positive map $T_0$ so that $T_1+T_0$ is unital, which implies that $(T_1^*,T_0^*)$ forms an instrument with two possible outcomes \cite{QTOS76}. Then, for all observables $Y_i$ from the given set and for all quantum states $\rho\in\cM_{d'}$ we have the identity
\bea\tr{\rho Y_i}~=~||T(\ii)||\,\tr{T_1^*(\rho)X_i}&=&||T(\ii)||\,\Big(\sum_{k=1}^d\lambda_k^{(i)}\tr{T_1^*(\rho)P_k^{(i)}}\,+\,0\cdot\tr{T_0^*(\rho)}\Big)\nonumber\\
&=&||T(\ii)||\,\Big(\sum_{k=1}^d\lambda_k^{(i)}\tr{\rho\,T_1(P_k^{(i)})}\,+\,0\cdot\tr{\rho\,T_0(\ii)}\Big)~,\nonumber\eea
where $X_i=\sum_k\lambda_k^{(i)}P_k^{(i)}$ is the spectral decomposition. The latter two forms can be interpreted as the following repeated measurement procedure: Put the quantum state $\rho$ into the instrument $(T_1^*,T_0^*)$ and, in case of the successful outcome $1$, measure the observable $X_i$. Record the value $0$ in case of the unsuccessful outcome $0$, and then average over all measurements. Irrepective of the input state $\rho$, the final result will equal the observable average $\tr{\rho Y_i}$ up to a scaling factor $||T(\ii)||$, which is independent of the observable $Y_i$ to be measured.

A large rescaling factor $||T(\ii)||\gg1$ in the previous identities does not necessarily mean that the map $T$ was chosen badly for measuring the desired observable average $\tr{\rho Y_i}$ -- in fact, if $T$ was a multiple of a unital map, then this amounts just to a uniform rescaling of the $X_i$ or, equivalently, of the $\lambda_k^{(i)}$. If, however, the ratio between the biggest and smallest eigenvalue of $T(\ii)$ is large, then there are some quantum states $\rho$ for which the instrument $(T_1^*,T_0^*)$ will often return the failing outcome $0$, so that many experimental runs may be necessary in order to measure the overservable average to some desired (relative) accuracy.

\section{Complete positivity on a subspace}\label{CPnesssect}
Whereas complete positivity of a map $T:\cM_d\ra\cM_{d'}$ between finite-dimensional matrix algebras can be checked easily through its associated Choi matrix $C=(T\otimes{\rm id})(\ket{\Omega}\bra{\Omega})$ \cite{Choi75}, no such criterion seems to be known for maps defined on a subspace $\cS\subset\cM_d$.
The following theorem provides the basis for a criterion that is algorithmically checkable with efficiency similar to Choi's criterion through semidefinite programming techniques \cite{convexoptimization}. For completely positive extension problems, the case of a Hermitian subspace $\cS=\cS^\dagger$ is particularly relevant.

\begin{theorem}[Complete positivity on a subspace]\label{th:CPnessviaSDPtheorem}
Let $T:\cS\ra\cM_{d'}$ be a linear map on a subspace $\cS\subseteq\cM_d$.
Let $\cS={\rm span}\{X_i\}_i$.
Then, $T$ is completely positive iff for any matrices $H_i\in\cM_{d'}$ the following implication holds:
\be
\sum_i X_i\otimes H_i\,\geq\,0\qquad\Rightarrow\qquad\sum_i{\rm{tr}}\left[T(X_i)^TH_i\right]\,\geq\,0~.\label{cpconditionintheorem}
\ee
If, in addition, $\cS$ is Hermitian and $T$ is Hermiticity-preserving, then by choosing the $X_i$'s Hermitian, complete positivity of $T$ can be checked via a semidefinite program with Hermitian variables $\{H_i\}_i$.
\end{theorem}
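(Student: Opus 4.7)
The plan is to establish both directions by direct manipulations of positive operators: in the forward direction I use the (unnormalized) maximally entangled vector to extract a trace from a positive operator, and in the converse I use a completely positive sandwiching to reduce testing $n$-positivity for arbitrary $n$ to the fixed dimension $d'$ appearing in the hypothesis. For the forward direction, assume $T$ is completely positive and $\sum_i X_i\otimes H_i\geq 0$ with $H_i\in\cM_{d'}$. The operator lies in $\cS\otimes\cM_{d'}$, so $d'$-positivity of $T$ gives $\sum_i T(X_i)\otimes H_i\geq 0$ in $\cM_{d'}\otimes\cM_{d'}$. Sandwiching with $\ket{\Omega}=\sum_{j=1}^{d'}\ket{jj}$ and using the standard identity $\bra{\Omega}(A\otimes B)\ket{\Omega}=\tr{A^T B}$ immediately produces $\sum_i\tr{T(X_i)^T H_i}\geq 0$.

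For the converse, suppose the implication (\ref{cpconditionintheorem}) holds and take any positive $P\in\cS\otimes\cM_n$, written as $P=\sum_i X_i\otimes F_i$ with $F_i\in\cM_n$. Positivity of $(T\otimes\id_n)(P)$ amounts to $\bra{\phi}(T\otimes\id_n)(P)\ket{\phi}\geq 0$ for every $\ket{\phi}\in\C^{d'}\otimes\C^n$. Parameterizing $\ket{\phi}=\sum_{a,b}\Phi_{ab}\ket{a}\ket{b}$ by the matrix of coefficients $\Phi\in\cM_{d',n}$, a short index calculation yields
\be
\bra{\phi}(T\otimes\id_n)(P)\ket{\phi}~=~\sum_i\tr{T(X_i)^T K_i}\,,\qquad K_i\,:=\,\Phi^\ast F_i\Phi^T\in\cM_{d'}~.\nonumber
\ee
The crucial observation is that $F\mapsto\Phi^\ast F\Phi^T$ is completely positive (it has the form $F\mapsto M^\dagger F M$ with $M=\Phi^T$), so $\sum_i X_i\otimes K_i=(\ii\otimes\Phi^T)^\dagger P(\ii\otimes\Phi^T)\geq 0$. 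Applying the hypothesis to the tuple $(K_i)\subset\cM_{d'}$ gives the desired nonnegativity, and hence $T$ is $n$-positive for every $n$. Conceptually, this CP sandwiching is precisely what makes the test dimension $d'$ in (\ref{cpconditionintheorem}) sufficient for \emph{all} $n$.

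For the semidefinite-programming claim, assume $\cS=\cS^\dagger$ with $X_i=X_i^\dagger$ and $T$ Hermiticity-preserving. Given any tuple $(H_i)$ with $\sum_i X_i\otimes H_i\geq 0$, taking adjoints and invoking linear independence of the $X_i$ forces $H_i=H_i^\dagger$, so one loses nothing by restricting the variables to Hermitian matrices. The constraint becomes a linear matrix inequality in the Hermitian $H_i$, and the objective $\sum_i\tr{T(X_i)^T H_i}$ is a real linear functional (realness follows from $T(X_i)$ and $H_i$ both being Hermitian). Hence verifying the implication is equivalent to deciding whether the semidefinite program $\min\sum_i\tr{T(X_i)^T H_i}$ subject to $\sum_i X_i\otimes H_i\geq 0$ and $H_i=H_i^\dagger$ has nonnegative optimal value (by homogeneity this value is either $0$, attained at $H_i=0$, or $-\infty$). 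The main place where I expect to have to be careful is the transpose/adjoint bookkeeping in the converse direction --- aligning $\Phi^\ast F_i\Phi^T$ with the $\tr{T(X_i)^T K_i}$ that appears in (\ref{cpconditionintheorem}) --- but beyond that, each step is routine.
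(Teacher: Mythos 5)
Your proof is correct and follows essentially the same route as the paper's: the forward direction sandwiches $\sum_i T(X_i)\otimes H_i$ with the unnormalized maximally entangled vector, and the converse reduces $n$-positivity to the $d'$-dimensional test condition by conjugating $P$ with $\ii\otimes\Phi^T$ (the paper writes the same computation via $\ket{\psi}=(\ii\otimes R)\ket{\Omega}$ with $R\in\C^{n\times d'}$, which is your coefficient-matrix parameterization in disguise). The only cosmetic difference is in the SDP part, where you force $H_i=H_i^\dagger$ via linear independence of the $X_i$ (the paper's standing assumption), whereas the paper instead notes that any positive $P$ admits a decomposition with Hermitian coefficients, which also covers linearly dependent spanning sets.
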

\begin{proof}
If $T$ is completely positive and $\sum_i X_i\otimes H_i\geq0$, then by Definition \ref{def:CPmaponS} we have
\be0\,\leq\,\bra{\Omega}\,\sum_i T(X_i)\otimes H_i\,\ket{\Omega}\,=\,\sum_i\tr{T(X_i)^TH_i}~,\nonumber\ee
where again $\ket{\Omega}:=\sum_{i=1}^{d'}\ket{ii}$ and we used that $\bra{\Omega}X\otimes Y\ket{\Omega}=\tr{X^TY}$.

Conversely, assume that for some $n\in\nat$ the matrix $P=\sum_i X_i\otimes K_i\in\cS\otimes\cM_n$ is positive.
Now, any $\ket{\psi}\in\hi_{d'}\otimes\hi_n$ can be written as $\ket{\psi}=(\ii\otimes R)\ket{\Omega}$ for some matrix $R\in\C^{n\times d'}$. Thus,
\be\bra{\psi}\,(T\otimes\id_n)(P)\,\ket{\psi}~=~\bra{\Omega}\,\sum_i T(X_i)\otimes(R^\dagger K_iR)\,\ket{\Omega}~=~\sum_i\tr{T(X_i)^T\,R^\dagger K_iR}~\geq~0\nonumber\ee
by condition \eqref{cpconditionintheorem}, since
\begin{equation*}
\sum_i X_i\otimes R^\dagger K_iR=(\ii\otimes R)^\dagger P(\ii\otimes R)\geq 0
\end{equation*}
as $P$ was positive.
As this holds for any $\bra{\psi}$, we have $(T\otimes{\rm id}_n)(P)\geq0$ and $T$ is completely positive.

If $\cS$ is Hermitian, one can choose a Hermitian basis $\{X_i\}_i$ of $\cS$ (cf.~beginning of the proof of Theorem \ref{th:cpisapprox}).
And, even more generally, if $\cS={\rm span}\{X_i\}_i$ with Hermitian $X_i$ then for any positive (and therefore, Hermitian) matrix $P\in\cS\otimes\cM_n$ there exist Hermitian matrices $H_i$ such that $P=\sum_i X_i\otimes H_i$. Further, if $T$ is Hermiticity-preserving, the matrices $T(X_i)^T$ are Hermitian. Thus, in this case condition (\ref{cpconditionintheorem}) involves only semidefinite inequalities and Hermitian variables, so that it can be checked by a semidefinite program.
This is made more explicit in the following discussion.
\end{proof}

Continuing the discussion from the last part of Theorem \ref{th:CPnessviaSDPtheorem}, if linearly independent Hermitian matrices $X_i\in\cM_d$, $Y_i\in\cM_{d'}$ ($i=1,\ldots,N$) are given, one can check complete positivity of the linear map $T:X_i\mapsto Y_i$ by the following semidefinite program (\ref{CPSDPminimizingfunction})--(\ref{CPSDPconditions}) (see \cite{convexoptimization} for a general introduction to semidefinite optimization problems):
\be\Gamma\,:=\,\inf~\sum_{i=1}^N \tr{Y^T_iH_i}\label{CPSDPminimizingfunction}~,\ee
where the infimum runs over all Hermitian matrices $H_i\in\cM_{d'}$ subject to the constraint
\be
\sum_{i=1}^N X_i\otimes H_i\,\geq\,0~.\label{CPSDPconditions}
\ee
Note in particular that this semidefinite program (SDP) has a feasible point $H_i\equiv0$. Due to homogeneity, this SDP can only assume the values $\Gamma=-\infty$ or $\Gamma=0$, and it is exactly the latter case in which the map $T:X_i\mapsto Y_i$ is completely positive.
(If the linear independence assumption on the $X_i$ is dropped, then one can still compute the SDP (\ref{CPSDPminimizingfunction})--(\ref{CPSDPconditions}) and it is easy to see that it will return the value $\Gamma=-\infty$ if $X_i\mapsto Y_i$ does not define a linear map.)

The feature of homogeneity allows to supplement the above SDP (\ref{CPSDPminimizingfunction})--(\ref{CPSDPconditions}) with an additional constraint, such as
\be||H_i||\,\leq\,1\quad\forall i~,\label{onepossiblecompactnesscondition}\ee
so that the optimization is bounded, the infimum $\Gamma$ in (\ref{CPSDPminimizingfunction}) will be attained, and the dichotomy between $\Gamma=0$ and $\Gamma=-\infty$ becomes a dichotomy between $\Gamma=0$ and $\Gamma<0$.

A condition similar to our Theorem \ref{th:CPnessviaSDPtheorem} for complete positivity of a map on an \emph{operator system} $\cS$ has been given as Theorem 6.1 in \cite{paulsen}, where however the Hahn-Banach theorem is used in order to first extend the map, which is not possible on general subspaces $\cS$, the case we consider (cf.~Theorem \ref{th:cpisapprox} below and subsequent examples). Furthermore, Theorem \ref{th:CPnessviaSDPtheorem} and the discussion above make the SDP formulation explicit, which is useful for concrete computations and to obtain the analytical insights on approximation questions described next.

\section{Approximative extensions}\label{sectapproxextensions}
Above, we were given Hermitian matrices $X_i$ and $Y_i$ and have asked whether or not the map $X_i\mapsto Y_i$ is linear and completely positive on $\cS={\rm span}\{X_i\}_i$. But in the same situation one can also, and seemingly quite differently, ask for the \emph{best approximation} of the linear map $T:X_i\mapsto Y_i$ by a completely positive linear map defined on the full algebra $\cM_d$. Recall from the introduction that if $\widetilde{T}:\cM_d\ra\cM_{d'}$ satisfies $\widetilde{T}(X_i)=Y_i$ for all $i$, then we call $\widetilde{T}$ an \emph{extension} of the map $T$.

Since a completely positive extension may not exist, we ask here for an approximation $\widetilde{T}:\cM_d\ra\cM_{d'}$ of $T$ that is as good as possible in some sense. Our main result in this direction is Theorem \ref{th:cpisapprox}, which exactly characterizes the maps $T$ that can be approximated arbitrarily well as the restriction of a map that is completely positive on the full algebra.

In order to assess the goodness of the approximation, we need to quantify it. This can be done in different ways (see Remark \ref{remark:onthechannelgoodnessmeasure}), but for now we define this measure to be the functional
\begin{equation}\label{eq:distancemeasure}\Delta(\widetilde{T})\,:=\,\sum_{i=1}^N||\widetilde{T}(X_i)-Y_i||_1~,\ee
which quantifies how close a given linear map $\widetilde{T}:\cM_d\ra\cM_{d'}$ is to $T:X_i\mapsto Y_i$. In particular, $\Delta(\widetilde{T})=0$ iff $\widetilde{T}$ is an extension of $T$. So,
\begin{equation}\label{eq:delta}\Delta\,:=\,\inf\big\{\Delta(\widetilde{T})\,\big|\,\widetilde{T}:\cM_d\ra\cM_{d'}~\text{completely~positive}\big\}\end{equation}
quantifies how well the linear map $T:X_i\mapsto Y_i$ can be approximated by a completely positive map defined on the full algebra $\cM_d$. In particular, $\Delta=0$ iff there exists a sequence of completely positive maps $T_n:\cM_d\ra\cM_{d'}$ which converges to $T$ in the sense that $T_n(X_i)\ra Y_i$ as $n\to\infty$ for all $i$.
We will later see that this approximation property is \emph{not} equivalent to the existence of a completely positive extension $\widetilde{T}:\cM_d\ra\cM_{d'}$ of $T$.

Later on, we will uncover several relations between the two questions of {\it(i)} whether a map is completely positive on a subspace $\cS$ and of {\it(ii)} whether it can be approximated arbitrarily well or be extended by a completely positive map defined on the whole space $\cM_d$. The following result already shows that the corresponding optimization problems, namely \eqref{CPSDPminimizingfunction}--\eqref{onepossiblecompactnesscondition} on the one hand and \eqref{eq:delta} on the other, are closely related. In fact, each is a semidefinite program (SDP) and both are related by strong duality \cite{convexoptimization} with $\Delta=-\Gamma$. As this equality will have important consequences in the following, we give a self-contained proof using the separating hyperplane theorem from convex analysis in Appendix \ref{dualityproofapp}.

\begin{proposition}[Strong SDP duality for complete positivity]\label{CPdualityprop}
Let $X_i\in\cM_d$ and $Y_i\in\cM_{d'}$ be Hermitian matrices.
The optimization \eqref{eq:delta}, whose optimal value $\Delta$ quantifies how well the map $X_i\mapsto Y_i$ can be approximated with a completely positive map on $\cM_d$, can be formulated as an SDP.
Up to a minus sign, it is the dual of the SDP (\ref{CPSDPminimizingfunction})--(\ref{onepossiblecompactnesscondition}), whose optimal value $\Gamma$ equals 0 iff the map $X_i\mapsto Y_i$ is completely positive on $\cS:={\rm span}\{X_i\}_i$.
The optimal values of both SDPs are related by
\begin{equation}\Delta = -\Gamma~.\end{equation}
\end{proposition}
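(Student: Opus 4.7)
The plan is to rewrite $\Delta$ as a standard semidefinite program, take its Lagrangian dual, and verify that the dual reproduces---up to an overall sign and the harmless substitution $H_i\mapsto H_i^T$---the SDP (\ref{CPSDPminimizingfunction})--(\ref{onepossiblecompactnesscondition}). Strong duality then promotes the generic weak inequality $\Delta\geq-\Gamma$ to the desired equality.

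First I would parametrize any Hermiticity-preserving completely positive map $\widetilde{T}:\cM_d\to\cM_{d'}$ by its Choi matrix $C\geq 0$, so that $\widetilde{T}(X)={\rm tr}_2[C(\ii\otimes X^T)]$. Using the standard variational representation of the trace norm on a Hermitian matrix $A$, namely $||A||_1=\inf\{\tr{P}+\tr{N}:P,N\geq 0,\,A=P-N\}$, the optimization (\ref{eq:delta}) is equivalent to the semidefinite program of minimizing $\sum_{i=1}^N(\tr{P_i}+\tr{N_i})$ over Hermitian matrices $C,P_i,N_i\geq 0$, subject to the linear equality constraints ${\rm tr}_2[C(\ii\otimes X_i^T)]-Y_i=P_i-N_i$ for each $i$. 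This establishes the first assertion.

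Next I would compute the dual. Introducing Hermitian multipliers $H_i\in\cM_{d'}$ for the equality constraints and using the adjoint identity $\sum_i\tr{H_i\,{\rm tr}_2[C(\ii\otimes X_i^T)]}=\tr{C\sum_i H_i\otimes X_i^T}$, the Lagrangian takes the form
\begin{equation*}
L~=~\sum_i\tr{P_i(\ii-H_i)}+\sum_i\tr{N_i(\ii+H_i)}+\tr{C\sum_i H_i\otimes X_i^T}-\sum_i\tr{H_i Y_i}.
\end{equation*}
Minimizing over $P_i,N_i\geq 0$ imposes $-\ii\leq H_i\leq\ii$, i.e.\ $||H_i||\leq 1$; minimizing over $C\geq 0$ imposes $\sum_i H_i\otimes X_i^T\geq 0$; in either other case the infimum is $-\infty$. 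The surviving dual objective is $-\sum_i\tr{H_i Y_i}$. Replacing $H_i\mapsto H_i^T$ and subsequently swapping the two tensor factors (which preserves positivity) converts the operator constraint to $\sum_i X_i\otimes H_i\geq 0$ and the objective to $-\sum_i\tr{Y_i^T H_i}$, while leaving the norm bound invariant. The resulting dual is exactly the SDP (\ref{CPSDPminimizingfunction})--(\ref{onepossiblecompactnesscondition}) with the sign of the objective flipped, so its optimum value is $-\Gamma$.

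Finally I would establish strong duality. Choosing $C=\lambda\ii$ for large $\lambda>0$ and setting $P_i=(\widetilde{T}(X_i)-Y_i)_++\mu\ii$, $N_i=(\widetilde{T}(X_i)-Y_i)_-+\mu\ii$ for $\mu>0$ yields a strictly feasible primal point (all of $C,P_i,N_i$ are strictly positive), so Slater's condition for conic programs gives $\Delta=-\Gamma$. The main obstacle in executing this plan is purely bookkeeping: keeping the transposes, tensor-swaps and sign conventions organized so that the dual coincides literally with (\ref{CPSDPminimizingfunction})--(\ref{onepossiblecompactnesscondition}) and not merely with an equivalent reformulation. The authors' approach via Appendix \ref{dualityproofapp} and the separating hyperplane theorem is presumably a self-contained alternative that sidesteps Slater's theorem by directly separating the convex cone $\{(\widetilde{T}(X_i))_i:\widetilde{T}\text{ CP}\}\subset(\cM_{d'})^N$ from the target tuple $(Y_i)_i$ whenever $\Delta>0$, extracting the required witness $H_i$ from the separating functional.
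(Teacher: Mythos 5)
Your proposal is correct and follows essentially the same route as the paper's proof: rewrite the trace norms variationally, pass to the Choi matrix to obtain the primal SDP (15)--(17), identify its Lagrangian dual with (7)--(9) up to sign, and invoke Slater's constraint qualification via the strictly feasible point $C>0$, $P_i,Q_i>0$. You merely spell out the dual computation that the paper calls ``straightforward'' and correctly identify the appendix argument as the self-contained separating-hyperplane alternative.
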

\begin{remark}\label{remark:onthechannelgoodnessmeasure}The SDP (\ref{eq:delta}) and the SDP (\ref{CPSDPminimizingfunction})--(\ref{onepossiblecompactnesscondition}) remain dual to each other when instead of (\ref{onepossiblecompactnesscondition}) another semidefinite compactness condition is chosen and at the same time the ``goodness measure'' $\Delta(\widetilde{T})$ in (\ref{eq:distancemeasure}) is modified accordingly.
In particular, note that (\ref{onepossiblecompactnesscondition}) can be written as $\max_i||H_i||\leq1$ and that the summation $\sum_i$ in (\ref{eq:distancemeasure}) is dual to the maximization $\max_i$, whereas the trace-norm is dual to the operator norm in (\ref{onepossiblecompactnesscondition}).
As a further example, Proposition \ref{CPdualityprop} still remains true if instead of (\ref{eq:distancemeasure}) one defines $\Delta(\widetilde{T}):=\max_i||\widetilde{T}(X_i)-Y_i||$ and instead of (\ref{onepossiblecompactnesscondition}) one imposes the constraint $\sum_i||H_i||_1\leq1$.
\end{remark}

\begin{proof}
The trace-norm of a Hermitian matrix $X$ equals the sum of the traces of its positive and negative parts and can thus be expressed as a minimization via
\begin{equation*}
||X||_1=\inf\{\tr{P+Q}\,\big|\,P,Q\geq0,P-Q=X\} \, .
\end{equation*}
Using this in Eq.~\eqref{eq:distancemeasure} and replacing the infimum over all completely positive maps $\widetilde{T}$ in \eqref{eq:delta} by an infimum over all positive Choi matrices $C\in\cM_{d'}\otimes\cM_d$ according to the Choi-Jamiolkowski correspondence (\ref{choicorrespondenceeqn}), one can write the optimization (\ref{eq:delta}) as an SDP:
\bea\Delta~=&\text{infimum~of}~&\sum_{i=1}^N\tr{P_i+Q_i}~,\label{deltasdpinproof1}\\
&\text{subject~to}~&C,\,P_i,\,Q_i\,\geq\,0~,\label{deltasdpinproof2}\\
&&P_i-Q_i\,=\,{\rm tr}_2[C(\ii\otimes X_i^T)]-Y_i\quad\forall i~.\label{deltasdpinproof3}\eea

It follows by straightforward computation that, up to a minus sign, this is dual to the SDP (\ref{CPSDPminimizingfunction})--(\ref{onepossiblecompactnesscondition}); see for example \cite{convexoptimization}. Hereby, in (\ref{deltasdpinproof2}) the constraint $C\geq0$ to completely positive maps follows as the Lagrangian multiplier of the positivity condition (\ref{CPSDPconditions}). As the SDP (\ref{deltasdpinproof1})--(\ref{deltasdpinproof3}) is \emph{strictly feasible} (e.g.~by choosing any $C>0$ and $P_i,Q_i>0$ accordingly), Slater's constraint qualification immediately gives strong duality \cite{rockafellar,convexoptimization}, i.e.~$\Delta=-\Gamma$. An elementary proof of the latter fact is given in Appendix \ref{dualityproofapp}.
\end{proof}

Roughly speaking, Proposition \ref{CPdualityprop} quantitatively relates the degree $\Gamma$, to which a given map $T:X_i\mapsto Y_i$ fails to be completely positive, to the goodness $\Delta=-\Gamma$, with which it can be approximated by a completely positive map on $\cM_d$. In the extremal and qualitative case $\Gamma=\Delta=0$ one obtains the following theorem, which is our main result:
\begin{theorem}[Complete positivity and approximability]\label{th:cpisapprox}
Let $T:\cS\to\cM_{d'}$ be a Hermiticity-preserving linear map on a Hermitian subspace $\cS\subseteq\cM_{d}$.
The following statements are equivalent:
\begin{itemize}
\item[(i)]$T$ is completely positive.
\item[(ii)] For every $\epsilon>0$, there exists a completely positive map $T_\epsilon:\cM_d\ra\cM_{d'}$ such that $||T_\epsilon(X)-T(X)||\leq\epsilon||X||$ for all $X\in\cS$.
\end{itemize}
\end{theorem}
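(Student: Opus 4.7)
The plan is to derive this theorem from the strong-duality Proposition \ref{CPdualityprop}, using the direction $\Delta=0 \Leftrightarrow \Gamma=0$ in a nontrivial way: namely, taking $Y_i:=T(X_i)$ for a basis $\{X_i\}$ of $\cS$ and unpacking what $\Delta=0$ means in terms of approximating maps. The easy implication $(ii)\Rightarrow(i)$ is a standard closedness argument, while the content lies in $(i)\Rightarrow(ii)$, which is where Proposition \ref{CPdualityprop} is applied in the ``opposite'' direction from Arveson-type extension results (cf.\ Remark \ref{dualityinoppositedirection}).

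For $(ii)\Rightarrow(i)$, I would fix $n\in\nat$ and any positive $P=\sum_i X_i\otimes H_i\in\cS\otimes\cM_n$. The hypothesis produces a sequence of completely positive $T_k:\cM_d\to\cM_{d'}$ with $T_k(X_i)\to T(X_i)$ in operator norm for each $i$. Consequently $(T_k\otimes\id_n)(P)\to(T\otimes\id_n)(P)$, and since each $(T_k\otimes\id_n)(P)\geq0$, the limit is positive. Hence $T$ is $n$-positive for every $n$, i.e.\ completely positive in the sense of Definition \ref{def:CPmaponS}.

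For $(i)\Rightarrow(ii)$, choose a Hermitian basis $\{X_i\}_{i=1}^N$ of $\cS$ and set $Y_i:=T(X_i)$. By Theorem \ref{th:CPnessviaSDPtheorem}, complete positivity of $T$ is equivalent to $\Gamma=0$ in the SDP \eqref{CPSDPminimizingfunction}--\eqref{onepossiblecompactnesscondition}. Proposition \ref{CPdualityprop} then yields $\Delta=0$ in the dual problem \eqref{eq:delta}, so for every $\delta>0$ there is a completely positive $\widetilde{T}_\delta:\cM_d\to\cM_{d'}$ with
\[
\sum_{i=1}^N \|\widetilde{T}_\delta(X_i)-Y_i\|_1 \,<\,\delta.
\]
To upgrade this to the uniform bound in (ii), I expand an arbitrary $X=\sum_i c_i X_i\in\cS$ in the chosen basis. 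Using $\|\cdot\|\leq\|\cdot\|_1$ and linearity,
\[
\|\widetilde{T}_\delta(X)-T(X)\| \,\leq\, \sum_i |c_i|\,\|\widetilde{T}_\delta(X_i)-Y_i\|_1 \,\leq\, \Big(\max_i|c_i|\Big)\,\delta.
\]
Finally, because $\cS$ is finite-dimensional, the coordinate functionals $X\mapsto c_i$ are continuous, so there is a constant $M<\infty$ (depending only on the basis $\{X_i\}$) with $\max_i|c_i|\leq M\|X\|$ for every $X\in\cS$. Choosing $\delta:=\epsilon/M$ and setting $T_\epsilon:=\widetilde{T}_{\epsilon/M}$ completes the proof of (ii).

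The main obstacle is psychological rather than technical: one has to trust that the SDP duality in Proposition \ref{CPdualityprop}, which on its face relates one numerical quantity $\Gamma$ to another $\Delta$, is strong enough to produce an \emph{approximation-by-genuine-completely-positive-maps} statement without giving an exact extension. The finite-dimensional equivalence of norms on $\cS$ is what bridges the basis-dependent sum-of-trace-norms in $\Delta$ to the basis-free uniform operator-norm bound in (ii); since different choices of compactness condition in \eqref{onepossiblecompactnesscondition} lead to dual ``goodness measures'' (cf.\ Remark \ref{remark:onthechannelgoodnessmeasure}), one could equivalently work directly with the max-of-operator-norms variant and avoid even this mild norm conversion.
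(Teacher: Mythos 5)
Your proof is correct. The substantive direction (i)$\Rightarrow$(ii) follows the paper's argument essentially verbatim: Theorem \ref{th:CPnessviaSDPtheorem} gives $\Gamma=0$, Proposition \ref{CPdualityprop} gives $\Delta=0$, and the finite-dimensional equivalence of norms on $\cS$ (the paper uses the $\ell_1$-type norm $\sum_i|c_i|$ where you use $\max_i|c_i|$, an immaterial difference) converts the basis-dependent bound into the uniform bound of (ii). Where you diverge is the converse (ii)$\Rightarrow$(i): the paper routes it back through the SDP, arguing $\Delta=0\Rightarrow\Gamma=0$ and then invoking Theorem \ref{th:CPnessviaSDPtheorem} again, whereas you argue directly that $(T_k\otimes\id_n)(P)\to(T\otimes\id_n)(P)$ and use closedness of the positive cone. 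Your version is more elementary, needs no duality, and in fact immediately yields the generalization recorded in Remark \ref{remark:aftercpisapproxtheorem} (it never uses Hermiticity of $\cS$ or of $T$); the paper's version has the mild virtue of keeping both directions inside the single SDP picture. Both are sound.
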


\begin{proof}
Choose a basis $\{B_i\}_{i=1}^N$ of $\cS$, where $N=\dim\cS$.
As $\cS$ is Hermitian, the Hermitian parts $(B_i+B_i^\dagger)/2$ and the anti-Hermitian parts $(B_i-B_i^\dagger)/2i$ belong to $\cS$, so together they form a set of Hermitian matrices spanning the subspace $\cS$.
From this set one can choose a basis $\{X_i\}_{i=1}^N$ of $\cS$ consisting of Hermitian matrices.
Define the images $Y_i:=T(X_i)$.

Assume {\it(i)}. By Theorem \ref{th:CPnessviaSDPtheorem} the infimum in \eqref{CPSDPminimizingfunction} under the conditions (\ref{CPSDPconditions})--(\ref{onepossiblecompactnesscondition}) will be $\Gamma=0$.
But by Proposition \ref{CPdualityprop}, the optimum in \eqref{eq:delta} is then also $\Delta=0$, which means that for any $\epsilon'>0$ one can find a completely positive map $\widetilde{T}:\cM_d\ra\cM_{d'}$ such that $\no{\widetilde{T}(X_i)-T(X_i)}\leq\epsilon'$ for all $i=1,\ldots,N$.
Here we have used the fact that all norms on the finite-dimensional vector space $\cM_{d'}$ are equivalent to the trace-norm in Eq.~(\ref{eq:distancemeasure}).
As the $X_i$ form a basis of $\cS$, we can define a norm $||\cdot||_\cS$ on the vector space $\cS$ by $\no{X}_\cS:=\sum_i|c_i|$ where the $c_i$'s are determined uniquely as the coefficients in the linear combination $X=\sum_i c_i X_i$.
Again, as $\cS$ is finite-dimensional, there exists a constant $s>0$ such that $\no{X}_\cS\leq s\no{X}$. So, for any $X\in\cS$ we can write
\be\no{\widetilde{T}(X)-T(X)}\,=\,\big|\big|(\widetilde{T}-T)\big(\sum_ic_iX_i\big)\big|\big|\,\leq\,\sum_i|c_i|\, \no{\widetilde{T}(X_i)-T(X_i)}\,\leq\,\epsilon' \no{X}_\cS\,\leq\,s\epsilon' \no{X}~,\nonumber\ee
so that setting $\epsilon':=\epsilon/s$ above and $T_\epsilon:=\widetilde{T}$ gives the desired map.

Conversely, if {\it(ii)} holds then in particular there exist completely positive maps $T_\epsilon:\cM_d\ra\cM_{d'}$ which approximate the map $T$ arbitrarily well on the particular inputs $X_i$. Again using that all norms on $\cM_{d'}$ are equivalent and that the sum in (\ref{eq:distancemeasure}) is over a fixed number $N$ of terms, one sees that $\Delta(T_\epsilon)$ can be made arbitrarily close to $0$ with completely positive maps $T_\epsilon$, so that $\Delta=0$ in (\ref{eq:delta}). Proposition \ref{CPdualityprop} gives that the infimum in (\ref{CPSDPminimizingfunction}) is $0$ under the conditions (\ref{CPSDPconditions}) and (\ref{onepossiblecompactnesscondition}). Due to homogeneity, the infimum in (\ref{CPSDPminimizingfunction}) remains $0$ even if the latter condition is dropped, so that Theorem \ref{th:CPnessviaSDPtheorem} becomes applicable to finally prove {\it(i)}.
\end{proof}

\begin{remark}\label{remark:aftercpisapproxtheorem}
More generally, if {\it(ii)} in Theorem \ref{th:cpisapprox} is satisfied for any (not necessarily Hermiticity-preserving) linear map $T:\cS\ra\cM_{d'}$ on a (not necessarily Hermitian) subspace $\cS\subseteq\cM_d$, then {\it(i)} holds. However, simple examples show that dropping any of these two presuppositions will generally invalidate the implication {\it(i)}\,$\Rightarrow$\,{\it(ii)}.

To prove the first statement of the previous paragraph, note that $T$ is the pointwise limit on $\cS$ of a sequence of linear Hermiticity-preserving maps $T_\epsilon$ that are defined on the whole space $\cM_d$.
It is then easy to see that $T$ can be extended (in a unique way) to a linear and Hermiticity-preserving map $T'$ on the Hermitian subspace $\cS':={\rm span}\{\cS,\cS^\dagger\}\subseteq\cM_d$, and that then condition {\it(ii)} will hold for $T'$.
What we have proved above (Theorem \ref{th:cpisapprox}) shows that $T'$ is then completely positive on $\cS'$.
Thus, the restriction of $T'$ to $\cS$, which is just $T$, is completely positive as well, as is immediate from the definition of complete positivity (Definition \ref{def:CPmaponS}).
\end{remark}

Contrary to what one might expect, if $T$ is completely positive on a Hermitian subspace (so that condition {\it(i)} of Theorem \ref{th:cpisapprox} holds), then there does not necessarily exist a completely positive extension $\widetilde{T}$ to the whole space, i.e.~{\it(ii)} of Theorem \ref{th:cpisapprox} may not hold for $\epsilon=0$.
This is demonstrated by the following example.

\begin{example}[No completely positive extension]\label{ex:noexactcpextension}
Define $\cS:={\rm span}\{\Psi_0:=\ket{0}\bra{0},\sigma_x\}\subset\cM_2$ and a linear map $T:\cS\ra\cM_2$ by $T(\Psi_0):=\Psi_0$ and $T(\sigma_x):=\sigma_z$ (where $\sigma_i$ are the Pauli sigma matrices). According to Example \ref{ex:easytocheckcp}, $T$ is completely positive, and it is also Hermiticity-preserving on the Hermitian subspace $\cS$.
In particular, it satisfies the conditions of Theorem \ref{th:cpisapprox}.
Note that, furthermore, $T$ is trace-preserving.

We claim that $T$ has no completely positive extension to $\cM_2$. This can be seen in the following way. Assume that a positive extension $\widetilde{T}$ exists, in particular for $\Psi_1:=\ket{1}\bra{1}$, we have that $\widetilde{T}(\Psi_1)=:\left(\begin{matrix}a&b\\ b^*&c\end{matrix}\right)\geq0$ and hence $a,c\geq0$.
Observe that $X_\delta:=\Psi_0+\delta\sigma_x+\delta^2\Psi_1=\left(\begin{matrix}1&\delta\\\delta&\delta^2\end{matrix}\right)$ is positive for all $\delta\in\R$, but $\widetilde{T}(X_\delta)=\left(\begin{matrix}1+\delta+\delta^2a&\delta^2b\\\delta^2b^*&-\delta+\delta^2c\end{matrix}\right)$ has negative determinant for small $\delta$,
\be\det\left[\widetilde{T}(X_\delta)\right]\,=\,\delta\left[(\delta c-1)(1+\delta+\delta^2a)-\delta^3|b|^2\right]\,<\,0\qquad\text{for}~\,\delta\in(0,1/c)~,\nonumber
\ee
and so $\widetilde{T}(X_\delta)$ cannot be positive for $\delta\in(0,1/c)$, contradicting the assumption.
\end{example}

The next example illustrates the approximations by completely positive maps of Theorem \ref{th:cpisapprox}.

\begin{example}[Approximation by completely positive maps]\label{approxbycpmapsexample}
As $T$ from the previous example is completely positive on $\cS$, we can approximate it arbitrarily closely by completely positive maps on $\cM_2$ in the sense of Theorem \ref{th:cpisapprox}.
For example, for $\epsilon>0$ one can set
\be T_\epsilon(X)\,:=\,K_1XK_1^\dagger+K_2AK_2^\dagger~~\quad\text{with}\quad~
K_1\,:=\,\left(\begin{matrix}1&1/2\\0&0\end{matrix}\right),~
K_2\,:=\,\left(\begin{matrix}0&0\\\epsilon&-1/2\epsilon\end{matrix}\right)~.\nonumber\ee
Obviously, $T_\epsilon$ is completely positive as it is given in Kraus representation.
Its action on the basis elements of $\cS={\rm span}\{\Psi_0,\sigma_x\}$ is $T_\epsilon(\Psi_0)=\Psi_0+\epsilon^2\Psi_1$ and $T_\epsilon(\sigma_x)=\sigma_z$, so that on $\cS$ it obviously approximates $T$ as $\epsilon\to0$.

The norm of the Kraus operator $K_2$ above diverges as the approximation becomes better and better. This observation can be rephrased without reference to a particular Kraus decomposition by saying that
\be T_\epsilon^*(\ii)\,=\,K_1^\dagger K_1+K_2^\dagger K_2\,=\,\left(\begin{matrix}1+\epsilon^2&0\\0&(1+1/\epsilon^2)/4\end{matrix}\right)\nonumber\ee
is unbounded in the limit of arbitrarily good approximation ($\epsilon\to0$). 
We now prove that this phenomenon necessarily occurs in the case where, as in the present example, no completely positive extension exists but where an approximating sequence exists.
\end{example}

\begin{theorem}[Unboundedness for ``true'' approximations]\label{unboundednesstrueapprox}
Let $T:\cS\ra\cM_{d'}$ be a map on a subspace $\cS\subset\cM_d$.
Assume that $T_k:\cM_d\ra\cM_{d'}$ is a sequence of completely positive maps that converges pointwise on $\cS$ to $T$, but that $T$ does not possess a completely positive extension $\widetilde{T}:\cM_d\ra\cM_{d'}$.
Then the sequence $\{T_k\}$ is unbounded in the sense that
\be\no{T_k^*(\ii)}\,\to\,\infty\quad\text{as}~~k\to\infty~.\label{normsofTstartoinfty}\ee
\end{theorem}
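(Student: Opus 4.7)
The plan is to argue by contraposition: if $\{\no{T_k^*(\ii)}\}$ does \emph{not} diverge to $\infty$, then we can extract a subsequence producing a completely positive extension of $T$, contradicting the hypothesis. The key is that boundedness of $\no{T_k^*(\ii)}$ controls the size of the associated Choi matrices, after which finite-dimensional compactness takes over.

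First, I would invoke the Choi representation: each $T_k$, being completely positive on the full algebra $\cM_d$, corresponds to a positive Choi matrix $C_k:=(T_k\otimes\id_d)(\ket{\Omega}\bra{\Omega})\in\cM_{d'}\otimes\cM_d$, and conversely $T_k(X)={\rm tr}_2[C_k(\ii\otimes X^T)]$. From the formula for the dual map in (\ref{choicorrespondenceeqn}) one obtains $T_k^*(\ii)={\rm tr}_1[C_k]^T$, which is itself a positive operator. Since $C_k\geq 0$, its trace norm is just $\tr{C_k}=\tr{{\rm tr}_1[C_k]}\leq d'\no{T_k^*(\ii)}$. So a bounded sequence $\no{T_k^*(\ii)}$ would imply boundedness of $\{C_k\}$ in any finite-dimensional norm on $\cM_{d'}\otimes\cM_d$.

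Now, if $\no{T_k^*(\ii)}$ does not tend to infinity, then some subsequence $\{T_{k_j}^*(\ii)\}$ is norm-bounded, and by the previous paragraph the corresponding Choi matrices $\{C_{k_j}\}$ lie in a compact subset of $\cM_{d'}\otimes\cM_d$. By Bolzano-Weierstrass, extract a further subsequence (still denoted $C_{k_j}$) converging to some $C_\infty$; positivity is closed, so $C_\infty\geq 0$, and by the Choi isomorphism $C_\infty$ defines a completely positive map $\widetilde T:\cM_d\ra\cM_{d'}$. The map $C\mapsto{\rm tr}_2[C(\ii\otimes X^T)]$ is linear and hence continuous on the finite-dimensional space $\cM_{d'}\otimes\cM_d$, so $T_{k_j}(X)\to\widetilde T(X)$ for every $X\in\cM_d$. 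Combined with the hypothesis that $T_{k_j}(X)\to T(X)$ for every $X\in\cS$, we conclude $\widetilde T(X)=T(X)$ on $\cS$, i.e.~$\widetilde T$ is a completely positive extension of $T$, contradicting the hypothesis. Hence \eqref{normsofTstartoinfty} must hold.

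The argument is essentially a compactness-and-extract-limit routine; the only conceptually delicate step is the inequality $\tr{C_k}\leq d'\no{T_k^*(\ii)}$ (or some equivalent bound), which converts control of $\no{T_k^*(\ii)}$ into control of the Choi matrix and is what makes the compactness argument go through. Everything else (positivity is closed under limits, pointwise convergence of maps from convergence of Choi matrices) is standard.
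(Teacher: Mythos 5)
Your proof is correct. It reaches the same contradiction as the paper's proof and rests on the same underlying fact --- that completely positive maps with a uniform bound on $\tr{T_k^*(\ii)}$ correspond to Choi matrices in a compact set --- but your route to exploiting that fact is more elementary and more direct. The paper first passes to the Hermitian span $\cS'={\rm span}\{\cS,\cS^\dagger\}$, chooses a Hermitian basis, invokes the SDP value $\Delta=0$ from Proposition \ref{CPdualityprop}, and then argues that the infimum is attained once the compactness constraint $\tr{\widetilde{T}^*(\ii)}\leq Bd$ is added; the attained minimizer is the extension. You instead apply Bolzano--Weierstrass directly to the Choi matrices $C_{k_j}$, use closedness of the positive cone and continuity of $C\mapsto{\rm tr}_2[C(\ii\otimes X^T)]$ to produce the limiting completely positive map, and identify it as an extension of $T$ on $\cS$. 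This bypasses the SDP machinery and the detour through $\cS'$ entirely, which is a genuine simplification; what the paper's phrasing buys in exchange is consistency with the duality framework used throughout Sections \ref{CPnesssect}--\ref{sectapproxextensions} (the same compactness observation is what makes the constrained SDP attain its optimum). One cosmetic point: since $T_k^*(\ii)\in\cM_d$, the bound should read $\tr{C_k}=\tr{T_k^*(\ii)}\leq d\,\no{T_k^*(\ii)}$ rather than $d'\no{T_k^*(\ii)}$; this has no effect on the argument, as only boundedness matters.
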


\begin{proof}
Assume, contrary to (\ref{normsofTstartoinfty}), that there is a real constant $B>0$ such that for any $K\in\nat$ there exists $k>K$ with $\no{T_k^*(\ii)}\leq B$. Then there is a subsequence $T_{k(m)}$, which we will simply call $T_m$, that is bounded in the sense that $\no{T_m^*(\ii)}\leq B$ for all $m\in\nat$.
As $T_m^*(\ii)\geq0$, this implies $\tr{T_m^*(\ii)}\leq Bd$ for all $m\in\nat$.

As in the proof of Remark \ref{remark:aftercpisapproxtheorem}, $T_m$ converges to a linear Hermiticity-preserving map $T'$ on the Hermitian subspace $\cS':={\rm span}\{\cS,\cS^\dagger\}$.
Again, we consider a basis of $\cS'$ consisting of Hermitian matrices $X_i$ and define $Y_i:=T'(X_i)$. Then the infimum in (\ref{eq:delta}) is $\Delta=0$ since $T'$ was constructed as the pointwise limit of a sequence of completely positive maps $T_m$, so in particular $T_m(X_i)\to Y_i$ as $m\to\infty$ for all $i=1,\ldots,\dim\cS'$.
As the $T_m$ are uniformly bounded in the above sense, we even have
\be0\,=\,\inf\Big\{\sum_i||\widetilde{T}(X_i)-T'(X_i)||_1\,\,\Big|\,\,\widetilde{T}:\cM_d\ra\cM_{d'}~\text{completely~positive},\,\tr{\widetilde{T}^*(\ii)}\leq Bd\Big\}~.\nonumber\ee
Now, the last infimum is attained since a continuous function is optimized over the compact set of completely positive maps satisfying $\tr{T^*(\ii)}\leq Bd$ (this corresponds to supplementing the SDP (\ref{deltasdpinproof1})--(\ref{deltasdpinproof3}) by the constraint $\tr{C}\leq Bd$, which together with the constraint $C\geq0$ in (\ref{deltasdpinproof2}) makes the set of feasible Choi operators $C$ compact). But this attaining of the infimum implies that $T'$ has a completely positive extension $\widetilde{T}:\cM_d\ra\cM_{d'}$, which however contradicts the assumptions of the theorem as $\widetilde{T}$ would also be a completely positive extension of $T$.
\end{proof}

A way to explicitly find an approximating sequence of completely positive maps is to solve the SDP (\ref{deltasdpinproof1})--(\ref{deltasdpinproof3}) algorithmically, e.g.~via interior point methods \cite{convexoptimization}. This gives a sequence of Choi matrices $C_k$ such that the objective function converges to its optimum $\Delta$, and these can then be turned into completely positive linear maps $T_k$ via the Choi-Jamiolkowski isomorphism (\ref{choicorrespondenceeqn}).

In Section \ref{CPextensionssect} we will discuss various situations and conditions which guarantee for a given map $T$ the existence of a completely positive extension, and not merely of an approximation as in Theorem \ref{th:cpisapprox} {\it(ii)} or Example \ref{approxbycpmapsexample}. As in the present section, several of these analytic results can be derived from the SDP formulation of the problem.
And besides providing an analytic approach, the SDP formulation further allows for efficient and certifiable numerical algorithms \cite{convexoptimization} to decide the existence of an extension and to find optimal approximations; we will illustrate this numerical use in subsection \ref{counterexpsection} to disprove a conjecture.

\section{Completely positive extensions}\label{CPextensionssect}

We have seen in Theorem \ref{th:cpisapprox} that, in a Hermitian setting, when the linear map $T$ is completely positive on a subspace, it can be approximated arbitrarily well with completely positive maps defined on the full algebra. In some cases, complete positivity on a subspace implies a stronger result that there actually exists a completely positive extension to the full algebra. In this section we discuss several conditions leading to this conclusion.

\subsection{Subspace containing a strictly positive operator}

If, beyond the assumptions in Theorem \ref{th:cpisapprox} {\it(i)}, $T$ is a completely positive map on an operator system $\cS$ (i.e.~a Hermitian subspace $\cS\subseteq\cM_d$ containing the identity matrix $\ii$), then it is well-known that $T$ has a completely positive extension $\widetilde{T}$ defined on all of $\cM_d$ (see e.g.~\cite{paulsen}, chapter 6).
The following theorem generalizes this fact to the case where $\cS$ may not be an operator system but contains a strictly positive operator, i.e.~a positive matrix with full rank.
In subsection \ref{sec:spannedbydensityops} we will generalize this theorem to the case relevant in quantum theory, where the subspace $\cS$ is spanned by (not necessarily full-rank) density operators.

For the following theorem we provide three different albeit related proofs. On the one hand these should clarify the relation to the existing results (e.g.~\cite{paulsen}) and emphasize the underlying mechanism, and on the other hand should make the connection to the more quantitative formulation of complete positivity following Theorem \ref{th:CPnessviaSDPtheorem}, including the formulation as an SDP in Proposition \ref{CPdualityprop} and the result on approximative extensions in Theorem \ref{th:cpisapprox} (see esp.~third proof below).

\begin{theorem}[Completely positive extension;~Arveson's extension theorem]\label{th:exactcpextension}
Let $\cS\subseteq\cM_d$ be a Hermitian subspace which contains a strictly positive operator $P$.
Let $T:\cS\ra\cM_{d'}$ be a completely positive map. Then there exists a completely positive map $\widetilde{T}:\cM_d\ra\cM_{d'}$ extending $T$.
\end{theorem}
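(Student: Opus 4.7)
The plan is to reduce to the standard case of operator systems, for which the classical form of Arveson's extension theorem applies. Since $P\in\cS$ is strictly positive, it is invertible, so the map $\Phi_P:\cM_d\to\cM_d$, $\Phi_P(X):=P^{-1/2}XP^{-1/2}$, is a linear bijection. Set $\cS':=\Phi_P(\cS)$. Because $\Phi_P(P)=\ii\in\cS'$ and $\Phi_P$ preserves Hermiticity, $\cS'$ is an operator system. Define
\[
T':\cS'\to\cM_{d'},\qquad T'(Y):=T\bigl(P^{1/2}YP^{1/2}\bigr),
\]
so that $T=T'\circ\Phi_P$ on $\cS$.

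The next step is to check that $T'$ is completely positive. For any $Q\in\cS'\otimes\cM_n$ with $Q\geq0$, the operator $(P^{1/2}\otimes\ii)Q(P^{1/2}\otimes\ii)=(\Phi_P^{-1}\otimes\id_n)(Q)$ is positive and lies in $\cS\otimes\cM_n$, so by the complete positivity of $T$ we get $(T\otimes\id_n)(\Phi_P^{-1}\otimes\id_n)(Q)=(T'\otimes\id_n)(Q)\geq0$. Now invoke the standard Arveson extension theorem on operator systems (e.g.~Theorem 7.5 in \cite{paulsen}) to obtain a completely positive extension $\widetilde{T'}:\cM_d\to\cM_{d'}$ of $T'$. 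Finally, set $\widetilde{T}:=\widetilde{T'}\circ\Phi_P$, which is completely positive as a composition of completely positive maps (the sandwiching $\Phi_P$ being manifestly completely positive, as is its inverse), and which restricts to $T$ on $\cS$ by construction.

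An alternative, more quantitative route uses the tools already developed in Sections \ref{CPnesssect}--\ref{sectapproxextensions}. By Theorem \ref{th:cpisapprox}, complete positivity of $T$ yields a sequence of completely positive maps $T_k:\cM_d\to\cM_{d'}$ with $T_k(X)\to T(X)$ for all $X\in\cS$; in particular $T_k(P)\to T(P)$ is bounded. Strict positivity of $P$ gives $\lambda>0$ with $P\geq\lambda\ii$, hence by positivity of $T_k$ one has $T_k(\ii)\leq\lambda^{-1}T_k(P)$, so $\no{T_k(\ii)}$ is uniformly bounded. Using $\tr{T_k^*(\ii)}=\tr{T_k(\ii)}$ and the equivalence of operator norm and trace on positive operators in a fixed finite dimension, $\no{T_k^*(\ii)}$ is uniformly bounded too. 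The contrapositive of Theorem \ref{unboundednesstrueapprox} then produces an exact completely positive extension; equivalently, the additional constraint $\tr{C}\leq Bd$ on the Choi operator renders the feasible set of the SDP (\ref{deltasdpinproof1})--(\ref{deltasdpinproof3}) compact, so the infimum $\Delta=0$ is actually attained.

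The main technical point is verifying that the sandwiching $\Phi_P$ and its inverse preserve complete positivity at the subspace level — a routine check once the right conjugation has been identified, but the natural place where strict positivity (rather than mere positivity) of $P$ is essential, since $P^{-1/2}$ is defined precisely there. The only substantial input from outside the paper is the classical operator-system version of Arveson's theorem for the first proof; the second route is essentially self-contained, drawing only on the SDP duality of Proposition \ref{CPdualityprop} and Theorem \ref{unboundednesstrueapprox}.
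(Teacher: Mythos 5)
Your main argument is correct and coincides essentially with the paper's own second proof: conjugating by $P^{\pm 1/2}$ to reduce to an operator system $\cS'=P^{-1/2}\cS P^{-1/2}$, applying the classical Arveson extension theorem, and conjugating back. Your alternative SDP-based route is also sound and is close in spirit to the paper's third proof (which invokes Slater's condition directly), merely reaching the attainment of the infimum via Theorem \ref{unboundednesstrueapprox} and compactness of the trace-bounded Choi set instead.
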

\begin{proof}[First proof of Theorem \ref{th:exactcpextension} (using the Hahn-Banach theorem)]With the stated assumptions, $T$ is Hermiticity-preserving: if $X\in\cS$ is Hermitian, then there exists $\epsilon>0$ such that $P+\epsilon X\geq0$.
So, $T(P+\epsilon X)=T(P)+\epsilon T(X)$ is also positive and in particular Hermitian, and as $T(P)$ is Hermitian, $T(X)$ will be as well.
In the following we can thus first focus on extensions from the real vector space $\cS_h:=\cS\cap S_d$ to the space $S_d$ of Hermitian matrices in $\cM_d$. Later on, the action of $T$ on $S_d$ will determine its action on all of $\cM_d$.

In order to use the Hahn-Banach theorem, we need a sublinear functional. For this, fix a strictly positive operator $Q\in\cM_{d'}$ (e.g.~$Q=\ii$), and define the following functional for $B\in S_d\otimes S_{d'}$ (cf.~\cite{hilbertmetricpaper}):
\be\sup(B/P\otimes Q)\,:=\,\inf\big\{\lambda\geq0\,\big|\,\lambda\,P\otimes Q\geq B\big\}~.\nonumber\ee
It is easy to see that this functional is sublinear in $B$, i.e.~positive homogeneous and subadditive.
In particular, the infimum in the last equation is always finite as $P\otimes Q$ is strictly positive.

Define also a linear functional on the subspace $\cS_h\otimes S_{d'}\subseteq S_d\otimes S_{d'}$:
\be\tau(B)\,:=\,\bra{\Omega_{d'}}\,(T\otimes{\rm id})(B)\,\ket{\Omega_{d'}}\label{definelinearfunctionalonsubspace}~,\ee
where $\ket{\Omega_{d'}}=\sum_{i=1}^{d'}\ket{ii}$ is the unnormalized maximally entangled state. For linear maps $T$ not defined on the whole matrix algebra, this functional is analogous in spirit to the Choi operator, cf.~Eq.~(\ref{choicorrespondenceeqn}).

By definition, we have $\sup(B/P\otimes Q)\,P\otimes Q\,-\,B\geq0$ for all $B\in S_d\otimes S_{d'}$. Employing this inequality for any $B\in\cS_h\otimes S_{d'}$, exploiting complete positivity of $T$, and using the linearity of $\tau$ in (\ref{definelinearfunctionalonsubspace}), we obtain:
\be\tau(B)\,\leq\,\tau(P\otimes Q)\,\sup(B/P\otimes Q)\qquad\forall B\in\cS_h\otimes S_{d'}~.\nonumber\ee

With this domination of the linear functional by a sublinear functional as the last ingredient, we can apply the Hahn-Banach theorem, see e.g.~\cite{conwayfunctana}, to get the existence of a linear functional $\sigma:S_d\otimes S_{d'}\ra\R$ which extends $\tau$, meaning $\sigma(B)=\tau(B)$ for all $B\in\cS_h\otimes S_{d'}$, and which satisfies the domination property on all of $S_d\otimes S_{d'}$:
\be\sigma(B)\,\leq\,\sigma(P\otimes Q)\,\sup(B/P\otimes Q)\qquad\forall B\in S_d\otimes S_{d'}~.\nonumber\ee
Via $\C$-linearity, $\sigma$ can be extended to a linear functional on the complex vector space $\cM_d\otimes\cM_{d'}$, and through Riesz' representation theorem it can be written as $\sigma(B)=\tr{CB}$ for some $C\in\cM_d\otimes\cM_{d'}$. If $R\in\cM_d\otimes\cM_{d'}$ is positive, then plugging $B:=-R$ into the last equation and noting $\sup(-R/P\otimes Q)=0$ gives $0\leq\sigma(R)=\tr{CR}$ for all $R\geq0$. Thus, $C\geq0$.

Define now, in close analogy with Eq.~(\ref{choicorrespondenceeqn}), the linear map $\widetilde{T}:\cM_d\ra\cM_{d'}$:
\be\widetilde{T}(X)\,:=\,{\rm tr}_1[C(X\otimes\ii)]^T\,=\,{\rm tr}_1[C^T(X^T\otimes\ii)]~.\nonumber\ee
Then $\widetilde{T}$ extends $T$ since, for all $X\in\cS$, the inner product with all $D\in\cM_{d'}$ agrees:
\bea\tr{\widetilde{T}(X)\,D}&=&{\rm tr}_2\big[{\rm tr}_1[C(X\otimes\ii)]^T\,D\big]\,=\,\tr{C(X\otimes D^T)}\,=\,\sigma(X\otimes D^T)\nonumber\\
&=&\tau(X\otimes D^T)\,=\,\bra{\Omega_{d'}}\,T(X)\otimes D^T\,\ket{\Omega_{d'}}\,=\,\tr{T(X)\,D}~.\nonumber\eea
Also, $\widetilde{T}$ is completely positive as $C^T$ in its defining equation is positive \cite{Choi75}.
\end{proof}

We present a second proof which uses a well-known extension result from operator algebra theory:
\begin{proof}[Second proof of Theorem \ref{th:exactcpextension} (using results on operator systems)]Define
\be\cS'\,:=\,P^{-1/2}\,\cS\,P^{-1/2}\,=\,\big\{P^{-1/2}XP^{-1/2}\,\big|\,X\in\cS\big\}\,\subseteq\,\cM_d~,\nonumber\ee
which is clearly an operator system.
The map
\be
T':\cS'\ra\cM_{d'},~T'(X)\,:=\,T(P^{1/2}\,X\,P^{1/2})\nonumber
\ee
is linear and completely positive as $T$ is completely positive.
So, $T'$ is a completely positive map on an operator system $\cS'$ and therefore we can apply the usual extension theorem from the theory of operator algebras (e.g.~\cite{paulsen}, Theorem 6.2) to get the existence of a completely positive extension $T'':\cM_d\ra\cM_{d'}$ of $T'$. Then, defining $\widetilde{T}(X):=T''(P^{-1/2}XP^{-1/2})$ for $X\in\cM_d$, we have a completely positive map that extends $T$ since, for all $X\in\cS$,
\begin{equation*}
\widetilde{T}(X)=T''(P^{-1/2}XP^{-1/2})=T'(P^{-1/2}XP^{-1/2})=T(X) \, .
\end{equation*}
\end{proof}

Note that this second proof is just a reformulation of the first one, as implicitly it also uses the Hahn-Banach theorem to guarantee the existence of the extension $T''$.
We included this proof here as it uses a well-known extension result from operator algebra theory, and in order to show how naturally and easily the assumption of an operator system can be replaced by the assumption of a Hermitian subspace containing a full-rank state, which is a more relevant case for quantum theory.

The third proof is phrased in the language of semidefinite programming and convex analysis, which guided our approach to relate the notions of complete positivity and of approximability by completely positive maps in Sections \ref{CPnesssect} and \ref{sectapproxextensions}. We exhibit this proof here also in order to once explain the detailed reasoning via strong SDP duality that is useful in the subsequent treatment.

\begin{proof}[Third proof of Theorem \ref{th:exactcpextension} (using Slater's constraint qualification for SDPs)]
Like at the beginning of the proof of Theorem \ref{th:cpisapprox}, we can find a Hermitian basis $\{X_i\}_i$ of $\cS$ and set $Y_i:=T(X_i)$, and now we can even choose $X_1=P>0$.
This implies that the SDP (\ref{CPSDPminimizingfunction})--(\ref{onepossiblecompactnesscondition}), which decides whether $T$ is completely positive, enjoys the property of being \emph{strictly feasible} \cite{rockafellar,convexoptimization}, meaning that there exists an assignment for the variables $H_i$ such that the constraints (\ref{CPSDPconditions})--(\ref{onepossiblecompactnesscondition}) are satisfied with strict inequality. For example, setting $H_1=\epsilon\ii$ for small enough $\epsilon>0$ and $H_i=0$ for $i\geq2$ suffices.

Slater's constraint qualification from convex analysis \cite{rockafellar,convexoptimization} then ensures that the optimum of the dual SDP (\ref{eq:delta}) is attained (and that the optima of both SDPs satisfy $\Delta=-\Gamma$, which we already know from Proposition \ref{CPdualityprop}). In an elementary way, this can be proven with the separating hyperplane theorem, analogously to the reasoning of optimal variables $\hat{H}_i$ in Appendix \ref{dualityproofapp}.

As $T$ is completely positive by assumption, we have $\Delta=-\Gamma=0$, so that the attaining of the infimum in (\ref{eq:delta}) guarantees the existence of a completely positive map $\widetilde{T}:\cM_d\ra\cM_{d'}$ satisfying $\widetilde{T}(X_i)=Y_i=T(X_i)$ for all $i$, i.e.~$\widetilde{T}$ extends $T$.
This proves the theorem.
\end{proof}
The latter proof actually establishes a more general result about the optimizations from Proposition \ref{CPdualityprop}, namely that the infimum in Eq.~(\ref{eq:delta}) is attained whenever $\cS=\cS^\dagger$ contains a strictly positive operator. In other words, in this case there exist ``best'' completely positive approximations.

\begin{remark}\label{dualityinoppositedirection}The traditional application of the Hahn-Banach theorem directly to the linear functional $\tau$ (\ref{definelinearfunctionalonsubspace}) on $\cS_h\otimes\cS_{d'}$ (as in the first proof of Theorem \ref{th:exactcpextension}) \emph{cannot} yield the approximation result in Theorem \ref{th:cpisapprox}. For this, instead, the Hahn-Banach theorem has to be applied to a functional on the space given by the \emph{dual} problem, like the separating hyperplane theorem in the proof of Proposition \ref{CPdualityprop} (see Appendix \ref{dualityproofapp}) was applied to the set $\cP$ defined by the constraints of the \emph{dual} problem. In this sense, the approximation result in Theorem \ref{th:cpisapprox} seems to improve conceptually on the usual completely positive extension results \cite{paulsen}.
\end{remark}

\subsection{Subspace supported by density operators}\label{sec:spannedbydensityops}

In a quantum setting, the map $T$ might be given by its action on a set of quantum states.
For example, this is the case in the situation described in the Introduction: Suppose that for a set of input quantum states $\rho_i$, their respective outcome states $\rho'_i$ after a time-evolution (possibly a black box) have been determined through state tomography.
This fixes a map $T:\rho_i\mapsto\rho'_i$ on the subset $\cS={\rm span}\{\rho_i\}_i$ in the Schr\"odinger picture.
One can now ask whether this map can possibly originate from a completely positive evolution on the whole system, the minimum requirement for a memoryless quantum mechanical time-evolution.
The situation that $\cS$ is spanned by positive operators arises also in the setting where, in the Heisenberg picture, the evolution of a set of POVM elements has been determined.

Note that in this case $\cS$ might still not contain a positive full-rank operator, which is required to apply Theorem \ref{th:exactcpextension}, and even less so the identity $\ii$. However, a completely positive extension exists even in this situation if only $T$ was completely positive.
We remark that this is exactly the result covered by Theorem 3 in \cite{jencovacp}.
Further, we reiterate that if merely the image $\rho'_i$ of each input state $\rho_i$ is known, then one can apply the SDP (\ref{CPSDPminimizingfunction})--(\ref{onepossiblecompactnesscondition}) with $X_i:=\rho_i$, $Y_i:=\rho'_i$ to check this map for both linearity and complete positivity, returning $\Gamma<0$ iff either requirement is violated.

\begin{corollary}[Subspace spanned by positive operators; Theorem 3 in \cite{jencovacp}]\label{corollaryspannedbydensitymatrices}
Let $\cS={\rm span}\{P_i\}_i$ be spanned by positive operators $P_i\in\cM_d$ ($i=1,\ldots,N$), and assume that $P_i\mapsto P'_i\in\cM_{d'}$ defines a completely positive map $T:\cS\ra\cM_{d'}$.
Then there exists a completely positive map $\widetilde{T}:\cM_d\ra\cM_{d'}$ extending $T$.
\end{corollary}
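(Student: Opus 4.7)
The plan is to reduce the claim to Theorem \ref{th:exactcpextension} by compressing to the support of the $P_i$, where the analogue of a strictly positive operator is available.

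Let $P := \sum_{i=1}^N P_i$ and let $\Pi\in\cM_d$ denote the orthogonal projection onto $\mathrm{ran}(P) = \sum_i \mathrm{ran}(P_i)$. First I would observe that every element of $\cS$ is supported by $\Pi$: any $X \in \cS$ is a linear combination $X = \sum_i c_i P_i$ with $\Pi P_i \Pi = P_i$, hence $\Pi X \Pi = X$. Therefore $\cS$ can be regarded as a Hermitian subspace of the compressed matrix algebra $\Pi\cM_d\Pi \cong \cM_r$, where $r = \mathrm{rank}(\Pi)$. Within this smaller algebra the element $P \in \cS$ is strictly positive, since its range is all of $\Pi\C^d$ by construction.

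Next I would verify that $T$ remains completely positive when viewed as a map on $\cS \subseteq \Pi\cM_d\Pi$, rather than $\cS \subseteq \cM_d$. This is immediate from Definition \ref{def:CPmaponS}: the positive elements of $\cS \otimes \cM_n$ are the same regardless of the ambient algebra, because $\Pi\cM_d\Pi \otimes \cM_n$ is a corner of $\cM_d \otimes \cM_n$ and positivity is preserved and reflected by the natural inclusion. Thus Theorem \ref{th:exactcpextension} applies to $T$ considered as a completely positive map on the Hermitian subspace $\cS$ of $\Pi\cM_d\Pi$, which contains the strictly positive element $P$. We obtain a completely positive extension $\hat{T}:\Pi\cM_d\Pi \to \cM_{d'}$ with $\hat{T}(P_i) = P_i'$ for all $i$.

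Finally, I would define $\widetilde{T}:\cM_d \to \cM_{d'}$ by
\be
\widetilde{T}(X) \,:=\, \hat{T}(\Pi X \Pi)\,, \qquad X \in \cM_d~.\nonumber
\ee
The map $X \mapsto \Pi X \Pi$ is completely positive as a compression, so $\widetilde{T}$ is completely positive as the composition of two completely positive maps. For $X \in \cS$ we have $\Pi X \Pi = X$, so $\widetilde{T}(X) = \hat{T}(X) = T(X)$; hence $\widetilde{T}$ extends $T$.

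No step should pose a serious obstacle; the only subtlety is the conceptual one of checking that complete positivity is preserved when the ambient algebra is shrunk from $\cM_d$ to $\Pi\cM_d\Pi$, which is handled by the observation that tensoring with $\cM_n$ still yields a corner inclusion.
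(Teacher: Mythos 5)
Your proof is correct and follows essentially the same route as the paper's: compress to the corner $\Pi\cM_d\Pi$ where $\sum_i P_i$ is strictly positive, apply Theorem \ref{th:exactcpextension} there, and pre-compose with the completely positive compression $X\mapsto\Pi X\Pi$. Your explicit check that complete positivity of $T$ is unaffected by shrinking the ambient algebra to the corner is a point the paper leaves implicit, but the argument is otherwise identical.
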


\begin{proof}
First note that $\cS\subseteq\cM_d$ is a Hermitian subspace, and that the map $T$ is Hermiticity-preserving as $P'_i=T(P_i)\geq0$ and the $P'_i$ are thus in particular Hermitian. This ensures applicability of the SDP (\ref{CPSDPminimizingfunction})--(\ref{onepossiblecompactnesscondition}).

Define $P\in\cM_d$ to be the orthogonal projection onto the support of $\sum_{i=1}^NP_i$.
Then $\cS\subseteq P\cM_dP=:\cM'$. As $\cM'$ is a matrix algebra and $\sum_iP_i\in\cS$ has full-rank in $\cM'$, Theorem \ref{th:exactcpextension} guarantees the existence of a completely positive extension $T':\cM'\ra\cM_{d'}$ of $T$.
Then $\widetilde{T}:\cM_d\ra\cM_{d'}$ can be defined by $\widetilde{T}(X):=T'(PXP)$ for all $X\in\cM_d$.
\end{proof}

Note that Theorem \ref{th:exactcpextension} is subsumed by Corollary \ref{corollaryspannedbydensitymatrices}, since if $P$ is a strictly positive element of $\cS$ and if the Hermitian matrices $\{X_i\}_i$ span $\cS$, then for small enough $\epsilon>0$ the set $\{P\}\cup\{P_i:=P-\epsilon X_i\}_i$ consists of positive operators spanning $\cS$.

\subsection{Subspace containing no positive operator}\label{subspacewopositiveoperators}

Here we present another situation in which a completely positive extension exists and which is not covered by the previous ideas.
As we have seen in Proposition \ref{prop:oposelementthencp}, on a space $\cS\subset\cM_d$ not containing any nonzero positive elements any linear map $T:\cS\ra\cM_{d'}$ is completely positive.
We will prove here that then (and under natural Hermiticity assumptions) $T$ possesses a completely positive extension $\widetilde{T}$ to the full algebra $\cM_d$, i.e.~beyond the mere approximation guaranteed by Theorem \ref{th:cpisapprox} {\it(ii)}.
An example of this situation is the case where $\cS$ is spanned by spin operators whose time-evolution in the Heisenberg picture is known, cf.~Proposition \ref{prop:oposelementthencp}.
The following proof combines the characterization of complete positivity via Theorem \ref{th:CPnessviaSDPtheorem} with the extension result from Theorem \ref{th:exactcpextension}.

\begin{proposition}[Completely positive extensions on subspaces without positive elements]\label{cpextensionwoposelement}Let $\cS\subset\cM_d$ be a Hermitian subspace containing no positive operator except $0$, and let $T:\cS\ra\cM_{d'}$ be a Hermiticity-preserving linear map. Then there exists a completely positive linear map $\widetilde{T}:\cM_d\ra\cM_{d'}$ extending $T$.
\end{proposition}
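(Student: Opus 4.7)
The strategy is to enlarge the domain to the Hermitian subspace $\cS':=\cS+\R\ii$, extend $T$ to a map $T'$ on $\cS'$ by declaring $T'(\ii):=C\,\ii$ for a sufficiently large constant $C>0$, and then apply Theorem~\ref{th:exactcpextension} to $T'$: since $\ii\in\cS'$ is a strictly positive operator, once $T'$ is shown to be completely positive, Theorem~\ref{th:exactcpextension} will produce a completely positive extension $\widetilde{T}:\cM_d\to\cM_{d'}$ of $T'$, and hence of $T$. Note that the hypothesis forces $\ii\notin\cS$, so $\cS'=\R\ii\oplus\cS$ and the assignment $T'(\alpha\ii+X):=C\alpha\,\ii+T(X)$ (for $\alpha\in\R$, $X\in\cS$) is unambiguous and Hermiticity-preserving.

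The crux will be to pick $C$ so that $T'$ is completely positive on $\cS'$, uniformly in the dimension $n$ appearing in the complete positivity test. After fixing a Hermitian basis $\{X_i\}_{i=1}^N$ of $\cS$, Theorem~\ref{th:CPnessviaSDPtheorem} applied to $T'$ with the basis $\{\ii,X_1,\ldots,X_N\}$ reduces the verification to the scalar implication
\begin{equation*}
\ii\otimes K + \sum_i X_i\otimes H_i\,\geq\,0\quad\Longrightarrow\quad C\,\tr{K} + \sum_i \tr{T(X_i)^T H_i}\,\geq\,0
\end{equation*}
for all Hermitian $K,H_i\in\cM_n$ and all $n\in\nat$. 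The key observation to exploit is that the ``no positive operator'' hypothesis forces $0$ to lie in the interior of the convex hull of the joint numerical range $W:=\{(\bra{\psi}X_i\ket{\psi})_i:\ket{\psi}\in\C^d,\|\psi\|=1\}\subset\R^N$: otherwise, the separating hyperplane theorem applied to the compact convex set $\mathrm{conv}(W)$ would furnish a nonzero $c\in\R^N$ with $c\cdot w\geq 0$ for all $w\in W$, hence $\sum_ic_iX_i\geq 0$ in $\cS$. So I would fix $\epsilon>0$ with the Euclidean ball $B_\epsilon(0)\subseteq\mathrm{conv}(W)$.

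From the positivity of $Q:=\ii\otimes K+\sum_iX_i\otimes H_i$ my plan is to extract two bounds. First, for any unit $\phi\in\hi_n$, the compression $(\ii\otimes\bra{\phi})Q(\ii\otimes\ket{\phi})=\bra{\phi}K\ket{\phi}\,\ii+\sum_i\bra{\phi}H_i\ket{\phi}\,X_i$ is a positive element of $\cS'$; if $\bra{\phi}K\ket{\phi}$ were negative, then $\sum_i\bra{\phi}H_i\ket{\phi}X_i\geq -\bra{\phi}K\ket{\phi}\,\ii>0$ would be a nonzero positive element of $\cS$, contradicting the hypothesis, so $K\geq 0$. Second, the scalar inequality $\bra{\phi}K\ket{\phi}+\sum_i\bra{\phi}H_i\ket{\phi}\bra{\psi}X_i\ket{\psi}\geq 0$ (the expectation of $Q$ in $\ket{\psi}\otimes\ket{\phi}$), averaged over any convex combination with $\sum_j\lambda_j(\bra{\psi_j}X_i\ket{\psi_j})_i=-\epsilon u$ (available for every unit $u\in\R^N$ by the ball inclusion), yields $u\cdot(\bra{\phi}H_i\ket{\phi})_i\leq\bra{\phi}K\ket{\phi}/\epsilon$, and thus the operator inequalities $-K/\epsilon\leq H_i\leq K/\epsilon$ and the $n$-independent bound $\|H_i\|\leq\tr{K}/\epsilon$. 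Combining this with H\"older's inequality will give $|\sum_i\tr{T(X_i)^T H_i}|\leq\bigl(\sum_i\|T(X_i)\|_1\bigr)\tr{K}/\epsilon$, so any $C\geq\tfrac{1}{\epsilon}\sum_i\|T(X_i)\|_1$ makes the required implication hold, establishing that $T'$ is completely positive and concluding the proof via Theorem~\ref{th:exactcpextension}. The main obstacle is precisely the uniform-in-$n$ bound $\|H_i\|\leq\tr{K}/\epsilon$, for which the geometric consequence of the hypothesis (full-dimensionality of $\mathrm{conv}(W)$ around $0$) is essential; without it, no $n$-free control is extractable from the positivity of $Q$ alone.
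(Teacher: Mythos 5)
Your proof is correct, and its overall skeleton is exactly the paper's: extend $T$ to $\cS'={\rm span}\{\cS,\ii\}$ by $\ii\mapsto C\ii$, show the extension is completely positive for $C$ large enough by verifying the implication of Theorem \ref{th:CPnessviaSDPtheorem}, observe $K\geq0$ (the paper's $H_0\geq0$) exactly as you do, and finish with Theorem \ref{th:exactcpextension}. Where you genuinely diverge is the key quantitative estimate bounding the $H_i$ by $K$. The paper shows that every nonzero $\sum_i X_i\otimes H_i$ has a negative eigenvalue and then sets $\lambda:=\sup\{\lambda_{min}(\sum_i X_i\otimes H_i)\mid \sum_i\|H_i\|=1\}<0$, the supremum being attained by compactness of the normalized tuples in the fixed-dimensional space $(\cM_{d'})^{N}$; this yields $\tr{H_0}\geq-\lambda\sum_i\|H_i\|$ with a nonconstructive constant. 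You instead convert the hypothesis into the statement that $0$ lies in the interior of the convex hull of the joint numerical range of the $X_i$, which gives the explicit operator inequalities $-K/\epsilon\leq H_i\leq K/\epsilon$ and hence an explicit admissible $C$. Your bound is dimension-free in the size of the $H_i$, but that extra strength is not actually needed: the whole point of Theorem \ref{th:CPnessviaSDPtheorem} is that the test matrices live in $\cM_{d'}$, so your stated quantification ``for all $n$'' is both unnecessary and, strictly speaking, ill-formed (the expression $\tr{T(X_i)^TH_i}$ only parses for $n=d'$). Two small touch-ups: in the separating-hyperplane step note that $\sum_i c_iX_i\neq0$ because the $X_i$ form a basis, so the contradiction with the hypothesis is genuine; and $\cS'$ should be taken as the complex span of $\cS$ and $\ii$, with $T'$ determined by complex linearity from its values on Hermitian elements.
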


\begin{proof}On $\cS':={\rm span}\{\cS,\ii\}$ we define a linear map $T'$ extending $T$ and satisfying $T'(\ii)=c\ii$, and show that it is completely positive for some $c>0$.
 As $T'$ satisfies the conditions of Theorem \ref{th:exactcpextension}, there exists an extension $\widetilde{T}$ of $T'$, and thus of $T$, as desired.

As before, let $\{X_i\}_i$ be a basis of $\cS$ consisting of Hermitian matrices. Now we want to apply Theorem \ref{th:CPnessviaSDPtheorem} to show that $T'$ is completely positive for large enough $c>0$. Thus, let $H_0,H_i\in\cM_{d'}$ such that
\be\ii\otimes H_0\,+\,\sum_i X_i\otimes H_i\,\geq0~.\nonumber\ee
As the $X_i$ and $\ii$ are linearly independent, all $H_0,H_i$ have to be Hermitian.
If $\bra{\psi}H_0\ket{\psi}<0$ then the last inequality would give
\begin{equation*}
\sum_i\bra{\psi}H_i\ket{\psi}X_i\geq-\bra{\psi}H_0\ket{\psi}\ii>0 \, ,
\end{equation*}
implying the existence of a nonzero positive operator in $\cS$.
Thus, $H_0$ is positive.

If not all $H_i$ are zero then $P:=\sum_i X_i\otimes H_i$ is nonzero since the $X_i$ are linearly independent.
And in fact, $P$ has a negative eigenvalue, since in the contrary case that $P$ would be positive, its partial trace
\begin{equation*}
Q:={\rm tr}_2[P]=\sum_i\tr{H_i}X_i
\end{equation*}
 would be positive, which however is an element of $\cS$ and thus would have to be $0$, such that $P$ would have to be zero due to $\tr{P}={\rm tr}_1[Q]=0$.
This however contradicts the previous observation that $P\neq0$.
So the quantity
\be\lambda\,:=\,\sup\Big\{\lambda_{min}\Big(\sum_i X_i\otimes H_i\Big)\,\Big|\,H_i=H_i^\dagger,\,\sum_i||H_i||=1\Big\}\,<\,0\nonumber\ee
is strictly smaller than $0$ as it is the supremum of a continuous and negative function over a compact set and is thus attained ($\lambda_{min}(\cdot)$ denotes the smallest eigenvalue of a Hermitian matrix).

So, by positivity of $H_0$,  by the first inequality above, and by positive homogeneity of $\lambda_{min}(\cdot)$, we get
\be\tr{H_0}\,\geq\,||H_0||\,=\,||\ii\otimes H_0||\,\geq\,-\lambda_{min}\Big(\sum_iX_i\otimes H_i\Big)\,\geq\,-\lambda\sum_i||H_i||~.\nonumber\ee
Now we are ready to verify the condition of Theorem \ref{th:CPnessviaSDPtheorem} for $T'$:
\bea
\tr{T'(\ii)^TH_0}\,+\,\sum_i\tr{T'(X_i)^TH_i}&=&c\,\tr{H_0}\,+\,\sum_i\tr{T(X_i)^TH_i}\nonumber\\
&\geq&-\lambda c\,\sum_i||H_i||\,-\,\sum_i||T(X_i)||_2\,||H_i||_2\nonumber\\
&\geq&\big(-\lambda c-\sqrt{d'}\max_j||T(X_j)||_2\big)\,\sum_i||H_i||~\,\geq~\,0\nonumber\eea
for any chosen $c\geq\sqrt{d'}\max_j||T(X_j)||_2/(-\lambda)$.
\end{proof}

From the point of view of convex analysis, the situation covered by Proposition \ref{cpextensionwoposelement} is an example where Slater's constraint qualification is not satisfied for the SDP (\ref{CPSDPminimizingfunction})--(\ref{onepossiblecompactnesscondition}) but its dual (\ref{eq:delta}) nevertheless attains its optimum if $\Delta=0$.

\begin{remark}[Freedom in the extension $\widetilde{T}$ in Proposition \ref{cpextensionwoposelement}]Under the assumptions of Proposition \ref{cpextensionwoposelement}, there exists a codimension-1 Hermitian subspace $\cS_1\subset\cM_d$ that includes $\cS\subseteq\cS_1$ but does not contain any positive operator except $0$. We show this in the following paragraph. Thus, one can first choose an \emph{arbitrary} (linear and Hermiticity-preserving) extension of $T$ to $\cS_1$, and then extend this in the remaining direction to a completely positive map $\widetilde{T}$ on $\cM_d$ according to Proposition \ref{cpextensionwoposelement}.

If $\cS$ does not contain any positive element except $0$, then we first prove that its orthogonal complement $\cS^\perp$ contains a strictly positive operator. Assume, to the contrary, that $\cS^\perp$ and $\cC^\circ$ (the interior of the cone $\cC$ of positive elements) are disjoint. Then, by the separating hyperplane theorem for convex sets \cite{rockafellar}, there exists a Hermitian operator $H\neq0$ and a real number $h$ such that
\bea\tr{HX}&\leq&h\quad\forall X=X^\dagger\in\cS^\perp~,\nonumber\\
\tr{HX}&\geq&h\quad\forall X\in\cC^\circ~.\nonumber\eea
As $0$ is in the closure of both sets, we have $h=0$ and thus the second inequality implies that $H$ is an element of the dual cone $\cC^*=\cC$, i.e.~$H$ is positive and nonzero. On the other hand, for any $X=X^\dagger\in\cS^\perp$ we have $(-X)=(-X)^\dagger\in\cS^\perp$, so that by $\tr{HX}\leq0$ and $\tr{H(-X)}\leq0$ one gets $\tr{HX}=0$, i.e.~$H\in(\cS^\perp)^\perp=\cS$. This contradicts the presupposition that $\cS$ does not contain any nonzero positive element, and so our assumption $\cS^\perp\cap\cC^\circ=\emptyset$ was wrong.

Let thus be $K\in\cS^\perp\cap\cC^\circ$, and define the codimension-1 Hermitian subspace $\cS_1:=\{X=X^\dagger\,\big|\,\tr{KX}=0\}=K^\perp$. As $K\in\cS^\perp$, we obviously have $\cS_1\supseteq(\cS^\perp)^\perp=\cS$, and as $K$ is strictly positive we have $\tr{KB}>0$ for any nonzero positive $B$, which are thus not contained in $\cS_1$.
\end{remark}

\begin{example}[Permutations of Pauli spin operators]\label{examplepermutations}As an illustration of Proposition \ref{cpextensionwoposelement}, here we take $\cS:={\rm span}\{\sigma_x,\sigma_y,\sigma_z\}\subset\cM_2$ and first consider the linear map on $\cS$ that permutes the Pauli spin operators cyclically:
\be T^{(123)}:\,\sigma_x\mapsto\sigma_y,~\sigma_y\mapsto\sigma_z,~\sigma_z\mapsto\sigma_x~.\label{actiononspinobservales1example}\ee
According to Proposition \ref{cpextensionwoposelement} this map has a completely positive extension, and in fact one can see that it can be implemented as a unitary conjugation $X\mapsto UXU^\dagger$ on $\cM_2$, with $U=(\ii-i\sigma_x-i\sigma_y-i\sigma_z)/2$.
In particular, this map is unital, and thus a quantum channel in the Heisenberg picture whose action on the spin observables is determined by \eqref{actiononspinobservales1example}.

On the other hand, even though the map
\be T^{(12)}:\,\sigma_x\mapsto\sigma_y,~\sigma_y\mapsto\sigma_x,~\sigma_z\mapsto\sigma_z\label{actiononspinobservales2example}\ee
has a completely positive extension by Proposition \ref{cpextensionwoposelement}, this extension \emph{cannot} be chosen unital, which can be seen in the following way. From the representation $\ket{\Omega}\bra{\Omega}=(\sigma_x\otimes\sigma_x-\sigma_y\otimes\sigma_y+\sigma_z\otimes\sigma_z+\ii\otimes\ii)/2$ of the (unnormalized) maximally entangled state, the Choi matrix of the map $T^{(12)}$ extended by $\ii\mapsto c\ii$ is
\be C^{(12)}\,=\,\big(\sigma_y\otimes\sigma_x-\sigma_x\otimes\sigma_y+\sigma_z\otimes\sigma_z+c\,\ii\otimes\ii\big)/2~,\nonumber\ee
which is positive iff $c\geq3$.
Similar reasoning for the map $T^{(123)}$ above gives that its extension with $\ii\mapsto c\ii$ is completely positive iff $c\geq1$, the case $c=1$ exactly being the unitary conjugation (in this case, the Choi matrix has rank $1$ and the map is unital, so it is unitary).
It is also interesting to note that, even though $T^{(12)}$ does not have a completely positive unital extension, it does have a \emph{positive} unital extension, namely the antiunitary map $X\mapsto VX^TV^\dagger$ with $V={\rm diag}(1,i)$.
Furthermore, if any of the three input-output relations in (\ref{actiononspinobservales2example}) is dropped, then a completely positive extension exists and can chosen to be a unitary conjugation.
\end{example}

\section{Variations and generalizations of the extension problem}\label{variationssect}
In this section we will discuss extension questions that encompass aspects beyond complete positivity, on the one hand imposing further constraints such as trace-preservation, but on the other also loosing constraints such as the requirement of a deterministic quantum operation on the inputs.

We will in this section always require $\cS\subseteq\cM_d$ to be a Hermitian subspace and $T:\cS\ra\cM_{d'}$ a Hermiticity-preserving linear map.

\subsection{Completely positive and trace-preserving extensions}\label{cptpsubsection}

A question of particular relevance for quantum theory is the existence of a completely positive and trace-preserving extension $\widetilde{T}:\cM_d\ra\cM_{d'}$ of a linear map $T$ defined on a subspace $\cS$, i.e.~a quantum channel extension. A map $T$ is trace-preserving iff $T^*(\ii)=\ii$. As this constraint is affine (in $T$), the question of the existence of a completely positive trace-preserving extension can be phrased as a semidefinite program (SDP) as well, akin to Eq.~(\ref{eq:delta}) or Eqs.~(\ref{deltasdpinproof1})--(\ref{deltasdpinproof3}) for the case without the requirement of trace-preservation. There are two obvious ways how to do this: {\it(a)} One could restrict the infimum in (\ref{eq:delta}) to all completely positive \emph{and} trace-preserving linear maps $\widetilde{T}$, or equivalently add the trace-preservation constraint ${\rm tr}_1[C]=\ii$ to the SDP (\ref{deltasdpinproof1})--(\ref{deltasdpinproof3}); note that the set of completely positive trace-preserving maps is compact, so that then the optimum in (\ref{eq:delta}) and (\ref{deltasdpinproof1}) is attained. {\it(b)} In the definition of the functional $\Delta(\widetilde{T})$ in (\ref{eq:distancemeasure}) one could add a term $||\widetilde{T}^*(\ii)-\ii||$ that penalizes non-trace-preserving maps and let the infimum (\ref{eq:delta}) still range over all completely positive maps. As we will see below, the infimum in (\ref{eq:delta}) is attained in case {\it(b)} as well. We also mention that, if the additional penalty term is weighted by $w>0$, then {\it(b)} goes over to {\it(a)} by formally setting $w=\infty$ (see below).

Among all of those choices, for definiteness consider the following way to formulate this question as an SDP. Choose a Hermitian basis $\{X_i\}_i$ of $\cS$, define $Y_i:=T(X_i)$, and choose a weight $w>0$.
Then define
\bea\Delta&:=&\inf\big\{\Delta(\widetilde{T})\,\big|\,\widetilde{T}:\cM_d\ra\cM_{d'}~\text{completely~positive}\big\}~,\label{definedeltaforprimal}\\
\text{where}\quad\Delta(\widetilde{T})&:=&w||\widetilde{T}^*(\ii)-\ii||\,+\,\sum_i||\widetilde{T}(X_i)-Y_i||_1~.\label{deltaTforcpandtp}\eea
Here, we chose the operator norm in the Heisenberg picture and the trace norm in the Schr\"odinger picture. As mentioned in Section \ref{sec:intro}, for given in- and outputs $X_i, Y_i$, the value $\Delta$ can be seen as a quantitative measure of the least amount of non-Markovianity necessary to implement the evolution $X_i\mapsto Y_i$ (for other approaches to this question, cf.~\cite{assessing,breuermarkov}).

This optimization can be phrased as an SDP in the following way (cf.~(\ref{deltasdpinproof1})--(\ref{deltasdpinproof3})):
\bea\Delta~=&\text{infimum~of}~&w\lambda\,+\,\sum_i\tr{P_i+Q_i}~,\label{primaloftracepreservingsdp1}\\
&\text{subject~to}~&C,\,P_i,\,Q_i\,\geq\,0~,\\
&&P_i-Q_i\,=\,{\rm tr}_2[C(\ii\otimes X_i^T)]-Y_i\quad\forall i~,\label{primaloftracepreservingsdp3}\\
&&-\lambda\ii\,\leq\,{\rm tr}_1[C]-\ii\,\leq\,\lambda\ii~.\label{primaloftracepreservingsdp4}\eea
The dual of this SDP is (cf.~(\ref{CPSDPminimizingfunction})--(\ref{onepossiblecompactnesscondition})):
\bea\Gamma\,=\,-\Delta=&\text{infimum~of}~&\tr{H_0}\,+\,\sum_i\tr{Y^T_iH_i}~,\label{dualoftracepreservingsdp1}\\
&\text{subject~to}~&H_0\otimes\ii\,+\,\sum_i X_i\otimes H_i\,\geq\,0~,\label{dualoftracepreservingsdp2}\\
&&||H_i||\,\leq\,1\quad\forall i\geq1~,\label{dualoftracepreservingsdp3}\\
&&||H_0||_1\,\leq\,w~.\label{dualoftracepreservingsdp4}\eea

Note that here, unlike in Section \ref{sectapproxextensions} generally, Slater's constraint qualification condition does not only hold for the primal problem (\ref{primaloftracepreservingsdp1})--(\ref{primaloftracepreservingsdp4}) but also for the dual (\ref{dualoftracepreservingsdp1})--(\ref{dualoftracepreservingsdp4}), e.g.~by choosing $H_i=0$ and $H_0=\epsilon\ii$ for small $\epsilon>0$. This means that, besides the strong duality $\Delta=-\Gamma$ indicated in (\ref{dualoftracepreservingsdp1}), the optimum of the primal (\ref{primaloftracepreservingsdp1}) is always attained, as claimed above and in contrast to Section \ref{sectapproxextensions} (cf.~Theorems \ref{th:cpisapprox} and \ref{unboundednesstrueapprox}). In other words, this implies that if a map $T$ has arbitrarily good approximations with completely positive trace-preserving extensions, then it actually has a completely positive trace-preserving extension $\widetilde{T}$, whereas Example \ref{ex:noexactcpextension} shows this to be generally false for the situation without trace-preservation. The existence of a completely positive trace-preserving extension can thus be checked by checking whether the value of the SDP is $\Delta=0$ or $\Delta>0$.

Whereas the latter observations constitute simplifications compared to the merely completely positive situation of Section \ref{sectapproxextensions}, an existence result for completely positive trace-preserving extensions analogous to Theorem \ref{th:exactcpextension} unfortunately does not hold here. This is demonstrated by the following example.

\begin{example}[No completely positive trace-preserving extension]\label{ex:notracepcpextension}
For $p\in[2/3,1)$, whose exact value we will specify later, define $\rho:={\rm diag}(p,1-p)$ and $\cS:={\rm span}\{\rho,\sigma_y\}\subset\cM_2$. Note that $\rho$ is strictly positive, and thus $\cS={\rm span}\{\rho,\rho-\sqrt{p(1-p)}\sigma_y/2\}$ is spanned by (strictly) positive operators.
Now define the linear map $T$,
\be T:\cS\ra\cM_2\,,\,~X\mapsto T(X):=K^{1/2}XK^{1/2}\qquad\text{with}~\,K\,:=\,\frac{1}{2}\left(\begin{matrix}3p-1&2\\2&3p+2\end{matrix}\right)~,\nonumber\ee
noting that $K^{1/2}=(K^{1/2})^\dagger=(K^{1/2})^T$ is Hermitian and symmetric as $K$ is real and positive for $p\geq2/3$. Due to its Kraus representation (viewed as a map on $\cM_2$), $T$ is completely positive and linear on $\cS$.
Moreover, $T$ is trace-preserving on $\cS$, which can be easily seen by noting $\tr{T(X)}=\tr{KX}$ and verifying $\tr{KX}=\tr{X}$ for $X=\rho,\sigma_y$.
Thus, $T$ is a completely positive trace-preserving linear map on a subspace $\cS$ which is spanned by strictly positive operators.

Yet, for some values of $p$, $T$ does not possess a positive trace-preserving extension to all of $\cM_2$ (thus, in particular no completely positive trace-preserving extension), as we show now.
Applying the identity $S\sigma_yS^T=(\det S)\sigma_y$, valid for all $S\in\cM_2$, to the symmetric matrix $S=K^{1/2}$, we get
\be T(\sigma_y)\,=\,K^{1/2}\sigma_yK^{1/2}\,=\,\sigma_y\sqrt{\det K}\,=\,\sigma_y\sqrt{(9p^2+3p-6)/4}~.\label{sigmaynoncontractive}\ee
For $p\in[2/3,1)$ large enough, e.g.~$p=14/15$, the square root in the last expression is larger than $1$, such that $||T(\sigma_y)||>||\sigma_y||$ for any norm $||\cdot||$. But a positive trace-preserving map defined on a matrix algebra has to be a contraction w.r.t.~the trace norm \cite{ruskainonincreasing}. Thus, $T$ cannot have a positive trace-preserving extension to all of $\cM_2$, as $T$ itself is not even contractive on $\cS$.

It is interesting to compare this situation to the analogous situation in the Heisenberg picture, where the property of a map being positive and unital (the Heisenberg picture analogue of trace-preservation) on an operator system also does not imply contractivity of the map (\cite{paulsen}, Chapter 2). In contrast to our corresponding situation above, however, if a map is \emph{completely} positive and unital on an operator system, then it \emph{is} contractive, since it has a (completely) positive and unital extension to the whole space by Theorem \ref{th:exactcpextension}, so that the Russo-Dye Theorem ensures contractivity (\cite{paulsen}, Corollary 2.9).

In summary, for $p=14/15$, the map $T$ does have a trace-preserving extension to $\cM_2$, and it does have a completely positive extension to $\cM_2$, but it does not have an extension to $\cM_2$ which is both completely positive and trace-preserving.
\end{example}

Easier counterexamples to the existence of completely positive trace-preserving extensions can be found if one does not require $\cS$ to contain a strictly positive operator. For example, one can take the situation from Example \ref{ex:easytocheckcp}, choose any quantum state $P\in\cM_{d'}$ and any traceless Hermitian $B\in\cM_{d'}$ with $||B||_1>||\sigma_x||_1$, e.g.~$B=c\sigma_x$ with any $c>1$ for $d'=2$. However, because in such cases not even a completely positive extension (i.e.~without the requirement of trace-preservation) might exist, as in Example \ref{ex:noexactcpextension}, this constitutes a weaker counterexample to the extendability of completely positive trace-preserving maps. If, on the other hand, $\cS$ contains a strictly positive operator as in Example \ref{ex:notracepcpextension}, then by Theorem \ref{th:exactcpextension} at least a completely positive extension is guaranteed to exist.

\medskip

Generalizing the quest for a completely positive trace-preserving extension of a map $T$ given on a subspace $\cS$, one can consider the situation outlined at the end of subsection \ref{linearityandcompatibilityprecond} where one wants to find a map $\widetilde{T}:\cM_d\ra\cM_{d'}$ which is fixed on a subspace of $\cM_d$ by $\widetilde{T}(X_i)=Y_i$ and whose dual map $\widetilde{T}^*$ is fixed on a subspace of $\cM_{d'}$ by $\widetilde{T}^*(X'_j)=Y'_j$.
If this map $\widetilde{T}$ is to be completely positive, then the problem can obviously again be phrased as an SDP like (\ref{primaloftracepreservingsdp1})--(\ref{primaloftracepreservingsdp4}), and its dual will be similar to (\ref{dualoftracepreservingsdp1})--(\ref{dualoftracepreservingsdp4}).

Concerning this situation we only note that, as shown by Example \ref{ex:notracepcpextension}, even if ${\rm span}\{X_i\}_i$ resp.~${\rm span}\{X'_j\}_j$ are spanned by strictly positive matrices and if the two maps given on those two respective subspaces are each completely positive and if the two maps are compatible in the sense that a linear extension $\widetilde{T}:\cM_d\ra\cM_{d'}$ exists, then it may still be impossible to choose this extension $\widetilde{T}$ to be completely positive.
On the other hand, the stated conditions are clearly necessary for the existence of such a completely positive extension.


\medskip

In Section \ref{albertiuhlmannsection} we will continue to examine trace-preserving completely positive extensions, for the special case of two specified in- and outputs ($N=2$).

\subsection{Extensions as probabilistic quantum operations}\label{subsectprobabextensions}

Less restrictively than in subsection \ref{cptpsubsection}, here we do not demand the quantum operation $\widetilde{T}$ to output the desired outcomes deterministically for each of the given inputs, but rather allow the operation to fail with some probability while still signalling its failure in this case. The following discussion will be restricted to the Schr\"odinger picture.

Given input and output quantum states $\rho_i\in\cM_d$ and $\rho'_i\in\cM_{d'}$ ($i=1,\ldots,N$), we ask for a quantum operation $\widetilde{T}:\cM_d\ra\cM_{d'}$ that achieves the transformation $\rho_i\mapsto\rho'_i$ probabilistically. This means that $\widetilde{T}$ should be completely positive and trace-non-increasing and satisfy $\widetilde{T}(\rho_i)=p_i\rho'_i$ for some numbers $p_i$. Each $p_i$ will be non-negative by the positivity of $\widetilde{T}$ and will not exceed $1$ and can thus be interpreted as the probability of success of the quantum operation $\widetilde{T}$ upon input of $\rho_i$; the requirement $p_i=1$ for all $i$ would correspond to the situation of the previous subsection.

This interpretation as success probabilities becomes evident since by
\be\widehat{T}(\rho):=\widetilde{T}(\rho)\otimes\ket{1}\bra{1}+\tr{(\ii-\widetilde{T}^*(\ii))\rho}\rho'_0\otimes\ket{0}\bra{0}\nonumber\ee
one can define a quantum channel that returns $1$ on the ancilla system if the quantum operation $\widetilde{T}$ on the main system was successful, which happens with probability ${\rm tr}\big[\widetilde{T}(\rho_i)\big]=p_i$ upon input of the state $\rho_i$. Note that if $\widetilde{T}$ conforms to the above conditions but is not trace-non-increasing, the latter can be achieved by rescaling $\widetilde{T}':=\widetilde{T}/||\widetilde{T}^*(\ii)||$ and rescaling the $p_i$ accordingly.

\medskip

Here we point out that the quest for a probabilistic quantum operation $\widetilde{T}$ as described above can be formulated as a semidefinite program (SDP) as well. For given $\rho_i$ and $\rho'_i$, we treat the probabilities $p_i$ as optimization variables and require as equality conditions of the SDP that
\be\widetilde{T}(\rho_i)~=~p_i\rho'_i\qquad\forall i~.\label{exactmappinginprobabsdp}\ee
Again, as for example in (\ref{primaloftracepreservingsdp3}), the action of the linear map $\widetilde{T}$ should also here be expressed in terms of its associated Choi matrix $C$, cf.~(\ref{choicorrespondenceeqn}). Complete positivity of $\widetilde{T}$ is again ensured by the SDP constraint $C\geq0$, which will also ensure non-negativity of the $p_i$, which in turn however could be phrased independently as the SDP constraint $p_i\geq0$. The trace-non-increasing condition can be phrased as the SDP constraint
\be {\rm tr}_1[C]~\leq~\ii~.\ee

Through the equality constraints (\ref{exactmappinginprobabsdp}) we have demanded that the operation $\widetilde{T}$ yields the exact desired output states, albeit not necessarily deterministically. Thus, now the objective function to optimize cannot quantify deviations from the desired output states, as before in (\ref{deltaTforcpandtp}) or (\ref{eq:distancemeasure}). Instead, we will optimize some function of the success probabilities $\{p_i\}_i$, maximizing these with respect to some figure of merit. One obvious choice is to maximize the minimum success probability $\min\{p_i\}_i$ of any transition $\rho_i\mapsto\rho'_i$. This can be achieved by maximizing an optimization variable $q$ subject to the constraint that it be smaller than any of the probabilities $p_i$:
\bea p_{min}&=&\text{supremum~of}~~~~q~,\\
&&\text{subject~to}\,~~~~~~~\,q\leq p_i\quad\forall i~,\eea
and subject to the other constraints from above. Another possible choice is to maximize the overall success probability of the operation $\widetilde{T}$, when some prior probability distribution $\{\pi_i\}_i$ of the occurrence of each of the distinguished input states $\rho_i$ is given:
\bea p_{post}~=&\text{supremum~of}~&\sum_i\pi_ip_i~.\label{weightedsuccessprobab}\eea

A trivial satisfying assignment of the SDPs above is the map $T\equiv0$, i.e.~$C=0$, which implies $p_i=0$ and thus zero success probability on any of the given input states. If this is the optimal solution of the given problem, one may say that the problem does not have a solution. More stringently, one may say that the problem does not have a solution if one of the success probabilities is necessarily zero, i.e.~if $p_{min}=0$. As may be especially desirable in the case of the objective function (\ref{weightedsuccessprobab}), one can ensure a certain minimum success probability $p_0$ for every of the input states $\rho_i$ by additionally requiring the SDP constraints $p_i\geq p_0~\forall i$, which may however make the SDP infeasible.

If one requires all success probabilities to be equal and nonzero, i.e.~$p_i\equiv p >0~~\forall i$ in (\ref{exactmappinginprobabsdp}), then the above problem becomes equivalent to the existence of a completely positive extension $T'$ of the map $\rho_i\mapsto\rho'_i$ to the full algebra $\cM_d$ as considered in Section \ref{sectapproxextensions} and in subsection \ref{sec:spannedbydensityops}. Namely, if such a completely positive extension $T'$ exists, then one can define $\widetilde{T}:=T'/||T'^*(\ii)||$ to obtain a trace-non-increasing solution to the above problem with positive output probabilities $p_i\equiv p=1/||T'^*(\ii)||>0$. Note also that, since the input space $\cS={\rm span}\{\rho_i\}_i$ is spanned by density operators as in subsection \ref{sec:spannedbydensityops}, the existence of such a map $T'$ can be efficiently checked by $\Delta=0$ in the SDP (\ref{eq:delta}) (or equivalently by $\Gamma=0$ in the SDP (\ref{CPSDPminimizingfunction})--(\ref{onepossiblecompactnesscondition})). And conversely, if $\widetilde{T}$ is a solution to the above problem with positive success probability $p_i\equiv p>0~~\forall i$, then $T':=\widetilde{T}/p$ is a completely positive extension of $\rho_i\mapsto\rho'_i$. Thus, both existence problems are equivalent up to a rescaling.

\medskip

For the case of two given in- and output density matrices ($N=2$) and again allowing for distinct success probabilities $p_1$ and $p_2$, one can find necessary and sufficient conditions for the existence of a suitable quantum operation $\widetilde{T}$ that has nonzero success probability on each of the input states $\rho_1,\rho_2$ \cite{hilbertmetricpaper}. First, it is obviously necessary that the inclusion of the supports ${\rm supp}[\rho_1]\subseteq{\rm supp}[\rho_2]$ then implies the corresponding inclusion ${\rm supp}[\rho'_1]\subseteq{\rm supp}[\rho'_2]$ for the images, and similarly for the indices $1$ and $2$ interchanged. With this \emph{compatibility condition} one gets the following result:
\begin{proposition}[Existence of a probabilistic operation; Theorem 21 in \cite{hilbertmetricpaper}]\label{propfromhilbertmetricpaper}If the aforementioned compatibility condition for the supports of quantum states $\rho_1,\rho_2\in\cM_d$ and $\rho'_1,\rho'_2\in\cM_{d'}$ is satisfied, then the existence of a quantum operation $\widetilde{T}:\cM_d\ra\cM_{d'}$ with $\widetilde{T}(\rho_i)=p_i\rho'_i$ for some strictly positive probabilities $p_i>0$ is equivalent to
\be||\rho_1^{-1/2}\rho_2^{}\rho_1^{-1/2}||\cdot||\rho_2^{-1/2}\rho_1^{}\rho_2^{-1/2}||\,~\geq\,~||{\rho'_1}^{-1/2}{\rho'_2}^{}{\rho'_1}^{-1/2}||\cdot||{\rho'_2}^{-1/2}{\rho'_1}^{}{\rho'_2}^{-1/2}||~.\label{hilbertmetricpapercondition}\ee
\end{proposition}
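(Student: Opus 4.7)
The plan is to first identify the quantities in \eqref{hilbertmetricpapercondition} with the Hilbert projective pseudo-metric. For Hermitian $A,B$ with $B\geq 0$ whose support contains that of $A$, set $M(A/B):=\inf\{\lambda\geq 0:\lambda B\geq A\}$, which equals $\|B^{-1/2}AB^{-1/2}\|$ when the inverse is computed on the support of $B$. The compatibility condition on supports guarantees that $M(\rho_1/\rho_2)$, $M(\rho_2/\rho_1)$, $M(\rho'_1/\rho'_2)$, $M(\rho'_2/\rho'_1)$ are all finite and strictly positive, so that \eqref{hilbertmetricpapercondition} reads
\begin{equation*}
M(\rho_1/\rho_2)\,M(\rho_2/\rho_1)\;\geq\;M(\rho'_1/\rho'_2)\,M(\rho'_2/\rho'_1)~,
\end{equation*}
the statement that the Hilbert projective distance does not increase in passing from the input pair to the output pair.

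For necessity, I would argue directly from the definitions. Suppose $\widetilde{T}$ is (completely) positive with $\widetilde{T}(\rho_i)=p_i\rho'_i$ and $p_1,p_2>0$. For any $\lambda\geq 0$ with $\lambda\rho_1\geq\rho_2$, positivity gives $\lambda p_1\rho'_1\geq p_2\rho'_2$, i.e.\ $(\lambda p_1/p_2)\rho'_1\geq\rho'_2$; taking the infimum over admissible $\lambda$ yields $M(\rho'_2/\rho'_1)\leq (p_1/p_2)\,M(\rho_2/\rho_1)$. Interchanging the roles of $\rho_1$ and $\rho_2$ gives $M(\rho'_1/\rho'_2)\leq (p_2/p_1)\,M(\rho_1/\rho_2)$. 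Multiplying the two inequalities cancels the ratio $p_1/p_2$ and produces the desired inequality.

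Sufficiency is the main content. Using compatibility of supports I would first reduce to the case where all four states are full rank on their respective spaces. The filtering transformations $F^{\rm in}(X):=\rho_1^{-1/2}X\rho_1^{-1/2}$ on the input side and $F^{\rm out}(Y):={\rho'_1}^{1/2}Y{\rho'_1}^{1/2}$ on the output side are both CP (Kraus form), so constructing any CP map $T_0$ with $T_0(I)=p_1 I'$ and $T_0(D)=p_2 D'$, where $D:=\rho_1^{-1/2}\rho_2\rho_1^{-1/2}$ and $D':={\rho'_1}^{-1/2}\rho'_2{\rho'_1}^{-1/2}$, automatically yields $\widetilde{T}:=F^{\rm out}\circ T_0\circ F^{\rm in}$ with the required properties. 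Simultaneously diagonalizing $D,D'$ by unitary conjugations (also CP), the hypothesis \eqref{hilbertmetricpapercondition} becomes $\max_i\mu_i/\min_i\mu_i\geq\max_j\mu'_j/\min_j\mu'_j$ where $\mu_i,\mu'_j$ are the eigenvalues of $D,D'$. I then seek $T_0$ as a classical channel described by a nonnegative $d'\times d$ matrix $S$ (Kraus operators $\sqrt{S_{ji}}\,|j\rangle\langle i|$) satisfying $\sum_i S_{ji}=p_1$ and $\sum_i S_{ji}\mu_i=p_2\mu'_j$ for all $j$. Viewed as a distribution, $(S_{ji}/p_1)_i$ must have mean $p_2\mu'_j/p_1$, which is possible precisely when $p_2\mu'_j/p_1\in[\min_i\mu_i,\max_i\mu_i]$ for every $j$; this condition on $p_2/p_1$ defines a non-empty interval exactly when the eigenvalue-ratio hypothesis holds, so a valid choice of $p_1,p_2$ and the corresponding $S$ exists.

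The main obstacle is this sufficiency direction, specifically verifying that the classical transportation-type feasibility in the diagonal reduction is solvable precisely under the Hilbert-metric inequality. An alternative route, closer in spirit to the rest of the paper, is to invoke Corollary \ref{corollaryspannedbydensitymatrices}: it suffices to show that the two-point map $\rho_i\mapsto p_i\rho'_i$ is completely positive on $\mathrm{span}\{\rho_1,\rho_2\}$ for some $p_1,p_2>0$, and this can be checked via the SDP of Theorem \ref{th:CPnessviaSDPtheorem} parameterized by the ratio $p_1/p_2$; the feasibility boundary of this SDP should then be identified with the critical value prescribed by \eqref{hilbertmetricpapercondition}, and a final rescaling makes the resulting CP map trace-non-increasing.
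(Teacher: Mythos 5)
The paper does not actually prove this proposition --- it is imported verbatim as Theorem~21 of \cite{hilbertmetricpaper} --- so there is no internal proof to compare against; your proposal is essentially a reconstruction of the argument from that reference (contractivity of Hilbert's projective metric under positive maps for necessity, an explicit filtering-plus-classical construction for sufficiency). The core of your argument is sound. Necessity is correct and complete: $\lambda\rho_1\geq\rho_2\Rightarrow\lambda p_1\rho_1'\geq p_2\rho_2'$ gives $M(\rho_2'/\rho_1')\leq(p_1/p_2)M(\rho_2/\rho_1)$, the swapped inequality carries the reciprocal factor, and the product cancels $p_1/p_2$. For sufficiency in the full-support case the construction also checks out: $F^{\rm out}\circ T_0\circ F^{\rm in}$ is CP as a composition of CP maps, the row conditions $\sum_i S_{ji}=p_1$ and $\sum_i S_{ji}\mu_i=p_2\mu_j'$ are solvable for row $j$ iff $p_2\mu_j'/p_1\in[\min_i\mu_i,\max_i\mu_i]$, and a ratio $p_2/p_1$ in $[\min_i\mu_i/\min_j\mu_j',\,\max_i\mu_i/\max_j\mu_j']$ exists exactly when $\max_j\mu_j'/\min_j\mu_j'\leq\max_i\mu_i/\min_i\mu_i$, which is \eqref{hilbertmetricpapercondition} after filtering. (Minor nitpick: $D$ and $D'$ live in different spaces and are diagonalized separately, not ``simultaneously''.) The final rescaling by $\|\widetilde{T}^*(\ii)\|$ to enforce the trace-non-increasing property is fine.

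The one genuine gap is the sentence ``using compatibility of supports I would first reduce to the case where all four states are full rank.'' This reduction is where the compatibility condition actually does its work, and it is not automatic. If $\mathrm{supp}\,\rho_2\not\subseteq\mathrm{supp}\,\rho_1$, then $F^{\rm in}$ annihilates the part of $\rho_2$ outside $\mathrm{supp}\,\rho_1$ and the composed map no longer sends $\rho_2$ to a multiple of $\rho_2'$, so the filtering construction as written breaks down. In these degenerate configurations \eqref{hilbertmetricpapercondition} holds trivially (one of the input factors is $\infty$), so you must still \emph{construct} the operation by other means --- e.g.\ when the two input supports are incomparable, a preliminary unambiguous-discrimination instrument followed by preparation of $\rho_i'$ does the job, and when $\mathrm{supp}\,\rho_1\subsetneq\mathrm{supp}\,\rho_2$ the compatibility hypothesis guarantees $\mathrm{supp}\,\rho_1'\subseteq\mathrm{supp}\,\rho_2'$ so that one can filter with respect to $\rho_2$ instead and restrict to $\mathrm{supp}\,\rho_2$. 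Spelling out this case analysis would close the argument; as it stands the proposal proves the generic case and waves at the rest. Your alternative SDP route via Theorem~\ref{th:CPnessviaSDPtheorem} and Corollary~\ref{corollaryspannedbydensitymatrices} is a reasonable numerical check but, as you note, identifying the feasibility boundary analytically with \eqref{hilbertmetricpapercondition} is not easier than the direct construction.
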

Condition (\ref{hilbertmetricpapercondition}), where we define $0^{-1/2}:=\infty$ and $0\cdot\infty:=0$, can be expressed more cleanly in terms of \emph{Hilbert's projective metric} \cite{hilbertmetricpaper}.

\begin{example}[Unequal vs.~equal success probabilities; connection to unambiguous state discrimination]In order to illustrate the ideas of this subsection, we show that equal success probabilities $p_i\equiv p$ pose in general a stronger requirement than allowing for distinct probabilities (even if requiring the latter to be nonzero). This is true already for the case $N=2$, in dimensions $d=d'=2$, and even for classical (i.e.~diagonal) states.

Define $\rho_1:=\rho'_2:={\rm diag}(1/3,2/3)$, $\rho_2:={\rm diag}(1/5,4/5)$, and $\rho'_1:={\rm diag}(1/2,1/2)$. Then $\widetilde{T}(\rho):=\bra{0}\rho\ket{0}{\rm diag}(1,0)+\bra{1}\rho\ket{1}{\rm diag}(0,1/2)$ defines a completely positive trace-non-increasing map, which satisfies $\widetilde{T}(\rho_i)=p_i\rho'_i$ for strictly positive probabilities $p_1=2/3$, $p_2=3/5$ (the existence of such a map $\widetilde{T}$ is guaranteed already by Proposition \ref{propfromhilbertmetricpaper}). However, for equal success probabilities $p_1=p_2\equiv p>0$ such a map cannot exist, because then (complete) positivity of $\widetilde{T}$ would imply $0\leq\widetilde{T}({\rm diag}(1,0))=\widetilde{T}(6\rho_1-5\rho_2)=p(6\rho'_1-5\rho'_2)=p\,{\rm diag}(4/3,-1/3)$, which is a contradiction. Thus, the map $\rho_i\mapsto\rho'_i$ can in particular not be achieved deterministically ($p_1=p_2=1$) by any quantum operation.

As one can also see from this example, where the $\rho_i$ have both full support, for the existence of a probabilistic map $\rho_i\mapsto\rho'_i$ it is not necessary for the input states $\{\rho_i\}_i$ to be discriminable unambiguously \cite{Che00}. Unambiguous state discrimination would correspond to choosing mutually orthogonal output states $\{\rho'_i\}_i$. And in this case the necessary compatibility condition in Proposition \ref{propfromhilbertmetricpaper} forces that neither of the supports of $\rho_1$ or $\rho_2$ be contained in the support of the other state, which is exactly the condition for unambiguous state discrimination of two states.
\end{example}

\subsection{Completely positive extensions for Abelian range or domain}\label{commutativesection}

It is well-known in operator algebra theory \cite{paulsen} that a positive linear map $T:\cA\ra\cB$ between unital $C^*$-algebras $\cA$ and $\cB$ is completely positive if the domain algebra $\cA$ or the range algebra $\cB$ is commutative. This implication from positivity to complete positivity is another flavor to our extension problem due to extension theorems like Arveson's (Theorem \ref{th:exactcpextension}; cf.~\cite{paulsen}), which require complete positivity whereas similar extensions fail for general positive maps.

In classical statistical theory the states are probability distributions, which live in commuting algebras. A commuting domain or range thus corresponds, respectively, to a classical input into or a classical output from the quantum device we are to construct (cf.~Introduction, Section \ref{sec:intro}). In the present section we will mostly restrict our attention to cases where the given inputs $\rho_i$ and outputs $\rho'_i$ are states (classical or quantum) or at least positive operators rather than general observables. This restriction will give the strongest results (cf.~for example Corollary \ref{corollaryspannedbydensitymatrices} vs.~Theorem \ref{th:cpisapprox}), and it also includes the special case of operator systems that are of special interest in operator theory \cite{paulsen}.

\bigskip

Let us first treat the case where the range algebra $\cB$ is commutative, i.e.~where the output states $\rho'_i\in\cM_{d'}$ commute pairwise and can thus be embedded into a commuting algebra $\cB$. $\cB$ could for example be the algebra of $d'\times d'$-matrices  that are diagonal in a common eigenbasis of the $\rho'_i$, which themselves can thus be interpreted as probability distributions. In this case we can formulate the following observation (cf.~Corollary \ref{corollaryspannedbydensitymatrices}):
\begin{corollary}[Positive map with commuting range]\label{commutingrange}Let $\rho_i\in\cM_d$  be quantum states and $\rho'_i\in\cM_{d'}$ be classical (i.e.~commuting) states ($i=1,\ldots,N$), and assume that $\rho_i\mapsto\rho'_i$ defines a positive linear map $T$ on $\cS:={\rm span}\{\rho_i\}_i$. Then there exists a completely positive linear extension $\widetilde{T}:\cM_d\ra\cB$ into a commuting subalgebra $\cB\subseteq\cM_{d'}$.
\end{corollary}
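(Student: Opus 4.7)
The plan is to diagonalize the output states simultaneously and reduce the $\cM_{d'}$-valued extension problem to $d'$ independent scalar extension problems, each of which falls under Corollary \ref{corollaryspannedbydensitymatrices}. Since the $\rho'_i\in\cM_{d'}$ commute pairwise, I would fix a common orthonormal eigenbasis $\{\ket{k}\}_{k=1}^{d'}$ and let $\cB\subseteq\cM_{d'}$ denote the abelian subalgebra of matrices diagonal in this basis; by construction $T(\cS)\subseteq\cB$. For each $k$, define the linear functional $\phi_k:\cS\to\C$ by $\phi_k(X):=\bra{k}T(X)\ket{k}$. Positivity of $T$ on $\cS$ immediately makes each $\phi_k$ a positive linear functional on $\cS$.

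The key observation is that each such $\phi_k$ is automatically \emph{completely} positive on $\cS$, by virtue of its one-dimensional codomain: for any $n\in\nat$, any positive $Q\in\cS\otimes\cM_n$, and any $\ket{\psi}\in\C^n$, the compression $R:=(\ii\otimes\bra{\psi})Q(\ii\otimes\ket{\psi})$ lies in $\cS$ and is positive, and hence
\[\bra{\psi}\,(\phi_k\otimes\id_n)(Q)\,\ket{\psi}\,=\,\phi_k(R)\,\geq\,0\,.\]
Varying $\ket{\psi}$ and $n$ gives $(\phi_k\otimes\id_n)(Q)\geq0$ for every $n$, i.e., complete positivity of $\phi_k$ on $\cS$.

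Because $\cS={\rm span}\{\rho_i\}_i$ is spanned by the positive operators $\rho_i$, Corollary \ref{corollaryspannedbydensitymatrices} then extends each $\phi_k$ to a completely positive functional $\widetilde{\phi_k}:\cM_d\to\C$. I would assemble these into
\[\widetilde{T}(X)\,:=\,\sum_{k=1}^{d'}\widetilde{\phi_k}(X)\,\ket{k}\bra{k}\,.\]
This map is completely positive (as a sum of summands $\widetilde{\phi_k}(\cdot)\ket{k}\bra{k}$, each of which is CP being the product of a CP functional with a positive operator), its range manifestly lies in $\cB$, and on the generators $\widetilde{T}(\rho_i)=\sum_k\bra{k}\rho'_i\ket{k}\,\ket{k}\bra{k}=\rho'_i$ since each $\rho'_i$ is diagonal in the basis $\{\ket{k}\}$.

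The main conceptual step is the middle one: the abelian structure of the range allows $T$ to be decomposed into independent scalar functionals $\phi_k$, at which point the distinction between positivity and complete positivity becomes vacuous for free. After that, the extension theorem for subspaces spanned by positive operators (Corollary \ref{corollaryspannedbydensitymatrices}) handles each component separately, and reassembly along $\{\ket{k}\bra{k}\}$ preserves both complete positivity and the landing of the range inside $\cB$.
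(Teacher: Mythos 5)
Your proof is correct, and it reaches the conclusion by a route that is parallel to, but more self-contained than, the paper's. The paper's proof invokes the general operator-algebraic fact (cited from Paulsen) that a positive map on an arbitrary subspace with values in a commutative algebra is automatically completely positive, then applies Corollary \ref{corollaryspannedbydensitymatrices} \emph{once} to the full $\cM_{d'}$-valued map $T$ to get a completely positive extension $\widehat{T}:\cM_d\ra\cM_{d'}$, and finally post-composes with the conditional expectation $\Pi$ onto the diagonal algebra $\cB$ so that the range lands in $\cB$. You instead unpack that citation: decomposing $T$ into the scalar functionals $\phi_k=\bra{k}T(\cdot)\ket{k}$ reduces the commutative-range statement to the fact that a positive functional on a subspace is completely positive, which you prove directly via the compression $R=(\ii\otimes\bra{\psi})Q(\ii\otimes\ket{\psi})\in\cS$ --- this compression argument is precisely the mechanism underlying the cited result. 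You then apply Corollary \ref{corollaryspannedbydensitymatrices} componentwise (in the case $d'=1$) and reassemble along the rank-one projections $\ket{k}\bra{k}$, which makes the containment of the range in $\cB$ automatic and removes the need for the final projection step. Both arguments rest on the same extension engine, namely Corollary \ref{corollaryspannedbydensitymatrices}; yours trades a textbook citation for a short elementary argument at the cost of a $d'$-fold (rather than single) application of that corollary, and is arguably the more transparent of the two.
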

\begin{proof}$T:\cS\ra\cB$ is a positive map into a commuting algebra $\cB$, where we construct $\cB$ as above. For a commuting range $\cB$, the implication mentioned at the beginning of this subsection holds for maps defined on arbitrary subspaces $\cS\subseteq\cA\equiv\cM_d$ \cite{paulsen}, and thus $T:\cS\ra\cB\subseteq\cM_{d'}$ is completely positive. Corollary \ref{corollaryspannedbydensitymatrices} now gives a completely positive extension $\widehat{T}:\cM_d\ra\cM_{d'}$. We concatenate this with the (completely positive) map $\Pi:\cM_{d'}\ra\cB$ that projects each $X\in\cM_{d'}$ onto its diagonal w.r.t.~$\cB$ (and therefore leaves all $\rho'_i$ invariant) to finally get $\widetilde{T}:=\Pi\circ\widehat{T}$.
\end{proof}
Concerning Corollary \ref{commutingrange} we remark that, even under the assumption of a commuting range, a completely positive and trace-preserving map $T:\rho_i\mapsto\rho'_i$ may fail to have a positive and trace-preserving extension $\widetilde{T}:\cM_d\ra\cM_{d'}$. That is, extension theorems for quantum channels as desired in subsection \ref{cptpsubsection} do not exist even for commuting range. An example for this failure is the following: Let $\Pi:\cM_2\ra\cM_2$, $\Pi(X):=P_{+}XP_{+}+P_{-}XP_{-}$, be the quantum channel which projects onto the spectral projections $P_{+},P_{-}$ of the Pauli matrix $\sigma_y$. If $T$ and $\cS$ are taken from Example \ref{ex:notracepcpextension}, then the map $\Pi\circ T$ is completely positive and trace-preserving on $\cS$, and $\Pi\circ T(\cS)$ is commutative as each element is diagonal in the eigenbasis of $\sigma_y$. However, due to $\Pi(\sigma_y)=\sigma_y$ we have from Eq.~(\ref{sigmaynoncontractive}) still that $\Pi\circ T$ is not contractive for $p=14/15$, and thus cannot possess a quantum channel extension.

\bigskip

Now we consider maps $T:\cS\ra\cM_{d'}$ where the domain $\cS={\rm span}\{\rho_i\}_i\subseteq\cM_d$ is spanned by a commuting set of states ($i=1,\ldots,N$). Looking at a common eigenbasis, we can interpret these input states as probability vectors $\rho_i\in\R^d$. Therefore the set of normalized states in $\cS$ is the intersection ${\mathcal{P}}:={\rm aff}\{\rho_i\}_i\cap\R_+^d$ of the nonnegative orthant with the affine plane spanned by the input states. Thus, ${\mathcal{P}}$ is a convex polytope with a finite number $e$ of extreme points $\varepsilon^\alpha=\sum_ic_i^\alpha\rho_i$, $\alpha=1,\ldots,e$ ($c_i^\alpha\in\R$). By definition, the map $T$ is positive if $T(\rho)\geq0$ for all $\rho\in{\mathcal{P}}$. With knowledge of the extreme points $\varepsilon^\alpha$, however, this is equivalent to saying that all of their images are positive, i.e.
\be T(\varepsilon^\alpha)~=~\sum_ic_i^\alpha\rho'_i~\geq~0\quad\forall\alpha=1,\ldots,e~,\nonumber\ee
which is a \emph{finite} number of conditions as opposed to the infinitely many conditions in Definition \ref{definepositivemap} for a general non-commutative domain. If the extreme points $\varepsilon^\alpha$ can be found efficiently \cite{matheiss}, then this provides an efficient criterion for the positivity of $T$, which is generally not available. Below, however, we will see that positivity of $T$ on a commuting domain $\cS$ does generally not imply complete positivity.

After this prelude on checking positivity of maps with commutative domains, we can state extension results for this case.
\begin{proposition}\label{extendcommutativedomain}Let $\cS:={\rm span}\{\rho_i\}_i\subseteq\cM_d$ be spanned by a commuting set of states $\rho_i$ ($i=1,\ldots,N$), and let $T:\cS\ra\cM_{d'}$ be a positive linear map. Then:
\begin{itemize}
\item[(a)]If $d\leq3$ then there exists a completely positive extension $\widetilde{T}:\cM_d\ra\cM_{d'}$ of $T$. For $d\geq4$ there does not in general exist a positive extension $\widetilde{T}:\cM_d\ra\cM_{d'}$.
\item[(b)]If $d\leq2$ and $T$ is trace-preserving, then there exists a quantum channel $\widetilde{T}:\cM_d\ra\cM_{d'}$ extending $T$. For $d\geq3$ there does not in general exist a positive trace-preserving extension $\widetilde{T}:\cM_d\ra\cM_{d'}$.
\end{itemize}
\end{proposition}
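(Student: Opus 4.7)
The plan is to reduce the extension question over $\cM_d$ to an analogous one over the commutative subalgebra $\D_d$ containing $\cS$, then to handle the positive direction of each part by small-dimension case analysis and the negative direction by constructing explicit counterexamples that already obstruct extension to $\D_d$.

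Since the $\rho_i$ pairwise commute, I would simultaneously diagonalise them and assume $\cS\subseteq\D_d$, the algebra of diagonal matrices in a fixed basis. The key preliminary observation is that a CP extension $\widetilde{T}:\cM_d\to\cM_{d'}$ of $T$ exists iff a positive extension $\widehat{T}:\D_d\to\cM_{d'}$ does: any positive map from commutative $\D_d$ is automatically CP, so precomposing with the pinching map $\Pi:\cM_d\to\D_d$ yields $\widetilde{T}:=\widehat{T}\circ\Pi$, which agrees with $T$ on $\cS$; conversely, any CP extension restricts to a positive extension on $\D_d$. Since $\Pi$ is itself CPTP, the same equivalence holds with trace-preservation imposed throughout, reducing (b) to a similar classical question. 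Identifying $\D_d$ with $\R^d$, the problem becomes: find positive operators (resp.\ states, in (b)) $M_1,\dots,M_d\in\cM_{d'}$ with $\sum_k\rho_i(k)M_k=\rho'_i$ for every $i$.

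For the positive direction of (a) with $d\leq 3$ I would proceed by cases on $\dim\cS$. The cases $\dim\cS\in\{0,1,d\}$ are immediate (for $\dim\cS=1$ take $M_k:=\rho'$ on the support of the input state and $M_k:=0$ elsewhere). The substantive case is $d=3$, $\dim\cS=2$: the cone of positive elements in $\cS$ has at most two extreme rays $\varepsilon^1,\varepsilon^2$, and every extreme ray lying on the boundary of $\R_+^3$ must satisfy $x_k=0$ for some $k$, so each support $S^\alpha\subseteq\{1,2,3\}$ has size at most two. A short enumeration on the possible supports shows that one index $k^\star$ — taken in $S^1\cap S^2$ when nonempty and outside $S^1\cup S^2$ otherwise — allows setting $M_{k^\star}=0$ and decouples the linear system into independent equations, each determining the remaining $M_k$ as a positive scalar multiple of $T(\varepsilon^1)$ or $T(\varepsilon^2)$. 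The degenerate sub-cases in which the positive cone is a single ray (or zero) are handled by combining Proposition~\ref{cpextensionwoposelement} with a Jordan decomposition of the image of the ``non-positive'' basis vector. For the positive direction of (b), only $\dim\cS\leq 2$ can occur in $d\leq 2$: for $\dim\cS=1$ the constant channel $\widetilde{T}(X):=\tr{X}\rho'$ is CPTP, and for $\dim\cS=2=d$ we have $\cS=\D_2$, the $M_k:=T(\ket{k}\bra{k})$ are states, and $\widetilde{T}:=T\circ\Pi$ is CPTP.

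For the counterexamples, in (a) with $d=4$ I would take the four ``plaquette'' states $\rho_j:=(e_j+e_{(j\bmod 4)+1})/2\in\D_4$, spanning a $3$-dimensional subspace with the single relation $\rho_1-\rho_2+\rho_3-\rho_4=0$ and whose extreme rays in the positive orthant are exactly the $\rho_j$. Setting $\rho'_1:=\ket{0}\bra{0}$, $\rho'_3:=\ket{1}\bra{1}$, $\rho'_2:=\ket{+}\bra{+}$, $\rho'_4:=\ket{-}\bra{-}$ in $\cM_2$ satisfies the compatible relation $\rho'_1+\rho'_3=\ii=\rho'_2+\rho'_4$, and positivity of $T$ is then automatic from positivity of the $\rho'_j$. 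Any positive extension would restrict to positive $M_k\in\cM_2$ with $M_0+M_1=2\ket{0}\bra{0}$ and $M_2+M_3=2\ket{1}\bra{1}$; since a sum of two positive operators equalling a rank-one positive forces both summands to lie in its range, $M_0,M_1$ are nonnegative multiples of $\ket{0}\bra{0}$ and $M_2,M_3$ of $\ket{1}\bra{1}$, whence $M_1+M_2$ is diagonal in $\{\ket{0},\ket{1}\}$ — contradicting $M_1+M_2=2\ket{+}\bra{+}$, which has off-diagonal entries. Padding with zero coordinates extends the counterexample to $d>4$. In (b) with $d=3$, take $\rho_1:=(\ket{0}\bra{0}+\ket{1}\bra{1})/2$, $\rho_2:=(\ket{1}\bra{1}+\ket{2}\bra{2})/2$ and $\rho'_1:=\ket{0}\bra{0}$, $\rho'_2:=\ket{1}\bra{1}$ in $\cM_2$; any positive trace-preserving extension would force $M_0+M_1=2\ket{0}\bra{0}$ with $\tr{M_k}=1$, hence $M_0=M_1=\ket{0}\bra{0}$, and analogously $M_1=M_2=\ket{1}\bra{1}$ — a contradiction. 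The main obstacle will be the enumeration of support configurations in the $d=3$, $\dim\cS=2$ sub-case of (a); the counterexamples are short to verify but require the right ansatz, guided by the observation that mutually unbiased pairs in $\cM_2$ generate off-diagonal constraints unsatisfiable by any positive decomposition inside the commutative $\D_d$.
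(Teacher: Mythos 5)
Your proposal follows essentially the same route as the paper: pass to the diagonal algebra via the pinching, use that positive maps on a commutative domain are completely positive, build the extension on $\D_d$ by a case analysis on $\dim\cS$ for $d\leq3$, and refute the general case with explicit diagonal counterexamples. Your counterexamples are correct and are in fact the paper's own configurations in light disguise (your four plaquette states are the paper's $\rho_1,\rho_2,\rho_3$ together with the derived fourth extreme ray $\rho_1+\rho_3-\rho_2$, and your rank-one support argument is the paper's ``nontrivial convex decomposition of a pure state'' argument).

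One concrete slip in the substantive case $d=3$, $\dim\cS=2$: your recipe ``take $k^\star\in S^1\cap S^2$ when nonempty and outside $S^1\cup S^2$ otherwise'' does not cover the configuration where one extreme ray has support of size two and the other has the complementary singleton support, e.g.\ $\varepsilon^1=\mathrm{diag}(1,1,0)/2$, $\varepsilon^2=\mathrm{diag}(0,0,1)$. There $S^1\cap S^2=\emptyset$ but $S^1\cup S^2=\{1,2,3\}$, so neither clause names a valid $k^\star$; the fix is immediate (zero out an index lying in $S^1$ only, say $M_2:=0$, $M_1:=2\sigma^1$, $M_3:=\sigma^2$), but as stated the enumeration is incomplete. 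Note also that this is the only missing configuration: if one extreme ray were a coordinate vector $e_l$ with $l$ in the support of the other, the latter would fail to be extreme, so overlapping supports force both to have size two. Conversely, the ``degenerate sub-cases'' you reserve for Proposition~\ref{cpextensionwoposelement} cannot occur: $\cS$ is spanned by states, so for $\dim\cS=2$ the cone $\cS\cap\R_+^3$ contains two independent positive vectors and is never a single ray or $\{0\}$; that appeal is unnecessary.
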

\begin{proof}As above, we can fix an orthonormal basis $\{\ket{x}\}_{x=1}^d$ in which all $\rho_i$ are simultaneously diagonal and view these states as probability vectors in $\C^d\supseteq\cS$. The idea for the existence proofs in {\it(a)} and {\it(b)} is now to find a linear map $\Pi:\C^d\ra\cS$ which projects $\C^d$ into $\cS$, such that $\Pi$ is positive (and in case {\it(b)} furthermore trace-preserving) and leaves $\cS$ elementwise invariant, i.e.~$\Pi(\rho_i)=\rho_i$. Then the concatenated map $T\circ\Pi:\C^d\ra\cM_{d'}$ is positive and extends $T$ from $\cS$ to $\C^d$, which is a commutative unital $C^*$-algebra. As mentioned at the beginning of subsection \ref{commutativesection}, this implies that $T\circ\Pi$ is completely positive \cite{paulsen}. Now, the map $D:\cM_d\ra\C^d$, $D(X):=\sum_x\ket{x}\bra{x}X\ket{x}\bra{x}$, which projects onto the basis in which all $\rho_i$ are diagonal, is completely positive and trace-preserving and leaves each $\rho_i$ invariant. Thus, $\widetilde{T}:=T\circ\Pi\circ D:\cM_d\ra\cM_{d'}$ is a completely positive extension of $T$ (and, in case {\it(b)}, trace-preserving).

For the existence proofs, it now remains to find an appropriate map $\Pi$, which we do by exhaustion of cases. If ${\rm dim}\,\cS=d$, then $\cS=\C^d$ and $\Pi:={\rm{id}}_{\C^d}$ does the job. If ${\rm dim}\,\cS=0$, then we do not apply the above strategy but can directly choose any quantum channel $\widetilde{T}:\cM_d\ra\cM_{d'}$. If ${\rm dim}\,\cS=1$, i.e.~$\cS=\C\cdot\rho_1$, we choose $\Pi(X):=\tr{X}\rho_1$, which is positive and trace-preserving. This completes the existence proof for part {\it(b)}, whereas for part {\it(a)} only the case $d=3$, ${\rm dim}\,\cS=2$ is left, i.e.~$\cS={\rm span}\{\rho_1,\rho_2\}$ with distinct states $\rho_1\neq\rho_2\in\R_+^3$. W.l.o.g.~we may assume $\rho_1,\rho_2$ to lie on the boundary of $\R_+^3$, i.e.~to be the two extremal points of ${\mathcal{P}}$ above, so that $\rho_1={\rm diag}(p,1-p,0)$ and $\rho_2={\rm diag}(q,0,1-q)$ with $p,q\in[0,1)$. We then define $\Pi$ via $\Pi({\rm diag}(1,0,0)):={\rm diag}(0,0,0)$, $\Pi({\rm diag}(0,1,0)):=\rho_1/(1-p)$, $\Pi({\rm diag}(0,0,1)):=\rho_2/(1-q)$, so that $\Pi$ is positive and leaves $\rho_1,\rho_2$ invariant.

To show the second part of {\it(b)}, choose the states $\rho_1,\rho_2\in\cM_3$ from the previous paragraph with $p=q=1/2$ and define a linear map $T:\cS\ra\cM_2$ by $T(\rho_1):=\ket{0}\bra{0}$, $T(\rho_2):=\ket{1}\bra{1}$. $T$ is trace-preserving and positive as it maps the extremal points of $\mathcal{P}$ (see above) to quantum states. Suppose now that $\widetilde{T}:\cM_3\ra\cM_2$ is a positive and trace-preserving extension of $T$. Then $\ket{0}\bra{0}=\widetilde{T}(\rho_1)=\widetilde{T}({\rm diag}(1,0,0))/2+\widetilde{T}({\rm diag}(0,1,0))/2$ is a non-trivial convex decomposition of the pure state $\ket{0}\bra{0}$ into states, which implies $\widetilde{T}({\rm diag}(1,0,0))=\ket{0}\bra{0}$. Similarly, $\ket{1}\bra{1}=\widetilde{T}(\rho_2)=\widetilde{T}({\rm diag}(1,0,0))/2+\widetilde{T}({\rm diag}(0,0,1))/2$ implies $\widetilde{T}({\rm diag}(1,0,0))=\ket{1}\bra{1}$, which contradicts the preceding conclusion. The same contradiction obtains for any $d\geq3$ by embedding.

For the second part of {\it(a)} choose similarly $\rho_1:={\rm diag}(1,1,0,0)/2$, $\rho_2:={\rm diag}(0,1,1,0)/2$, $\rho_3:={\rm diag}(0,0,1,1)/2$ (so that $\cS\subset\cM_4$ is even an Abelian operator system), and define $T:\cS\ra\cM_2$ linearly by $T(\rho_1):=\ket{0}\bra{0}$, $T(\rho_2):=\ket{+}\bra{+}$, $T(\rho_3):=\ket{1}\bra{1}$, where $\ket{\pm}:=(\ket{0}\pm\ket{1})/\sqrt{2}$ (note, $T$ is even trace-preserving on $\cS$). The extreme points of the corresponding set ${\mathcal{P}}$ (see above) are $\rho_1$, $\rho_2$, $\rho_3$ together with $\rho_4:=\rho_1+\rho_3-\rho_2$, so that $T$ is indeed positive due to $T(\rho_4)=\ket{-}\bra{-}\geq0$. Assume now that $\widetilde{T}:\cM_4\ra\cM_2$ is a positive extension of $T$. Then, with $\rho_5:={\rm diag}(0,1,0,0)$, $\widetilde{T}(\rho_5)$ is positive and appears in non-trivial convex decompositions of the two distinct pure states $\widetilde{T}(\rho_1)$ and $\widetilde{T}(\rho_2)$, so that necessarily $\widetilde{T}(\rho_5)=0$. This however implies that the image $\widetilde{T}(\rho_6)=2\ket{1}\bra{1}-2\ket{+}\bra{+}$ of the state $\rho_6:=2\rho_3+\rho_5-2\rho_2$ is not positive, which contradicts the positivity of $\widetilde{T}$.
\end{proof}

In situations where no completely positive extension exists in part {\it(a)} of Proposition \ref{extendcommutativedomain}, the map $T$ is not completely positive on $\cS$, despite being positive on $\cS$. The reason is that, if $T$ had been completely positive on $\cS={\rm span}\{\rho_i\}_i$, it would have a completely positive extension according to Corollary \ref{corollaryspannedbydensitymatrices}. This shows that positivity on a commuting domain (even on an Abelian operator system $\cS\subset\cM_4$, cf.~proof of Proposition \ref{extendcommutativedomain}) does not imply complete positivity, contrary to the analogous statement for commuting range (Corollary \ref{commutingrange}).

\section{Alberti-Uhlmann Theorem revisited}\label{albertiuhlmannsection}

As we have seen in previous sections, we do not in general have a closed expression which decides whether a given map $T:\cS\ra\cM_{d'}$ has a completely positive and trace-preserving extension, i.e.~whether this map can be extended to a quantum channel. Whereas the semidefinite program (SDP) from subsection \ref{cptpsubsection} can answer this question and find the ``best'' approximating quantum channel, existence results in the sprit of Theorem \ref{th:exactcpextension} do not seem to be available for the quantum channel case (see Example \ref{ex:notracepcpextension}), not even under the additional assumption of commuting range or domain (see subsection \ref{commutativesection}). And even in the just completely positive case it seems generally hard to find closed criteria for the existence of an extension -- note for example that the existence result in Theorem \ref{th:exactcpextension} already presupposes complete positivity of the given map $T$ on $\cS$, which needs to be checked with an SDP in general.

The problem remains even for quantum systems of small dimensions $d,d'\geq2$ and even for small numbers $N\geq2$ of given input (and output) states. This section will exclusively treat the $N=2$ case, always requiring trace-preservation; in this section we will denote the quantum channel extension by $T$. For the case of $d=d'=2$ and $N=2$ Alberti and Uhlmann have found necessary and sufficient conditions for the existence of a quantum channel extension:
\begin{theorem}[Alberti-Uhlmann Theorem \cite{albertiuhlmann}]\label{restateautheorem}Let $\rho_1,\rho_2\in\cM_d$, $\rho'_1,\rho'_2\in\cM_{d'}$ be density matrices of at most qubit size, i.e.~$d,d'\leq2$. Then the existence of a quantum channel $T$ satisfying $T(\rho_i)=\rho'_i$ for $i=1,2$ is equivalent to (the ``Alberti-Uhlmann condition'')
\be||\rho_1-t\rho_2||_1~\geq~||\rho'_1-t\rho'_2||_1\qquad\forall t>0~.\label{aucondition}\ee
\end{theorem}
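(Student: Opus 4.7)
The proof splits into the easy direction, necessity, and the harder direction, sufficiency. For necessity, I would invoke the contractivity of every trace-preserving positive linear map under the trace norm (referred to around Example \ref{ex:notracepcpextension}): if $T$ is a CPTP extension then, applied to the Hermitian operator $\rho_1 - t\rho_2$,
\[ \|\rho'_1 - t\rho'_2\|_1 \,=\, \|T(\rho_1 - t\rho_2)\|_1 \,\leq\, \|\rho_1 - t\rho_2\|_1 \qquad \forall\,t > 0~, \]
which is exactly (\ref{aucondition}).

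For sufficiency I would argue the contrapositive within the SDP framework of subsection \ref{cptpsubsection}. Impose trace-preservation as a hard constraint (formally $w=\infty$ in \eqref{deltaTforcpandtp}), so that the primal (\ref{primaloftracepreservingsdp1})--(\ref{primaloftracepreservingsdp4}) ranges over genuine CPTP maps; as remarked in that subsection, strong duality holds and the primal infimum is attained. If no CPTP extension exists, then $\Delta>0$ and by strong duality there exist Hermitian operators $H_0\in\cM_d$ and $H_1,H_2\in\cM_{d'}$ with
\[ H_0 \otimes \ii + \rho_1 \otimes H_1 + \rho_2 \otimes H_2 \,\geq\, 0 \qquad\text{and}\qquad \tr{H_0} + \tr{\rho'_1 H_1} + \tr{\rho'_2 H_2} \,<\, 0~. \]
The goal is to extract from this witness some $t > 0$ violating (\ref{aucondition}).

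The central and hardest step is to show that in the qubit case ($d, d' \leq 2$) the dual infimum is already attained on the restricted two-parameter family $H_1 = Y$, $H_2 = -tY$ with $t > 0$ and Hermitian contraction $-\ii \leq Y \leq \ii$ in $\cM_{d'}$, together with $H_0$ chosen trace-minimally. Within that family, simultaneous diagonalization of $Y$ reduces the positivity constraint to inequalities $H_0 + y_k(\rho_1 - t\rho_2) \geq 0$ over the eigenvalues $y_k\in[-1,1]$ of $Y$; the trace-minimal $H_0$ is attained when $Y$ has spectrum $\{\pm 1\}$ (which for $d'\leq 2$ exhausts the extreme Hermitian contractions) and yields $\tr{H_0} = \|\rho_1 - t\rho_2\|_1$. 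Choosing $Y$ as the signature $P_+ - P_-$ of $-(\rho'_1 - t\rho'_2)$ simultaneously gives $\tr{Y(\rho'_1 - t\rho'_2)} = -\|\rho'_1 - t\rho'_2\|_1$. Substituting these into the strict inequality $\tr{H_0} + \tr{\rho'_1 H_1} + \tr{\rho'_2 H_2} < 0$ produces $\|\rho_1 - t\rho_2\|_1 < \|\rho'_1 - t\rho'_2\|_1$ at that value of $t$, the desired violation.

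The main obstacle is justifying the above reduction, which is a genuine dimension-two phenomenon: the counterexamples in subsection \ref{counterexpsection} show that no analogous reduction works for $d,d'\geq 3$, so some qubit-specific input must enter. I would establish it by expanding $H_1, H_2$ in the Pauli basis $H_i = a_i\ii + \vec{b}_i\cdot\vec{\sigma}$, combining the extreme-point structure of the Hermitian unit ball in $\cM_2$ with joint convexity of the dual objective to drive the optimum to a configuration with anti-parallel Pauli parts and tied magnitudes, precisely parameterized by $(t,Y)$ above. An alternative, more concrete route would replace the SDP step by the Bloch representation (Ruskai--Szarek--Werner parameterization of qubit channels): writing $\|\rho_1 - t\rho_2\|_1 = \max(|1-t|,\,|\vec{r}_1 - t\vec{r}_2|)$ and analogously for primed vectors, one would read (\ref{aucondition}) as a family of geometric constraints and construct the interpolating affine Bloch-ball contraction directly, but the dimensional specificity remains the same essential obstacle.
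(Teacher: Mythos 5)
Your necessity argument is exactly the paper's: apply the Ruskai contractivity of the trace norm under positive trace-preserving maps to the Hermitian operator $\rho_1-t\rho_2$. That direction is complete and correct.

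The sufficiency direction, however, has a genuine gap at precisely the point you flag. Your computation \emph{inside} the restricted family $(H_1,H_2)=(Y,-tY)$ is right: with $Y$ a Hermitian unitary the trace-minimal $H_0$ is $|\rho_1-t\rho_2|$, and optimizing $Y$ against $\rho'_1-t\rho'_2$ turns the dual objective into $\|\rho_1-t\rho_2\|_1-\|\rho'_1-t\rho'_2\|_1$. But this shows only that the restricted family of witnesses detects exactly the violations of (\ref{aucondition}); the assertion that for $d,d'\leq 2$ the full dual optimum is attained on this family is \emph{logically equivalent} to the sufficiency of the Alberti-Uhlmann condition (if $\Delta>0$ yet (\ref{aucondition}) held, the restricted family would be value-$\geq 0$ while the full dual is negative). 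So you have restated the hard direction in dual language rather than proved it. The proposed mechanism for closing this — Pauli expansion plus ``joint convexity of the dual objective'' — does not obviously go anywhere: the objective is linear, so the optimum sits at extreme points of the spectrahedron $\{(H_0,H_1,H_2):H_0\otimes\ii+\rho_1\otimes H_1+\rho_2\otimes H_2\geq 0,\ \|H_i\|\leq 1\}$, and characterizing those extreme points (which are not products of extreme points of the individual norm balls, since the joint positivity constraint couples the variables) is where all the difficulty lives. For comparison, the paper does not prove sufficiency by duality at all: it routes it through Theorem \ref{ourfidelitycrition}, whose qubit-specific input is that $\rho_1-a\rho_2$ and $\rho_2-b\rho_1$ (with $a=\inf(\rho_1/\rho_2)$, $b=\inf(\rho_2/\rho_1)$) are rank at most one, so that Uhlmann's fidelity criterion for pure inputs applies; the remaining implication ``(\ref{aucondition}) $\Rightarrow$ (\ref{onetwocondfidelitycrit})--(\ref{thirdcondfidelitycrit}) for qubits'' is credited to \cite{bm11} and not reproduced. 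The paper also explicitly notes that its alternative two-step route (AU condition $\Rightarrow$ positive trace-preserving map, then upgrade to a channel via Woronowicz decomposability) ``still seems not easy'' at the first step — which is essentially the same obstruction your Bloch-ball alternative runs into. To salvage your approach you would either have to carry out the extreme-ray analysis of the dual spectrahedron in full, or switch to the fidelity-criterion route of Theorem \ref{ourfidelitycrition}.
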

The Alberti-Uhlmann condition (\ref{aucondition}) is obviously necessary for the existence of such a quantum channel $T$ as the trace-norm is non-increasing under any positive trace-preserving map \cite{ruskainonincreasing}. Also observe that condition (\ref{aucondition}) is automatically satisfied (with equality) for all $t\leq0$, since in this case $\rho_1-t\rho_2\geq0$ such that $||\rho_1-t\rho_2||_1=\tr{\rho_1-t\rho_2}=1-t$ and equally for the output states. Alberti and Uhlmann state Theorem \ref{restateautheorem} specifically for $d=d'=2$ \cite{albertiuhlmann}. Note also that the Alberti-Uhlmann condition may equivalently be written as \cite{cjw}
\be||p_1\rho_1-p_2\rho_2||_1~\geq~||p_1\rho'_1-p_2\rho'_2||_1\qquad\text{for all probability distributions}~(p_1,p_2)~;\label{statediscrinterpret}\ee
this in turn is equivalent to the statement that, for any choice of prior probabilities $(p_1,p_2)$, the error probability in optimal state discrimination \cite{hellstroem} between the original states $\rho_1,\rho_2$ is smaller or equal than the error probability for the processed states $\rho'_1,\rho'_2$.

\medskip

The Alberti-Uhlmann condition (\ref{aucondition}) is not straightforward to check directly as it involves infinitely many inequalities, one for each $t>0$. Our first result below (subsection \ref{fidelitycritsubsection}) replaces this by a simpler criterion which involves checking only a few conditions. This new result also extends the reach of the Alberti-Uhlmann Theorem to any (finite) output dimension $d'$. Furthermore, our result can be used to give a proof \cite{bm11} of Theorem \ref{restateautheorem} which appears shorter and more transparent than the original one in \cite{albertiuhlmann}.

Remarkably, the analogue of Theorem \ref{restateautheorem} for classical probability distributions holds in \emph{any} dimensions $d,d'$ \cite{ruchetal,torgersenbook}  (i.e.~for \emph{diagonal} quantum states $\rho_i\in\cM_d$, $\rho'_i\in\cM_{d'}$), whereas Alberti and Uhlmann's proof of Theorem \ref{restateautheorem} is very specific to the qubit case $d=d'=2$ and very different from the classical proof. The obvious question of whether Theorem \ref{restateautheorem} also holds for dimensions $d,d'>2$ is met in the literature with claims \cite{albertiuhlmann,cjw} of the existence of counterexamples to this; however, to the best of our knowledge, no such counterexample can be found in the literature. In subsection \ref{counterexpsection} we will give such a counterexample, namely matrices $\rho_i,\rho'_i\in\cM_3$ such that condition (\ref{aucondition}) does not suffice for the existence of a quantum channel satisfying $T(\rho_i)=\rho'_i$; this disproof will use the SDP formulation of previous sections (specifically of subsection \ref{cptpsubsection}) in an analytic way, whereas our initial counterexample has been found numerically with the SDP.  In subsequent discussions \cite{bm11}, counterexamples stronger than ours have been found to this supposed generalization of Theorem \ref{restateautheorem} beyond the qubit case.

We further remark that, for arbitrary Hilbert space dimensions $d,d'$, the Alberti-Uhlmann Theorem can be related to the existence of certain quantum \emph{statistical morphisms}, which were introduced in \cite{buscemicmp} and which are mappings more general than positive trace-preserving maps. Namely, the formulation of a notion of statistical sufficiency for quantum experiments leads to existence questions for statistical morphisms, which is then related to conditions similar to (\ref{aucondition}) \cite{buscemicmp,matsumotorandomizationpaper,jencovamorph}.

\subsection{Fidelity criterion replacing the Alberti-Uhlmann condition (\ref{aucondition})}\label{fidelitycritsubsection}
The basic idea behind the following theorem to decide the existence of a quantum channel satisfying $T(\rho_i)=\rho'_i$ for $i=1,2$ is the following: Given qubit density matrices $\rho_1,\rho_2\in\cM_2$, there exist numbers $a,b\in[0,1]$ such that $\rho_1-a\rho_2$ and $\rho_2-b\rho_1$ are both positive and have rank at most $1$, i.e.~each is a multiple of a pure quantum state $\psi_1,\psi_2$. But for \emph{pure} input states, the existence of a quantum channel achieving the mapping $\psi_1\mapsto(\rho'_1-a\rho'_2)/(1-a)$ and $\psi_2\mapsto(\rho'_2-b\rho'_1)/(1-b)$ is easy to decide via the monotonicity of the fidelity \cite{uhlmannfidelitycrit}, and is equivalent to the existence of $T$ above (the cases $a=1$, $b=1$ will be included by separate considerations).

To formulate the theorem concisely, it is useful to introduce two notions for positive matrices $A,B\in\cM_n$, $n\geq1$, $A,B\geq0$. First, even if $A$ or $B$ are not normalized to have unit trace, we define the \emph{(generalized) fidelity} \cite{uhlmanndefinefidelity}:
\be F(A,B)~:=~\tr{\sqrt{A^{1/2}BA^{1/2}}\,}~.\nonumber\ee
Secondly, define \cite{hilbertmetricpaper} (assuming $B\neq0$)
\be\inf(A/B)~:=~\sup\{\lambda\,\big|\,A-\lambda B\geq0\}~,\nonumber\ee
and note that $A-B\inf(A/B)$ is positive with one eigenvalue $0$. $\inf(A/B)$ can be computed as $\inf(A/B)=||A^{-1/2}BA^{-1/2}||^{-1}$ if ${\rm supp}[B]\subseteq{\rm supp}[A]$ (and $\inf(A/B)=0$ otherwise) \cite{hilbertmetricpaper}. Now we can formulate the theorem.

\begin{theorem}[Fidelity criterion for the existence of a quantum channel extension]\label{ourfidelitycrition}Let $\rho_1,\rho_2\in\cM_d$ be density matrices of at most qubit size, i.e.~$d\leq2$, and let $\rho'_1,\rho'_2\in\cM_{d'}$ be density matrices ($d'\geq1$). Then the existence of a quantum channel $T$ satisfying $T(\rho_i)=\rho'_i$ for $i=1,2$ is equivalent to the three conditions
\bea&\rho'_1-a\rho'_2~\geq~0~,\quad\text{and}~~~\rho'_2-b\rho'_1~\geq~0~,\label{onetwocondfidelitycrit}\\
&\text{and}~~~F(\rho'_1-a\rho'_2,\rho'_2-b\rho'_1)~\geq~F(\rho_1-a\rho_2,\rho_2-b\rho_1)~,\label{thirdcondfidelitycrit}\eea
where we have denoted
\be a~:=~\inf(\rho_1/\rho_2)~,\quad b~:=~\inf(\rho_2/\rho_1)~.\nonumber\ee
\end{theorem}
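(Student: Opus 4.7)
The strategy is to split the claim into necessity and sufficiency, reducing sufficiency to the known extension result for \emph{pure} input states. The structural observation, foreshadowed just before the theorem, is that in the qubit case $d \leq 2$ the positive operators $\rho_1 - a\rho_2$ and $\rho_2 - b\rho_1$ each have rank at most one: by maximality of $a$ and $b$ they possess a zero eigenvalue in the $2$-dimensional space, so they take the form $(1-a)|\psi_1\rangle\langle\psi_1|$ and $(1-b)|\psi_2\rangle\langle\psi_2|$ with pure states $|\psi_i\rangle \in \C^2$ (the prefactors being fixed by the traces $1-a$ and $1-b$). Inverting the resulting $2\times 2$ linear system expresses $\rho_1,\rho_2$ as explicit convex combinations of $|\psi_1\rangle\langle\psi_1|$ and $|\psi_2\rangle\langle\psi_2|$, with change-of-basis determinant proportional to $(1-a)(1-b)/(1-ab)$ and hence nondegenerate unless $a = b = 1$; this latter case corresponds to $\rho_1 = \rho_2$ and is handled separately.

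Necessity is immediate: given a CPTP extension $T$, positivity gives $\rho'_1 - a\rho'_2 = T(\rho_1 - a\rho_2) \geq 0$ and analogously for the other inequality, yielding (\ref{onetwocondfidelitycrit}); condition (\ref{thirdcondfidelitycrit}) follows from Uhlmann's monotonicity of the fidelity under CPTP maps, extended from states to unnormalized positive operators via the homogeneity $F(\lambda A, \mu B) = \sqrt{\lambda\mu}\,F(A,B)$ together with trace preservation.

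For sufficiency, first dispose of the trivial sub-cases $d=1$ and $\rho_1=\rho_2$ (where (\ref{onetwocondfidelitycrit}) forces $\rho'_1 = \rho'_2$ and any channel with $T(\rho_1) = \rho'_1$ works). In the remaining situation, define candidate targets $\sigma_1 := (\rho'_1 - a\rho'_2)/(1-a)$ and $\sigma_2 := (\rho'_2 - b\rho'_1)/(1-b)$; hypothesis (\ref{onetwocondfidelitycrit}) and the trace normalization make these density matrices. By the linear bijection between $\{\rho_1,\rho_2\}$ and $\{|\psi_1\rangle\langle\psi_1|,|\psi_2\rangle\langle\psi_2|\}$, the existence of a CPTP $T$ with $T(\rho_i)=\rho'_i$ is equivalent to the existence of a CPTP $T$ with $T(|\psi_i\rangle\langle\psi_i|)=\sigma_i$. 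I then invoke the pure-input existence criterion (cf.~\cite{cjw}; also derivable directly from Uhlmann's theorem by choosing purifications of $\sigma_1,\sigma_2$ whose overlap matches $\langle\psi_1|\psi_2\rangle$ and extending the isometry $|\psi_i\rangle \mapsto$ purification to a CPTP map via Stinespring dilation): such a channel exists iff $F(\sigma_1,\sigma_2) \geq |\langle\psi_1|\psi_2\rangle|$. Finally, homogeneity of $F$ rewrites (\ref{thirdcondfidelitycrit}) as $\sqrt{(1-a)(1-b)}\,F(\sigma_1,\sigma_2) \geq \sqrt{(1-a)(1-b)}\,|\langle\psi_1|\psi_2\rangle|$, which in the generic case $a,b<1$ is precisely the required inequality.

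The main obstacle is the clean invocation of the pure-input existence lemma together with uniform handling of the various qubit sub-cases; in particular, one must verify that the rank-one decomposition specializes correctly when some $\rho_i$ is itself pure (then $a$ or $b$ equals $0$, one of the $|\psi_i\rangle\langle\psi_i|$ coincides with the pure $\rho_i$, and the formulas above remain valid). Once this bookkeeping is carried out, the argument assembles into a substantially shorter and more transparent proof than the original one in \cite{albertiuhlmann}.
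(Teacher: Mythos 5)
Your proposal is correct and follows essentially the same route as the paper's proof: the rank-one decomposition $\rho_1-a\rho_2=(1-a)\psi_1$, $\rho_2-b\rho_1=(1-b)\psi_2$, reduction to the pure-input fidelity criterion via purifications and a Stinespring isometry, linear inversion back to $\rho_i\mapsto\rho'_i$, and separate treatment of the degenerate case $a=b=1$ (i.e.~$\rho_1=\rho_2$). The only cosmetic difference is that for necessity of (\ref{thirdcondfidelitycrit}) you invoke general fidelity monotonicity under channels, whereas the paper re-derives the needed inequality directly from the maximal-overlap characterization of the fidelity.
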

\begin{remark}Note that condition (\ref{thirdcondfidelitycrit}) can only be evaluated if the two conditions (\ref{onetwocondfidelitycrit}) hold. Condition (\ref{thirdcondfidelitycrit}) bears some similarity to the formulation (\ref{statediscrinterpret}) of the Alberti-Uhlmann condition. Note, however, that (\ref{onetwocondfidelitycrit})--(\ref{thirdcondfidelitycrit}) constitute only a finite number of conditions compared to the infinitely many conditions (\ref{aucondition}).\end{remark}
\begin{proof}Due to $0\leq\tr{\rho_1-a\rho_2}=1-a$ we have $a\in[0,1]$, and similar for $b$. Furthermore, the rank of $\rho_1-a\rho_2\in\cM_2$ is at most $1$ since it has one zero eigenvalue as remarked above, and similar for $\rho_2-b\rho_1$. We first show sufficiency of the fidelity criterion (\ref{onetwocondfidelitycrit})--(\ref{thirdcondfidelitycrit}).

Assume for now that $\rho_1\neq\rho_2$. This implies $a,b\in[0,1)$, as $a=1$ would mean $\rho_1-\rho_2\geq0$ which together with $\tr{\rho_1-\rho_2}=0$ would mean $\rho_1-\rho_2=0$; similar for $b$. Due to $\tr{\rho_1-a\rho_2}=1-a>0$, $\rho_1-a\rho_2$ has at least rank $1$, and thus $\psi_1\equiv\ket{\psi_1}\bra{\psi_1}:=(\rho_1-a\rho_2)/(1-a)$ is a pure quantum state; similarly $\psi_2:=(\rho_2-b\rho_1)/(1-b)$. If conditions (\ref{onetwocondfidelitycrit}) and (\ref{thirdcondfidelitycrit}) hold, then $\sigma'_1:=(\rho'_1-a\rho'_2)/(1-a)$ and $\sigma'_2:=(\rho'_2-b\rho'_1)/(1-b)$ are quantum states which satisfy $F(\sigma'_1,\sigma'_2)\geq F(\psi_1,\psi_2)$. By Theorem 4.1 of \cite{uhlmannfidelitycrit}, this implies the existence of a quantum channel $T:\cM_2\ra\cM_{d'}$ satisfying $T(\psi_i)=\sigma'_i$ for $i=1,2$. The existence of this $T$ can also be derived from more well-known facts in quantum information theory, as we show in the following paragraph.

By the characterization of the fidelity as the maximal overlap of purifications \cite{uhlmanndefinefidelity,nielsenchuang}, there exist purifications of $\sigma'_i$, i.e.~pure quantum states $\ket{\psi'_i}\in\C^{d'}\otimes\C^{d'}$ with ${\rm tr}_2[\,\ket{\psi'_i}\bra{\psi'_i}\,]=\sigma'_i$, which satisfy
\be\big|\langle\psi'_1|\psi'_2\rangle\big|~=~F(\sigma'_1,\sigma'_2)~\geq~F(\psi_1,\psi_2)~=~\big|\langle\psi_1|\psi_2\rangle\big|~,\nonumber\ee
where the last equality follows from the definition of the fidelity; this is the step where we need purity of the input states $\psi_i$. Thus, there exist quantum states $\ket{\varphi'_i}\in\C^2$ (or in $\C^n$, for any given $n\geq2$) such that
\be\big|\langle\psi_1|\psi_2\rangle\big|~=~\big|\langle\psi'_1|\psi'_2\rangle\big|\cdot\big|\langle\varphi'_1|\varphi'_2\rangle\big|~=~\big|\big(\bra{\psi'_1}\otimes\bra{\varphi'_1}\big)\,\big(\ket{\psi'_2}\otimes\ket{\varphi'_2}\big)\big|~.\nonumber\ee
Due to this invariant overlap, the following defines an isometry $V:\C^2\ra\C^{d'}\otimes\C^{d'}\otimes\C^2$:
\be V\ket{\psi_i}~:=~\ket{\psi'_i}\otimes\ket{\varphi'_i}\qquad\text{for}\,~i=1,2~.\nonumber\ee
Finally, we can take this isometry to be the Stinespring dilation \cite{nielsenchuang} of the quantum channel:
\be T:\cM_2\ra\cM_{d'}\,,\quad T(X)~:=~{\rm tr}_{2,3}[VXV^\dagger]~.\nonumber\ee
Noting that $\ket{\psi'_i}\otimes\ket{\varphi'_i}\in\C^{d'}\otimes\C^{d'}\otimes\C^2$ are purifications of the $\sigma'_i$ gives $T(\psi_i)=\sigma'_i$, as desired.

This map $T$ is the desired quantum channel since by linearity
\be T(\rho_1)~=~T\left(\frac{(1-a)\psi_1+a(1-b)\psi_2}{1-ab}\right)~=~\frac{(1-a)\sigma'_1+a(1-b)\sigma'_2}{1-ab}~=~\rho'_1~,\nonumber\ee
and similarly $T(\rho_2)=\rho'_2$.

If $\rho_1=\rho_2$ then $a=1$, so that (\ref{onetwocondfidelitycrit}) implies $\rho'_1-\rho'_2\geq0$, which means $\rho'_1=\rho'_2$ due to $\tr{\rho'_1-\rho'_2}=0$. Thus the map $T(X):=\rho'_1\tr{X}$ suffices.

Conversely, if a quantum channel satisfies $T(\rho_i)=\rho'_i$, then (\ref{onetwocondfidelitycrit}) holds as $\rho'_1-a\rho'_2$ is the image of the positive matrix $\rho_1-a\rho_2$ under $T$, and similarly $\rho'_2-b\rho'_1$. And writing the channel in a Stinespring dilation $T(X)=:{\rm tr}_2[VXV^\dagger]$ with an isometry $V:\C^2\ra\C^{d'}\otimes\C^n$ \cite{nielsenchuang} shows that $V(\rho_1-a\rho_2)V^\dagger$ resp.~$V(\rho_2-b\rho_1)V^\dagger$ are purifications of $\rho'_1-a\rho'_2$ resp.~$\rho'_2-b\rho'_1$, whereas $\rho_1-a\rho_2$ and $\rho_2-b\rho_1$ are already pure (strictly speaking, as above one may want to treat the case $a=b=1$ separately and normalize the states otherwise by $1/(1-a)$ resp.~$1/(1-b)$). Thus,
\be F(\rho'_1-a\rho'_2,\rho'_2-b\rho'_1)~\geq~\tr{\big(V(\rho_1-a\rho_2)V^\dagger\big)^\dagger\,V(\rho_2-b\rho_1)V^\dagger}^{1/2}~=~F(\rho_1-a\rho_2,\rho_2-b\rho_1)\nonumber\ee
as desired, where the first inequality follows again from the fact that the fidelity is the maximum overlap over all purifications \cite{uhlmanndefinefidelity,nielsenchuang}. 
\end{proof}

Although the criteria (\ref{onetwocondfidelitycrit})--(\ref{thirdcondfidelitycrit}) make sense for input dimensions $d>2$ as well, they are not sufficient to guarantee the existence of a quantum channel $T$ achieving the mapping. The reason is that $\rho'_1-a\rho'_2$ and $\rho'_2-b\rho'_1$ may generally have rank $d-1>1$ and thus are not (multiples of) pure states, so that the fidelity criterion from \cite{uhlmannfidelitycrit} is not sufficient (see proof above).

Theorem \ref{ourfidelitycrition} can be used to give a more transparent proof \cite{bm11} of Theorem \ref{restateautheorem}. Namely, one can show \cite{bm11} that for qubits ($d=d'=2$) the Alberti-Uhlmann condition (\ref{aucondition}) implies the conditions (\ref{onetwocondfidelitycrit})--(\ref{thirdcondfidelitycrit}), which then by Theorem \ref{ourfidelitycrition} implies the existence of the desired quantum channel $T$. This two-step procedure to show sufficiency in the Alberti-Uhlmann Theorem appears conceptually simpler and shorter than the original proof in \cite{albertiuhlmann}.

\subsection{Counterexamples to higher-dimensional generalizations of Theorem \ref{restateautheorem}}\label{counterexpsection}
As mentioned above, the analogue of the Alberti-Uhlmann Theorem for classical probability distributions holds in any dimensions $d,d'\geq1$ \cite{ruchetal}: If $\mu_i\in\R^d$, $\mu'_i\in\R^{d'}$ are probability distributions satisfying the condition $||\mu_1-t\mu_2||_1\geq||\mu'_1-t\mu'_2||_1$ for all $t>0$, then there exists a stochastic matrix $M\in\R^{d'\times d}$ with $M\mu_i=\mu'_i$. This theorem is itself a generalization of a famous theorem in majorization theory by Hardy, Littlewood and P\'olya \cite{hlp}, equivalent to the case where $\mu_2=\mu'_2$ are the uniform distributions in dimension $d=d'$.

Note that, analogous to quantum channels, stochastic matrices are the most general memoryless physical maps between classical states, i.e.~between probability distributions. They are trace-preserving and completely positive (which is equivalent to positivity on the Abelian algebra of probability distributions) and can be extended to quantum channels on the larger algebra of density matrices.

With these analogies and with the classical theorem cited above \cite{ruchetal} in mind, one might be led to conjecture that the Alberti-Uhlmann Theorem (Theorem \ref{restateautheorem}) holds beyond the qubit case, i.e.~beyond the case $d,d'\leq2$. However, in the literature \cite{albertiuhlmann,cjw} there is the claim of the existence of counterexamples to this supposed generalization at least for the case $d=d'\geq3$, whereas to the best of our knowledge no such counterexamples are stated anywhere. We will remedy this here by exhibiting a counterexample in the qutrit case $d=d'=3$ and describing extensions to higher dimensions. This counterexample was obtained and is being proved with techniques coming from the SDPs of previous sections. Later, we will mention counterexamples \cite{bm11} that are stronger than the ones presented now.

\begin{proposition}[Counterexample to higher-dimensional generalizations of Theorem \ref{restateautheorem}; demonstration of numerical and analytical SDP techniques]\label{counterexampleprop}Define qutrit density matrices
\be\rho_1~:=~\frac{1}{6}\left(\begin{matrix}2&1&0\\1&2&1\\0&1&2\end{matrix}\right),\quad\rho_2~:=~\frac{1}{6}\left(\begin{matrix}2&1&0\\1&2&-i\\0&i&2\end{matrix}\right),\quad\rho'_1~:=~\rho_1^T~,~\quad\rho'_2~:=~\rho_2^T~.\label{explicitmatricescounterexmp}\ee
Then the $\rho_i,\rho'_i$ satisfy the Alberti-Uhlmann condition (\ref{aucondition}), but there does not exist a quantum channel $T$ satisfying $T(\rho_i)=\rho'_i$ for $i=1,2$.
\end{proposition}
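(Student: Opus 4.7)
The statement has two parts which I would treat separately.

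The Alberti--Uhlmann inequality (\ref{aucondition}) follows immediately from the construction. Since $\rho'_i = \rho_i^T$, for every real $t$ we have $\rho'_1 - t\rho'_2 = (\rho_1 - t\rho_2)^T$; the transpose of any matrix has the same spectrum as the matrix itself, so the two Hermitian matrices $\rho_1 - t\rho_2$ and $\rho'_1 - t\rho'_2$ share trace-norms. Hence (\ref{aucondition}) holds with equality for every $t\in\R$, not only for $t>0$.

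For the non-existence of a quantum channel $T$ with $T(\rho_i) = \rho'_i$, I would invoke the SDP duality of subsection \ref{cptpsubsection}. By the primal--dual pair (\ref{primaloftracepreservingsdp1})--(\ref{dualoftracepreservingsdp4}), such a $T$ exists iff the primal optimum $\Delta = 0$, iff the dual optimum $\Gamma = 0$; and after dropping the compactness conditions (\ref{dualoftracepreservingsdp3})--(\ref{dualoftracepreservingsdp4}) by homogeneity, this sharpens to a dichotomy $\Gamma\in\{0,-\infty\}$. It therefore suffices to exhibit Hermitian matrices $H_0,H_1,H_2\in\cM_3$ such that
\begin{equation*}
H_0 \otimes \1 + \rho_1 \otimes H_1 + \rho_2 \otimes H_2 \,\geq\, 0
\end{equation*}
and, using that $Y_i^T = (\rho_i^T)^T = \rho_i$,
\begin{equation*}
\tr{H_0} + \tr{\rho_1 H_1} + \tr{\rho_2 H_2} \,<\, 0.
\end{equation*}
A concrete witness triple immediately settles the proposition.

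Constructing this witness is the main hurdle. The practical route---and, judging from the remark preceding the proposition, the route followed by the authors---is to solve the primal SDP (\ref{primaloftracepreservingsdp1})--(\ref{primaloftracepreservingsdp4}) numerically, confirm that $\Delta>0$, and read off the optimal dual variables as candidate $H_0,H_1,H_2$. To convert this into a certified analytic proof, I would round the SDP output to nearby matrices with Gaussian-rational entries (the imaginary unit appearing in $\rho_2$ suggests that the witness carries anti-Hermitian components in the $(2,3)$-block), and then verify feasibility directly: checking positivity of the resulting $9\times 9$ Hermitian matrix (by exhibiting a Cholesky factor or by computing its characteristic polynomial), and evaluating three $3\times 3$ traces. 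Conceptually, one expects the obstruction to be related to the familiar failure of the transposition map $X\mapsto X^T$ to be completely positive, since transposition is the \emph{canonical} positive (non-CP) extension of $\rho_i\mapsto\rho'_i$; the genuine content of the proof is that the considerable remaining freedom in choosing a general linear extension cannot be exploited to simultaneously achieve complete positivity and trace-preservation, and this is precisely what the explicit dual witness certifies.
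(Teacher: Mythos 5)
Your plan coincides with the paper's proof: the Alberti--Uhlmann condition is immediate from the fact that transposition preserves the trace norm (the paper phrases this via contractivity of $\|\cdot\|_1$ under positive trace-preserving maps, you via spectrum invariance --- both fine), and the non-existence is certified by a dual witness $H_0,H_1,H_2$ with $H_0\otimes\1+\rho_1\otimes H_1+\rho_2\otimes H_2\geq0$ and $\tr{H_0}+\tr{\rho_1H_1}+\tr{\rho_2H_2}<0$, exactly the feasibility/objective conditions you derive. The one thing your write-up does not contain is the witness itself, which is the irreducible computational content of the proposition; the paper supplies explicit numerical matrices (found with SeDuMi, listed in its Appendix~\ref{appendixproof}) and notes that the argument is robust under perturbing $H_0$ by $\epsilon\1_3$, so your proposed round-and-certify step is indeed how the proof is completed. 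As it stands your text is a correct proof schema rather than a proof, but the schema is the paper's.
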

\begin{proof}Note that the transposition map $\Theta:\cM_3\ra\cM_3,\,\Theta(X):=X^T$, is positive and trace-preserving. As the trace-norm is non-increasing under the application of positive trace-preserving maps \cite{ruskainonincreasing}, we immediately have for all $t$
\be||\rho_1-t\rho_2||_1~\geq~||\Theta(\rho_1-t\rho_2)||_1~=~||\rho'_1-t\rho'_2||_1~,\nonumber\ee
which is condition (\ref{aucondition}).

We will now prove the second claim, namely that no completely positive trace-preserving map $T:\cM_3\ra\cM_3$ satisfies $T(\rho_i)=\rho'_i$. Therefore, assume such a map exists, and we will bring this to a contradiction. For the Hermitian matrices $H_i\in\cM_3$ exhibited in Appendix \ref{appendixproof}, we have that
\be M~:=~H_0\otimes\ii_3\,+\,\rho_1\otimes H_1\,+\,\rho_2\otimes H_2~\geq~0\nonumber\ee
is positive (this can be checked numerically; the following argument also works when $H_0$ is replaced by $H_0+\epsilon\ii_3$ for $\epsilon\in(0,0.7)$, which ensures that $M$ is strictly positive with smallest eigenvalue $\geq\epsilon$). As $T$ is completely positive, $(T\otimes{\rm id}_3)(M)$ will be positive and in particular $\bra{\Omega_3}(T\otimes{\rm id}_3)(M)\ket{\Omega_3}$ should be nonnegative, where $\ket{\Omega_3}:=\sum_{i=1}^3\ket{ii}$ is the unnormalized maximally entangled state. But
\bea\bra{\Omega_3}(T\otimes{\rm id}_3)(M)\ket{\Omega_3}&=&\bra{\Omega_3}\big(T(H_0)\otimes\ii_3+T(\rho_1)\otimes H_1+T(\rho_2)\otimes H_2\big)\ket{\Omega_3}\nonumber\\
&=&\tr{T(H_0)}\,+\,\tr{\rho'_1H_1^T}\,+\,\tr{\rho'_2H_2^T}\nonumber\\
&\leq&-2.2~<~0\nonumber\eea
is negative, where we have used trace-preservation $\tr{T(H_0)}=\tr{H_0}$ of $T$ and the explicit numerical values given in (\ref{explicitmatricescounterexmp}) and in Appendix \ref{appendixproof}. This is the desired contradiction. Note that the last computation is similar to the proof of Theorem \ref{th:CPnessviaSDPtheorem} and also to the computation (\ref{deltaplusgammaobjective}), which analytically shows any dual feasible value to be a bound on the primal optimum of an SDP.

For dimensions $d,d'\geq3$, we can embed the above matrices $\rho_i,\rho'_i,H_i$ as upper-left blocks into matrices $\widetilde{\rho}_i,\widetilde{H}_0\in\cM_d$ and $\widetilde{\rho}'_i,\widetilde{H}_1,\widetilde{H}_2\in\cM_{d'}$, with all other entries vanishing. Then $||\widetilde{\rho}_1-t\widetilde{\rho}_2||_1=||\rho_1-t\rho_2||_1$ (and the primed analogue of this equation) ensure condition (\ref{aucondition}), whereas the proof by contradiction goes through as above.\end{proof}

The outputs $\rho'_i$ in this counterexample are constructed as images of the inputs $\rho_i$ under a trace-preserving positive, albeit not completely positive, map. Thus, although the mapping $\rho_i\mapsto\rho'_i$ cannot be achieved by a map from the class of quantum channels, it can be achieved by a map from the somewhat bigger class where the contraint of complete positivity is lifted to positivity. Arguably, this existence of a positive trace-preserving extension makes our counterexample somewhat weak, but is a convenient way to enforce the Alberti-Uhlmann condition (\ref{aucondition}).

To further understand why the Alberti-Uhlmann condition (\ref{aucondition}) is sufficient in $d=d'=2$ dimensions, we point out that the construction of Proposition \ref{counterexampleprop} employing positive trace-preserving maps cannot produce a counterexample in dimensions $(d,d')=(2,2),(3,2),(2,3)$. Namely, in these dimensions, any positive map $T:\cM_d\ra\cM_{d'}$ is \emph{decomposable} \cite{woronowicz}, i.e.~there exist completely positive maps $T_1,T_2:\cM_d\ra\cM_{d'}$ such that $T=T_1+T_2\Theta_d$ with the transposition $\Theta_d:\cM_d\ra\cM_d$. For $d=2$ the transposition map $\Theta_2$ acts like an inversion about the origin of the Bloch sphere followed by a unitary conjugation $U$. But the restriction of this inversion to the equatorial plane in the Bloch sphere spanned by the two input states $\rho_1,\rho_2$ has the same effect as some unitary conjugation $V$ (inducing a rotation in this plane by the angle $\pi$). Thus, for $d=2$, the mapping $\rho_i\mapsto T(\rho_i)=:\rho'_i$ ($i=1,2$) can be achieved by the completely positive and trace-preserving map $T_1+T_2UV$, i.e.~by a quantum channel. Similarly, in the case $d'=2$, we note that $T=T_1+\Theta_2T_3$ with the completely positive map $T_3:=\Theta_2T_2\Theta_d$ and that again the action of $\Theta_2$ on the two qubit states $T_3(\rho_1),T_3(\rho_2)$ can be achieved by some unitary conjugation $UV$.

The argument in the preceding paragraph suggests a two-step method to prove the Alberti-Uhlmann Theorem \ref{restateautheorem}: If one succeeds in showing that the Alberti-Uhlmann condition (\ref{aucondition}) for qubits implies the existence of a positive trace-preserving map $T$ with $T(\rho_i)=\rho'_i$, then the previous paragraph involving the Woronowicz result \cite{woronowicz} shows that this map $T$ can be chosen to be a quantum channel. The first step in this strategy, however, still seems not easy.

K.~Matsumoto \cite{bm11} has found ``strong'' counterexamples to supposed generalizations of the Alberti-Uhlmann Theorem for all dimensions $(d,d')$ not covered by Theorem \ref{restateautheorem} (i.e.~for the cases $d\geq2,d'\geq3$ and for the cases $d\geq3,d'\geq2$). These counterexamples are stronger than our Proposition \ref{counterexampleprop} in the sense that $\rho_i\in\cM_d$, $\rho'_i\in\cM_{d'}$ satisfy the Alberti-Uhlmann condition (\ref{aucondition}) whereas not even a positive trace-preserving map $T:\cM_d\ra\cM_{d'}$ achieves $T(\rho_i)=\rho'_i$ for $i=1,2$.

\bigskip

{\bf{Acknowledgments.}} The authors thank Francesco Buscemi, Keiji Matsumoto and David P\'erez-Garc\'ia for very valuable discussions. TH acknowledges support from the Academy of Finland (grant no.~138135). MAJ would like to thank for financial support the European project QUEVADIS, the Rectorate of University Politehnica Timi\c{s}oara, and the strategic grant POSDRU/21/1.5/G/13798 co-financed by the European Social Fund - Investing in People, within the Sectorial Operational Programme Human Resources Development 2007--2013. DR and MMW acknowledge support from the European projects COQUIT and QUEVADIS, the CHIST-ERA/BMBF project CQC, and from the Alfried Krupp von Bohlen und Halbach-Stiftung.

\bigskip

{\bf{Note added (22/10/2012):}} After publication of this work we became aware of the paper \cite{hlp2011}, which also concerns the existence of completely positive maps between given in- and output states and where the authors have independently obtained the condition from Theorem \ref{ourfidelitycrition} above for maps on qubits.

\appendix

\section{Strong duality in Proposition \ref{CPdualityprop}}\label{dualityproofapp}
To make the proof given in the main text self-contained, we will here use the separating hyperplane theorem from convex analysis \cite{rockafellar} to show that the optima of the two extremization problems are related by $\Delta=-\Gamma$.

For this, we abbreviate the right-hand-side of (\ref{deltasdpinproof3}) by $D_i(C):={\rm tr}_2[C(\ii\otimes X_i^T)]-Y_i$, which depends on $C$ whereas $X_i,Y_i$ are fixed. Denoting by $S_{d'}\subset\cM_{d'}$ the space of Hermitian $d'\times d'$ matrices, we will write $(z,Z_1,\ldots,Z_N)\equiv(z,Z_i)$ for elements (tuples) of the real vector space $\R\oplus(S_{d'})^{\oplus N}$, which is equipped with an inner product $\langle(z,Z_i),(w,W_i)\rangle:=zw+\sum_i\tr{Z_iW_i}$.

Note that $\Delta\in[0,\infty)$ because of (\ref{eq:delta}), so in particular $\Delta\neq\pm\infty$. Thus, we can define the following subset of $\R\oplus(S_{d'})^{\oplus N}$, where the entries of each tuple correspond to the objective function (\ref{deltasdpinproof1}) and to the equality constraints (\ref{deltasdpinproof3}) and where the quantifiers capture the inequality constraints (\ref{deltasdpinproof2}):
\be\cP\,:=\,\left\{\Big(-r+\sum_i\tr{P_i+Q_i},\,P_i-Q_i-D_i(C)\Big)\,\,\Big|\,\,C,P_i,Q_i\geq0,\,r<\Delta\right\}~.\ee
One easily verifies that $\cP$ is a convex set and that $(0,0_i)\notin\cP$, which just expresses the fact that no $C,P_i,Q_i$ which are feasible according to (\ref{deltasdpinproof2})--(\ref{deltasdpinproof3}) make the objective function (\ref{deltasdpinproof1}) smaller than $\Delta$. Therefore, the separating hyperplane theorem \cite{rockafellar} guarantees the existence of a nonzero vector $(z,Z_i)\in\R\oplus(S_{d'})^{\oplus N}$ such that $\langle(z,Z_i),(w,W_i)\rangle\geq0$ for all $(w,W_i)\in\cP$, i.e.
\be z\big(-r+\sum_i\tr{P_i+Q_i}\big)+\sum_i\tr{Z_i\big(P_i-Q_i-D_i(C)\big)}\,\geq\,0\label{fromsephyperplanethm}\ee
for all $C,P_i,Q_i\geq0$ and $r<\Delta$.

Setting $C=P_i=Q_i=0$ in (\ref{fromsephyperplanethm}) and letting $r\to-\infty$ shows $z\geq0$. Suppose now that $z=0$. Since $(z,Z_i)$ is nonzero, there exists $j$ and $\ket{\psi}$ such that $\bra{\psi}Z_j\ket{\psi}\neq0$. Set $C=P_i=Q_i=0$ for $i\neq j$, and for given $\alpha\geq0$ choose $P_j,Q_j\geq0$ such that $P_j-Q_j=-\alpha\bra{\psi}Z_j\ket{\psi}\,\ket{\psi}\bra{\psi}$. Then the left-hand-side of (\ref{fromsephyperplanethm}) becomes $-\alpha\bra{\psi}Z_j\ket{\psi}^2-\sum_i\tr{Z_iD_i(0)}$ which violates the inequality in (\ref{fromsephyperplanethm}) for $\alpha\to\infty$. Thus $z>0$.

One can therefore define $\hat{H}_i:=-Z_i^T/z$, which turn out to be optimal variables for the SDP (\ref{CPSDPminimizingfunction})--(\ref{onepossiblecompactnesscondition}) as we will see. We substitute this in (\ref{fromsephyperplanethm}), resolve $D_i(C)={\rm tr}_2[C(\ii\otimes X_i^T)]-Y_i$, rearrange terms, and use that (\ref{fromsephyperplanethm}) holds in the limit $r\nearrow\Delta$ to finally get:
\be\sum_i\tr{P_i(\ii-\hat{H}_i^T)+Q_i(\ii+\hat{H}_i^T)}\,+\,{\rm tr}\Big[C\sum_i\hat{H}_i^T\otimes X_i^T\Big]\,-\,\sum_i\tr{\hat{H}_i^TY_i}\,\geq\,\Delta\label{neweqnfromsephyperplane}\ee
for all $C,P_i,Q_i\geq0$. As the inequality (\ref{neweqnfromsephyperplane}) has to hold in particular for all $C\geq0$, one obtains $\sum_i\hat{H}_i^T\otimes X_i^T\geq0$, because otherwise $\bra{\phi}\,\sum_i\hat{H}_i^T\otimes X_i^T\,\ket{\phi}<0$ for some $\ket{\phi}$ and one could set $C:=\alpha\ket{\phi}\bra{\phi}$ to obtain a contradiction for $\alpha\to\infty$. Similar argumentation with $P_i$ and $Q_i$ gives $-\ii\leq \hat{H}_i^T\leq\ii$ for all $i$. Thus, the Hermitian matrices $\hat{H}_i$ satisfy the constraints (\ref{CPSDPconditions})--(\ref{onepossiblecompactnesscondition}). And by letting $C=P_i=Q_i=0$ in (\ref{neweqnfromsephyperplane}), one obtains an upper bound on the corresponding value of the objective function (\ref{CPSDPminimizingfunction}):
\be\sum_i\tr{Y^T_i\hat{H}_i}\,=\,\sum_i\tr{\hat{H}^T_iY_i}\,\leq\,-\Delta~,\label{objectivevalueattainedinproof}\ee
which proves $\Gamma\leq-\Delta$.

To show the other direction $\Gamma\geq-\Delta$, let $C,P_i,Q_i,H_i$ be any matrices satisfying the constraints (\ref{deltasdpinproof2})--(\ref{deltasdpinproof3}) and (\ref{CPSDPconditions})--(\ref{onepossiblecompactnesscondition}). Then the sum of the two objective functions in (\ref{deltasdpinproof1}) and (\ref{CPSDPminimizingfunction}) satisfies
\bea&&\sum_i\tr{P_i+Q_i}\,+\,\sum_i\tr{Y^T_iH_i}\label{deltaplusgammaobjective}\\
&=&\sum_i\tr{P_i(\ii-H_i^T)+Q_i(\ii+H_i^T)+H_i^TY_i}\,+\,\sum_i\tr{H_i^T(P_i-Q_i)}\nonumber\\
&=&\sum_i\tr{P_i(\ii-H_i^T)+Q_i(\ii+H_i^T)+H_i^TY_i}\,+\,\sum_i\tr{H_i^T\left({\rm tr}_2[C(\ii\otimes X_i^T)]-Y_i\right)}\nonumber\\
&=&\sum_i\tr{P_i(\ii-H_i^T)}\,+\,\sum_i\tr{Q_i(\ii+H_i^T)}\,+\,{\rm tr}\Big[C\big(\sum_iH_i\otimes X_i\big)^T\Big]\,\,\geq\,\,0\nonumber\eea
as a sum of inner products between positive matrices. Since $\Delta+\Gamma$ is the infimum of (\ref{deltaplusgammaobjective}) over all feasible variables $C,P_i,Q_i,H_i$, it is nonnegative as well, $\Delta+\Gamma\geq0$.

We have thus shown $\Delta=-\Gamma$, which also means that due to (\ref{objectivevalueattainedinproof}) the matrices $\hat{H}_i$ from above attain the optimum of the SDP (\ref{CPSDPminimizingfunction})--(\ref{onepossiblecompactnesscondition}).

\section{Witness in the proof of Proposition \ref{counterexampleprop}}\label{appendixproof}
Matrices which give the desired contradiction in the proof of Proposition \ref{counterexampleprop} are, for example,
\bea H_0~:=~\left(\begin{matrix}2.4&-5.3&0\\ *&26.7&0\\ *&*&28.8\end{matrix}\right),\quad H_1&:=&\left(\begin{matrix}10.6&-25+3.2i&+44+33.4i\\ *&54.6&-174.4-146.2i\\ *&*&44\end{matrix}\right),\nonumber\\
H_2&:=&\left(\begin{matrix}10.6&-25-3.2i&-33.4-44i\\ *&54.6&+146.2+174.4i\\ *&*&44\end{matrix}\right),\nonumber\eea
where the starred entries are to be completed so as to get Hermitian matrices. We have found these matrices numerically with MATLAB using the SeDuMi package \cite{sedumi} as (multiples of approximations to) the optimal variables in the SDP (\ref{dualoftracepreservingsdp1})--(\ref{dualoftracepreservingsdp4}), whose dual finds approximations to the given mapping $\rho_i\mapsto\rho'_i$ with a quantum channel. Note that, because in the proof we used the transposition map to define $\rho'_i:=\rho_i^T$, it was necessary to choose some non-real entries in these matrices.

Numerical experiments show that it is very easy to arrive at counterexamples as presented in Proposition \ref{counterexampleprop}: For dimensions $d=3,4$ we have sampled density matrices $\rho_i$ randomly by $\rho_i:=A_iA_i^\dagger/{\rm tr}[A_iA_i^\dagger]$, choosing the entries of each $A_i\in\cM_d$ independently and uniformly from the unit square in the complex plane, then defined $\rho'_i:=\rho_i^T$ and computed the optimal value $\Gamma=-\Delta$ of the SDP (\ref{dualoftracepreservingsdp1})--(\ref{dualoftracepreservingsdp4}). In each of our $\sim\!\!100$ trials we have found $\Delta>0$ (well within numerical precision), thereby yielding counterexamples which can be certified by a suitable assignment of the dual variables $H_i$ as in the proof of Proposition \ref{counterexampleprop}.

\bibliographystyle{alpha}

\end{document}